\documentclass[11pt]{article}

\usepackage[margin=1in]{geometry}
\usepackage{setspace}
\usepackage{multirow}
\usepackage{makecell}
\usepackage{booktabs}
\usepackage{afterpage}
\usepackage[bottom]{footmisc}

\usepackage{epigraph}
\usepackage{graphicx}
\usepackage{tikz}
\usetikzlibrary{arrows,automata,patterns,decorations.pathreplacing}
\usepackage{enumitem}
\usepackage{float}
\usepackage{hyperref}
\usepackage{xurl}
\usepackage{lipsum}

\usepackage[linesnumbered,ruled,vlined,boxed,algo2e]{algorithm2e}
\usepackage{nicefrac}
\usepackage{amsmath}
\usepackage{mathtools}
\usepackage{amsthm}
\usepackage{fullpage}
\usepackage{amsfonts}
\usepackage{clipboard}
\usepackage{thm-restate}
\usepackage{refcount}
\usepackage[nameinlink]{cleveref}
\newtheorem{theorem}{Theorem}

\newtheorem{conjecture}{Conjecture}

\newtheorem{lemma}[theorem]{Lemma}

\newtheorem{claim}{Claim}[theorem]

\newtheorem{question}{Question}

\newcommand{\claimsof}[1]{%
  \setcounter{claim}{0}%
  \renewcommand{\theclaim}{\getrefnumber{#1}.\arabic{claim}}%
  \def\theHclaim{#1.\arabic{claim}}%
}

\newtheorem{problem}{Problem}

\newtheorem{definition}{Definition}

\newtheorem{remark}{Remark}

\crefname{claim}{claim}{claims}
\Crefname{claim}{Claim}{Claims}
\crefname{remark}{Remark}{Remarks}
\Crefname{remark}{Remark}{Remarks}
\crefname{problem}{Problem}{Problems}
\Crefname{problem}{Problem}{Problems}
\crefname{definition}{Definition}{Definitions}
\Crefname{definition}{Definition}{Definitions}

\Crefname{enumi}{(item)}{(items)}

\usepackage{xcolor}
\usepackage{colortbl}
\definecolor{DarkGreen}{RGB}{1,50,32}
\usepackage{pgfplots}
\pgfplotsset{compat=1.18}

\usepackage{silence}
\usepackage{chngcntr}
\counterwithin{equation}{section}

\begin{document}
\ActivateWarningFilters[pdftoc]
\newcommand{\defeq}{:=}
\newcommand{\eps}{\varepsilon}

\newcommand{\liam}[1]{{\color{red} \textbf{Liam}: #1}} 
\newcommand{\avi}[1]{{\color{purple} \textbf{Avi}: #1}} 
\newcommand{\chatgpt}[1]{{\color{blue!65!black} \textbf{ChatGPT}: #1}}
\newcommand{\claude}[1]{{\color{orange!80!black}\fontencoding{T1}\selectfont \textbf{Claude}: #1}}
\newcommand{\new}[1]{{\color{blue} #1}}

\newcommand{\blue}[1]{{\color{blue}#1}}

\newcommand{\ReturnCode}{\textbf{return}}
\newcommand{\codestyle}[1]{\texttt{#1}\xspace}
\newcommand{\Initialize}{\mbox{\codestyle{Initialize}}}
\newcommand{\Cycle}{\mbox{\codestyle{Cycle}}}
\newcommand{\ADO}{\mbox{\codestyle{ADO}}}
\newcommand{\hADO}{\mbox{\codestyle{hADO}}}
\newcommand{\Query}{\mbox{\codestyle{.Query}}}
\newcommand{\ADOQuery}{\mbox{\codestyle{ADO.Query}}}
\newcommand{\ConstructADO}{\codestyle{ConstructADO}}
\newcommand{\FastAPSP}{\codestyle{FastAPSP}}
\newcommand{\Construct}{\mbox{\codestyle{Construct}}}
\newcommand{\Spanner}{\codestyle{Spanner}}
\newcommand{\RT}{\mbox{\codestyle{RT}}}
\renewcommand{\L}{\mbox{\codestyle{L}}}
\newcommand{\CycleOdd}{\codestyle{CycleOdd}}
\newcommand{\BallOrCycle}{\codestyle{BallOrCycle}}
\newcommand{\ClusterOrCycleBounded}{\codestyle{ClusterOrCycleBounded}}
\newcommand{\ClusterOrCycle}{\codestyle{ClusterOrCycle}}
\newcommand{\SimpleCycle}{\codestyle{SimpleCycle}}
\newcommand{\Next}{\codestyle{Next}}
\newcommand{\Sample}{\codestyle{Sample}}
\newcommand{\Dijkstra}{\codestyle{Dijkstra}}
\newcommand{\BundleDijkstra}{\codestyle{BundleDijkstra}}
\newcommand{\ActiveBellmanFord}{\codestyle{ActiveBellmanFord}}
\newcommand{\Preprocess}{\codestyle{Preprocess}}
\newcommand{\HashTable}{\codestyle{HashTable}}
\newcommand{\Heap}{\codestyle{Heap}}
\newcommand{\RelaxNext}{\codestyle{RelaxNext}}

\newcommand{\PreprocessGraph}{\codestyle{Initialize}}
\newcommand{\Route}{\codestyle{Route}}
\newcommand{\TreeRoute}{\codestyle{TreeRoute}}
\newcommand{\N}{\mathbb{N}}
\newcommand{\MinCycle}{\codestyle{MinCycle}}
\newcommand{\Ball}{\codestyle{Ball}}
\newcommand{\Bundle}{\codestyle{Bundle}}
\newcommand{\DistanceOracle}{\codestyle{TZ-DistanceOracle}}
\newcommand{\SparseOrCycle}{\codestyle{SparseOrCycle}}
\newcommand{\Intersection}{\codestyle{Intersection}}
\newcommand{\CycleAdditive}{\codestyle{CycleAdditive}}
\newcommand{\GenerateSi}{\codestyle{ComputeS}}

\newcommand{\codeNull}{\codestyle{null}}
\newcommand{\codeYes}{\codestyle{Yes}}
\newcommand{\codeNo}{\codestyle{No}}
\newcommand{\codeAnd}{~ \mathrm{and} ~}
\newcommand{\codeOr}{~ \mathrm{or} ~}
\newcommand{\wt}{\ell}
\newcommand{\Cl}{CL}
\newcommand{\CL}{CL}
\newcommand{\cl}{c\ell}

\newcommand{\EQ}{\;=\;}
\newcommand{\GE}{\;\ge\;}
\newcommand{\Ot}{\tilde{O}}
\newcommand{\nPSP}{$n$-PSP\xspace}
\newcommand{\stactri}{\stackrel\triangle}

\newcommand{\EE}{\mathbb{E}}
\newcommand{\RR}{\mathbb{R}}

\DeclarePairedDelimiter{\ceil}{\lceil}{\rceil}
\DeclarePairedDelimiter{\floor}{\lfloor}{\rfloor}
\DeclarePairedDelimiter{\pair}{\langle}{\rangle}

\newcommand{\dG}{\delta_G}

\title{A Tight Space Lower Bound for Pivot-Based Distance Oracles}
\date{}
\hypersetup{pdftitle={A Tight Space Lower Bound for Pivot-Based Distance Oracles}, pdfauthor={Avi Kadria, Liam Roditty}}
\author{Avi Kadria\thanks{Department of Computer Science, Bar-Ilan University, Ramat Gan 5290002, Israel.\\ E-mail: \texttt{avi.kadria3@gmail.com}.}
\and Liam Roditty\thanks{Department of Computer Science, Bar-Ilan University, Ramat Gan 5290002, Israel.\\ E-mail: \texttt{liam.roditty@biu.ac.il}.}}
\pagenumbering{roman}
\maketitle

\thispagestyle{empty}

\begin{abstract}
An approximate distance oracle (ADO) is a compact data structure that answers distance queries within a multiplicative \emph{stretch} $k$. The classical oracle of Thorup and Zwick~[STOC'01, JACM'05] has stretch $k=2\ell-1$ and uses $O(\ell n^{1+1/\ell})$ space, for every integer $\ell\ge1$. This trade-off is optimal for dense graphs under the Erd\H{o}s girth conjecture. However, for sparse graphs ($m=O(n)$), their lower bound yields nothing beyond $\Omega(m)$ space.

The main result of this paper is a nearly tight lower bound for pivot-based $1$-hop ADOs in sparse graphs.  
A pivot-based $1$-hop ADO gives every vertex $\tilde{O}(1)$ pivots and returns as an estimate the length of a walk of the form $u\to p_u \to p_v \to v$, where $p_u$ is a pivot of $u$, $p_v$ is a pivot of $v$, and $p_u\to p_v$ is a stored distance.
We observe that all the state-of-the-art ADOs with $\tilde{O}(1)$ query time are in fact pivot-based $1$-hop ADOs. 

We prove that every \emph{pivot-based} $1$-hop ADO with \emph{real} stretch $k$ requires $\widetilde\Omega(m^{1+2/(k+1)})$ space, under the Erd\H{o}s girth conjecture for graphs with constant degree. This matches the best known ADOs with $\tilde{O}(1)$ query time, at every known stretch, up to polylogarithmic factors.


We also extend our lower bound to pivot-based ADOs with any constant number $\beta$ of hops, showing that they require $\widetilde\Omega(m^{1+2/(\beta(k+1))})$ space. In addition, we obtain a tight lower bound for two-hop hopsets and improved lower bounds for general $\beta$-hop hopsets.

\end{abstract}

\clearpage
\pagenumbering{arabic}

\section{Introduction}

Let $G=(V,E)$ be an undirected graph with $n$ vertices and $m$ edges, and let $\dG(u,v)$ denote the distance between vertices $u,v\in V$. For a real parameter $k\ge 1$, a \emph{$k$-stretch approximate distance oracle} (ADO) is a data structure that, given two query vertices $u$ and $v$, returns an estimate $\widehat d(u,v)$ satisfying
$\dG(u,v)\le \widehat d(u,v)\le k\cdot \dG(u,v)$.

The central goal in the study of distance oracles is to understand the trade-off between stretch, space, and query time~\cite{Sommer14}. 
Simply storing the graph takes $O(m)$ space, but answering a query by a shortest-path search may require $O(m)$ time in the worst case. The research on ADOs focuses on answering queries much faster and ideally in $\Ot(1)$ time. Therefore, we focus on ADOs with $\Ot(1)$ query time for sparse graphs ($m=O(n)$),\footnote{The notation $\Ot(\cdot)$ suppresses factors polylogarithmic in $n$.} and study the trade-off between their stretch and space.

The seminal oracle of Thorup and Zwick~\cite{DBLP:journals/jacm/ThorupZ05} achieves stretch $2\ell-1$, space $O(\ell n^{1+1/\ell})$, and query time $O(\ell)$, for every integer $\ell\ge1$. 
They also prove a matching lower bound for dense graphs, assuming the Erd\H{o}s girth conjecture.\footnote{The Erd\H{o}s girth conjecture states that for every integer $\ell\ge1$ there are graphs with girth at least $2\ell+2$ and $\Omega(n^{1+1/\ell})$ edges.}
However, in terms of $m$, this lower bound yields only $\Omega(m)$, leaving open whether sparse graphs admit substantially smaller oracles with $\Ot(1)$ query time.

For sparse graphs, the best known ADOs with $\Ot(1)$ query time attain the following stretch--space trade-off for various stretches:
\[
  \text{stretch }k
  \qquad\longleftrightarrow\qquad
  \text{space }\Ot\!\left(m^{1+\frac{2}{k+1}}\right).
\]
These ADOs include the Thorup--Zwick oracle~\cite{DBLP:journals/jacm/ThorupZ05} for odd integer stretches, the oracle of P\v{a}tra\c{s}cu and Roditty~\cite{PatrascuR14} for stretch $2$, and the oracles of P\v{a}tra\c{s}cu, Roditty, and Thorup~\cite{PatrascuRT12} for every even integer stretch (see also~\cite{DBLP:conf/wdag/AbrahamG11}) and infinitely many fractional stretches. We refer to this trade-off as the \emph{stretch--space curve}. Whether this curve is optimal for ADOs with $\Ot(1)$ query time on sparse graphs is a central open question. To approach it, we look at how the oracles on the curve work.

These oracles, and in fact all the best known ADOs with $\Ot(1)$ query time,\footnote{This observation also applies to ADOs with efficient construction times~\cite{DBLP:conf/icalp/RodittyTZ05,DBLP:journals/talg/BaswanaS06,DBLP:journals/siamcomp/BaswanaK10,DBLP:conf/soda/Wulff-Nilsen12,KadriaR25faster} and to those with faster query times~\cite{DBLP:conf/soda/Wulff-Nilsen13,DBLP:conf/stoc/Chechik14,DBLP:conf/stoc/Chechik15}.} share a common structure: each vertex has $\Ot(1)$ fixed \emph{pivots}, and upon a query, the oracle returns the length of a walk $u\to p_u \to p_v \to v$, where $p_u$ is a pivot of $u$ and $p_v$ is a pivot of $v$. We call $u\to p_u$ and $p_v \to v$ \emph{pivot jumps}, and $p_u \to p_v$ a \emph{hop}.

We call such oracles \emph{pivot-based} $1$-hop ADOs. (See \Cref{fig:pivot-model}\textbf{(a)} for an illustration, \Cref{def:piv-ado} for the formal definition, and 
\Cref{sec:models} and \Cref{tab:conversions} for an explanation of why these oracles fit this model.)\footnote{Notice that every pivot-based $1$-hop ADO trivially yields an ADO with $\Ot(1)$ query time: a query simply looks up each of the $\Ot(1)$ pairs $(p_u,p_v)$ in a hash table of the stored hops and returns the minimum total walk weight.}
The shared structure of these constructions leads to a natural question:
\begin{question}
  Is the stretch--space curve optimal for pivot-based $1$-hop ADOs? That is, does every pivot-based $1$-hop ADO with stretch $k$ require $\widetilde{\Omega}(m^{1+\frac{2}{k+1}})$ space?
\end{question}

Assuming the Erd\H{o}s girth conjecture for constant-degree graphs (\Cref{conj:moore}), we answer this question in the affirmative: every pivot-based $1$-hop ADO with \emph{real} stretch $k\ge1$ requires $\widetilde\Omega(m^{1+\frac{2}{k+1}})$ space, matching the stretch--space curve. Formally, we prove the following theorem, where $r$ is the maximum number of pivots per vertex.\footnote{We measure space in words, assuming that vertex identifiers and stored distances fit in a word. In the pivot model, we count stored weighted edges, each requiring $O(1)$ words.}

\begin{restatable}{theorem}{thmlbpivot}\label{thm:lb-pivot}
Fix a real $k\ge1$. Under \Cref{conj:moore}, for arbitrarily large $n$ there are constant-degree $n$-vertex graphs $G$ with $m=\Theta(n)$ edges such that, for every integer $r\ge1$, every $r$-pivot $1$-hop $k$-stretch ADO for $G$ requires $\Omega\!\left(\frac{1}{r^2\log n}\cdot m^{1+\frac{2}{k+1}}\right)$ space.
\end{restatable}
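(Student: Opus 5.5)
The plan is a counting argument on a tree-like hard instance. I will build $G$ from an extremal graph of \Cref{conj:moore}, fix a large family $\mathcal Q$ of query pairs at a common distance, show that each stored hop can be charged by only few pairs of $\mathcal Q$, and divide.

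\textbf{The instance.} Take a constant-degree graph $G$ with degree $\Delta\ge3$ and girth larger than $(k+1)d+2$, where $d=\Theta(\log_\Delta n)$ will be chosen at the end. Within radius $(k+1)d/2$ the graph then looks like a $\Delta$-regular tree. If \Cref{conj:moore} is stated for denser graphs, I would first make it constant degree by replacing high-degree vertices with balanced trees or subdividing edges. I would check that this costs only constant factors in $m$ and in the distance scale.

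\textbf{The query pairs.} Let $\mathcal Q$ be the set of pairs $(u,v)$ with $\dG(u,v)=d$. Then $|\mathcal Q|=\Theta(n\Delta^{d})$. Fix the oracle and a pair $(u,v)\in\mathcal Q$, and suppose it is answered by the walk $u\to p\to q\to v$. Write $a=\dG(u,p)$, $b=\dG(v,q)$ and $c$ for the stored length of the hop $(p,q)$, so that $a+b+c\le kd$. The whole walk has length at most $(k+1)d$, which is below the girth. Hence it lives in a tree, and its projection must cover the unique $u$--$v$ path.

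\textbf{The charging.} The key structural claim is about the branching points. Let $p'$ be the point where the path from $u$ to $p$ leaves the $u$--$v$ path, and define $q'$ similarly for $v$. Then $p$ hangs off the $u$--$v$ path at distance $s$ from $u$, and $q$ hangs off it at distance $t$ from $v$. The detour lengths are wasted, which gives roughly $2\,\mathrm{off}(p)+2\,\mathrm{off}(q)\le (k-1)d$. Now charge the pair $(u,v)$ to the hop $(p,q)$, together with the two jump lengths, each rounded to a power of two. This rounding is the source of the $\log n$ loss.

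\textbf{Bounding pairs per hop.} Given the hop $(p,q)$ and the rounded data, how many pairs $(u,v)$ can be charged to it? The vertex $u$ must be a vertex having $p$ among its $r$ pivots, and $v$ must be reachable from $u$ at distance exactly $d$ along a path through the projections. I would double count pivot incidences: there are only $rn$ of them. In a tree, the number of pairs $(u,v)$ at distance $d$ whose path passes within distance $a$ of $p$ and within distance $b$ of $q$ is about $\Delta^{a+b-\text{overlap}}$. The constraint $a+b+c\le kd$ together with $c\ge d-a-b$ then bounds the useful pairs per hop by about $\Delta^{(k-1)d/2}$, after optimizing. The $r^2$ factor comes from each pair having $r^2$ candidate pivot pairs. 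Summing gives
\[
\#\text{hops}\cdot r^2\log n\cdot\Delta^{(k-1)d/2}\;\gtrsim\; n\Delta^{d}.
\]
Choosing $d$ so that $\Delta^{d}\approx n^{4/(k+1)}$, while the girth requirement still holds (so the conjecture must supply girth about $\frac{4}{k+1}\log_\Delta n$), gives $\#\text{hops}\gtrsim m^{1+2/(k+1)}/(r^2\log n)$.

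\textbf{Main obstacle.} The hard step is the per-hop charging bound. A single pivot $p$ may be a pivot of very many vertices, so naive counting fails. The fix I would try first is to charge each pair by its ``closer'' pivot: assume w.l.o.g.\ $a\le b$. The $rn$ bound on pivot incidences then limits how many $u$ can use $p$ at jump distance about $a$. Tree geometry then determines $v$ up to $\Delta^{d-\cdot}$ choices consistent with the hop length $c$. Balancing the cases $a$ small and $a$ large is exactly where the exponent $2/(k+1)$, valid for real $k$, should emerge. If this balance does not close, I would instead restrict $\mathcal Q$ to pairs whose middle vertex carries no pivot near it, and use averaging over $\mathcal Q$.
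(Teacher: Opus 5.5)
Your setup matches the paper: girth above $(k+1)d$, so the walk together with $P(u,v)$ forms a tree and the walk covers $P(u,v)$. The gap is the per-hop charging bound, which is the whole content of the theorem, and as stated it is false. The detour inequality $2\,\mathrm{off}(p)+2\,\mathrm{off}(q)\le(k-1)d$ controls how far the pivots sit \emph{off} the $u$--$v$ path. The load of a hop, however, is governed by how much of $P(u,v)$ the pivot jumps cover \emph{along} it, and nothing in your argument bounds that. Concretely, take $k=1$ and $r=2$. Fix an edge $(p,v)$, and give $p$ as a pivot to each of the $(\Delta-1)^{d-1}$ vertices $u$ at distance $d-1$ from $p$ on the side away from $v$. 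All these pairs are answered exactly by $u\to p\to v$, all have the same rounded jump data, and all charge the single hop $(p,v)$, whereas your bound predicts $\Delta^{0}$. The global count of $rn$ pivot incidences cannot exclude this per hop, since $(\Delta-1)^{d-1}\ll rn$; it can only show that such pairs are rare \emph{overall}. Your fallback, restricting to pairs whose middle vertex has no pivot nearby, tests the wrong condition, since the offending pivot here sits next to $v$.

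The missing idea is exactly such a global discard, based on alignment. Choose $t$ with $(\Delta-1)^{t-1}\ge4r$, so that $(\Delta-1)^{t}=\Theta(r)$. Call a pivot $p$ of $u$ aligned with $v$ if $P(u,p)$ and $P(u,v)$ share their first $t$ edges. By the local tree structure, a fixed pivot of $u$ is aligned only with the at most $(\Delta-1)^{d-t}$ vertices at distance $d$ below the depth-$t$ vertex of $P(u,p)$, which is less than a $1/(4r)$ fraction. Summing over the at most $r$ pivots of both endpoints therefore discards at most half of the pairs. For a surviving pair, both jumps leave $P(u,v)$ within $t-1$ edges of their endpoint. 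The tree argument then forces the single hop path $P(p,q)$ to contain the whole segment of $P(u,v)$ between the vertices at distance $t$ from $u$ and from $v$. A hop fixes that segment up to its offset and direction on $P(p,q)$, which is $O(kd)=O(\log n)$ choices, and $u,v$ range over depth-$t$ subtrees. So each hop is charged $O\bigl((\Delta-1)^{2t}\log n\bigr)=O(r^2\log n)$ times, with no factor $\Delta^{(k-1)d/2}$. This, and not rounding jump lengths or counting $r^2$ candidate pivot pairs, is where the $r^2\log n$ comes from.

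Finally, the distance must be $d=\lfloor g/(k+1)\rfloor-1$, giving $\Delta^{d}\approx n^{2/(k+1)}$. Your choice $\Delta^{d}\approx n^{4/(k+1)}$ would require girth above $4\log_\Delta n$, which exceeds the Moore bound $2\log_{\Delta-1}n+O(1)$. Plugging it into your displayed inequality also gives exponent $1+\frac{6-2k}{k+1}$, which equals $1+\frac{2}{k+1}$ only at $k=2$.
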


Using the graphs of~\cite{DoubleAI26}, we obtain a weaker unconditional version of the same bound, namely $\widetilde\Omega(m^{1+\frac{8}{5(k+1)}})$ space. All the graphs in our lower bounds are unweighted, so the bounds hold even for unweighted graphs.

\colorlet{chop}{blue!65!black}       
\colorlet{cpiv}{green!50!black}      
\colorlet{cpath}{gray!55}            

\begin{figure}[t]
\centering
\begin{tikzpicture}[
    vert/.style   = {circle, draw=black, fill=white, minimum size=5pt, inner sep=0pt, line width=0.7pt},
    bvert/.style  = {circle, draw=black, fill=black!12, minimum size=10pt, inner sep=0pt, line width=1pt, font=\scriptsize},
    pvert/.style  = {circle, draw=cpiv, fill=cpiv!15, minimum size=13pt, inner sep=0pt, line width=0.9pt, font=\scriptsize},
    hvert/.style  = {circle, draw=chop, fill=chop!12, minimum size=10pt, inner sep=0pt, line width=0.9pt, font=\scriptsize},
    gpath/.style  = {cpath, line width=2.2pt, line cap=round},
    hop/.style    = {chop, dashed, line width=1.3pt},
    piv/.style    = {cpiv, dashed, line width=1.2pt},
]

\begin{scope}
  \draw[gpath] (0,0) -- (4.4,0);
  \draw[gpath] (0,0) -- (1.0,1.5) -- (3.4,1.5) -- (4.4,0);

  \node[bvert] (uA) at (0,0)   {$u$};
  \node[bvert] (vA) at (4.4,0) {$v$};
  \node[pvert] (aA) at (1.0,1.5) {$p_u$};
  \node[pvert] (bA) at (3.4,1.5) {$p_v$};

  \draw[piv] (uA) to[bend left=35] (aA);
  \draw[piv] (bA) to[bend left=35] (vA);
  \draw[hop] (aA) to[bend left=32] (bA);

  \node[chop, font=\scriptsize] at (2.2,2.08) {hop};
  \node[cpiv, font=\scriptsize, align=right, anchor=east] at (0.12,1.25) {pivot\\jump};
  \node[cpiv, font=\scriptsize, align=left, anchor=west] at (4.28,1.25) {pivot\\jump};
  \node[font=\scriptsize] at (2.2,-0.28) {$P(u,v)$};
  \node[font=\small] at (2.2,-1.3) {\textbf{(a)} $\beta=1$: $u\to p_u\to p_v\to v$};
\end{scope}

\begin{scope}[xshift=6.4cm]
  \draw[gpath] (0,0) -- (7.0,0);
  \draw[gpath] (0,0) -- (0.8,1.45) -- (2.45,2.05) -- (3.5,0.95) -- (4.55,2.05) -- (6.2,1.45) -- (7.0,0);

  \node[bvert] (uB)  at (0,0)       {$u$};
  \node[bvert] (vB)  at (7.0,0)     {$v$};
  \node[pvert] (puB) at (0.8,1.45)  {$p_u$};
  \node[pvert, inner sep=0.5pt] (p1B) at (2.45,2.05) {$p_x$};
  \node[bvert] (xB)  at (3.5,0.95)  {$x$};
  \node[pvert, inner sep=0.5pt] (p2B) at (4.55,2.05) {$p'_x$};
  \node[pvert] (pvB) at (6.2,1.45)  {$p_v$};

  \draw[piv] (uB)  to[bend left=35] (puB);
  \draw[hop] (puB) to[bend left=35] (p1B);
  \draw[piv] (p1B) to[bend left=25] (xB);
  \draw[piv] (xB) to[bend left=25] (p2B);
  \draw[hop] (p2B) to[bend left=35] (pvB);
  \draw[piv] (pvB) to[bend left=35] (vB);

  \node[cpiv, font=\scriptsize, align=right, anchor=east] at (0.12,1.25) {pivot\\jump};
  \node[chop, font=\scriptsize] at (1.42,2.3) {hop};
  \node[cpiv, font=\scriptsize, align=center] at (3.5,3.05) {pivot jumps\\through $x$};
  \node[chop, font=\scriptsize] at (5.58,2.3) {hop};
  \node[cpiv, font=\scriptsize, align=left, anchor=west] at (6.88,1.25) {pivot\\jump};
  \node[font=\scriptsize] at (3.5,-0.28) {$P(u,v)$};
  \node[font=\small] at (3.5,-1.3) {\textbf{(b)} $\beta=2$: $u\to p_u\to p_x\to x\to p'_x\to p_v\to v$};
\end{scope}

\end{tikzpicture}
\caption{%
  The pivot model. \textbf{(a)} A walk of an $r$-pivot $1$-hop ADO (\Cref{def:piv-ado}). \textbf{(b)} A walk with two hops of an $r$-pivot $\beta$-hop ADO (\Cref{def:piv-ado-beta}).
}
\label{fig:pivot-model}
\end{figure}

\subsection{Beyond Polylogarithmic Query Time: More Hops}
While ADOs with $\Ot(1)$ query time have been the main focus, ADOs with larger query time, such as $\Ot(n^{1/\ell})$, have also been studied extensively~\cite{DBLP:conf/infocom/AgarwalGH11,DBLP:journals/algorithmica/PoratR13,DBLP:conf/soda/AgarwalG13,DBLP:conf/esa/Agarwal14,DBLP:journals/corr/abs-2307-11677,KadriaR25new}.
On sparse graphs, the larger query time makes it possible to go below the stretch--space curve. For example, for every integer $\ell\ge3$, the oracle of Kadria and Roditty~\cite{KadriaR25new} has stretch $2\ell-3$ and uses $\Ot(m^{1+1/\ell})$ space, which is below the curve.

We observe that several of these ADOs~\cite{DBLP:conf/infocom/AgarwalGH11,DBLP:conf/esa/Agarwal14,KadriaR25new} also have a pivot-based structure, but their estimates may combine more than one hop (stored distance). For example, the stretch-$2$ ADO of Agarwal, Godfrey, and Har-Peled~\cite{DBLP:conf/infocom/AgarwalGH11} may return the estimate $\dG(u,x)+\dG(x,v)$, where both distances are stored and $x$ need not be a pivot of $u$ or $v$. Such an estimate uses two hops. Hence, we extend the model to \emph{$r$-pivot $\beta$-hop ADOs}, whose estimates use at most $\beta$ hops and may take pivot jumps between consecutive hops. We call $\beta$ the \emph{hopbound}.
For example, for $\beta=2$, an estimate can be the length of the walk
\[
  u\to p_u\to p_x\to x\to p'_x\to p_v\to v,
\]
where $(p_u,p_x)$ and $(p'_x,p_v)$ are the two hops, and $p_u,p_x,p'_x,p_v$ are pivots of $u,x,x,v$, respectively. The remaining steps are pivot jumps. See \Cref{def:piv-ado-beta} for a formal definition and \Cref{fig:pivot-model}\textbf{(b)} for an illustration.



The ADOs of~\cite{DBLP:conf/infocom/AgarwalGH11,KadriaR25new} with $\Ot(n^{1+1/\ell})$ space and $\Ot(n^{1/\ell})$ query time use at most two hops in the pivot-based model (\Cref{sec:models}). Moreover, the $n$-pairs shortest paths (\nPSP)\footnote{Given an $n$-vertex graph $G$ and $n$ vertex pairs $(s_i,t_i)$, $1\le i\le n$, the $n$-pairs shortest paths problem asks for the distances $\dG(s_i,t_i)$ for all $i$.} algorithms of~\cite{DBLP:conf/focs/DalirrooyfardJW22,DBLP:conf/soda/ChechikHL25,KadriaRW26} compute their main estimates using only two hops. With larger query time, an oracle may use more hops. For example, for every integer $t\ge1$, the stretch-$(1+1/t)$ oracle of Agarwal~\cite{DBLP:conf/esa/Agarwal14} uses at most $2t$ hops.


\begin{restatable}{theorem}{thmlbpivotbeta}\label{thm:lb-pivot-beta}
Fix a real $k\ge1$ and an integer $\beta\ge1$. Under \Cref{conj:moore}, for arbitrarily large $n$ there are constant-degree $n$-vertex graphs $G$ with $m=\Theta(n)$ edges such that, for every integer $r\ge1$, every $r$-pivot $\beta$-hop $k$-stretch ADO for $G$ requires $\Omega\!\left(\frac{1}{r^2\log^{6-4/\beta}n}\cdot m^{1+\frac{2}{\beta(k+1)}}\right)$ space. For $\beta=2$, the lower bound improves to $\Omega\!\left(\frac{1}{r^2\log^3n\,\log(r+1)}\cdot m^{1+\frac{1}{k+1}}\right)$.
\end{restatable}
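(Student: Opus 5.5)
We follow the strategy behind \Cref{thm:lb-pivot} and use a single counting argument over many far pairs. Take a constant-degree graph $H$ from \Cref{conj:moore} with girth $g$ and $\Theta(N)$ edges. Subdivide each edge into a path of length $L$, where $L$ is chosen so that $G$ has $n=\Theta(NL)$ vertices and $m=\Theta(n)$. The original vertices of $H$ (\emph{branch vertices}) are then spread out. Any two of them at $H$-distance $d<g/2$ are joined by a unique shortest path of length $dL$. Every other walk between them has length at least $(g-d)L$.

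Fix a query pair $(u,v)$ of branch vertices at $H$-distance $d$, with $d$ tuned to $k$ and $\beta$. Take any walk of length at most $k\,dL$ that uses at most $\beta$ hops. When $k d<g-d$, the girth forces the walk to stay inside the ball of radius $\approx kdL/2$ around the unique shortest path. Since $H$ is a tree locally, this ball is a subdivided tree. A stored hop $(p,q)$ with $p,q$ in this region therefore covers a well-defined subpath of the $u$--$v$ path of $H$. Let $r_{\max}$ be the largest admissible pivot-jump length. A pivot jump from $x$ to $p_x$ costs $\delta_G(x,p_x)$, and the total jump budget is at most $(k-1)dL$ beyond the true distance. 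In total there are at most $2\beta$ jumps.

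The core step is a charging argument. Each vertex has at most $r$ pivots, so there are at most $r^2$ candidate pivot pairs at each junction. We assign every query pair $(u,v)$ to some stored hop $(p,q)$. For $\beta$ hops and total length $dL$, some hop must cover an $H$-subpath of length at least $d/\beta$. Call this hop the \emph{long hop}. It is charged by pairs whose routes pass near $p$ and near $q$.

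The number of branch-vertex pairs at distance $d$ is $\Theta(N\Delta^{d})$. We want each stored hop to be charged by at most $r^2\Delta^{(\text{slack})}$ pairs, where the slack comes from the detour budget $(k-1)dL$ being distributed among the jumps. The optimization sets $d$ such that a long hop of length $d/\beta$ plus the detour budget yields the exponent $2/(\beta(k+1))$. Concretely, a hop charged by a pair constrains both endpoints up to a ball of radius $\approx (k-1)d/2$ in $H$-units around the pair. Counting ordered (pair, hop) incidences then gives space $\gtrsim N\Delta^{d}/(r^2\Delta^{(k-1)d/2}\cdot\text{poly}\log)$. We set $\Delta^{d}\approx N^{2/(\beta(k+1))}$ within the girth limit $g\approx \log N$. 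This is consistent because $m^{2/(\beta(k+1))}$ corresponds to $d\approx 2\log_\Delta N/(\beta(k+1))$, which is below $g/(k+1)$.

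The polylog losses come from a dyadic decomposition, bucketing hops and jumps by length scale. Each of the $\beta$ hops introduces a $\log$ choice for its scale, which accounts for the $\log^{6-4/\beta}$ factor. For $\beta=2$, a direct argument avoids bucketing the middle jump at $x$ beyond a $\log(r+1)$ choice. This works by arguing that the two hops meet at a common vertex $x$ whose pivots are among $r$ options.

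The main obstacle is proving that, for a fixed stored hop $(p,q)$, few query pairs can legitimately use it. A pair might route through $(p,q)$ with short hops elsewhere compensating. We would handle this by choosing, for each pair, the hop whose covered subpath is longest, normalized by its scale bucket. We then show that both endpoints of the pair lie within bounded $H$-distance of the extension of the $p$--$q$ path. This uses the tree-likeness inside radius $g/2$ together with the detour budget.

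For $\beta>1$, the detour budget is shared among $2\beta$ jumps. The counting must therefore lose the exponent factor $1/\beta$ rather than something worse. The argument has to show that the long hop's two ends absorb at most a $(k-1)/(k+1)$ fraction of the exponent, and this is where the careful dyadic accounting is needed.
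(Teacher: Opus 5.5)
There is a genuine gap at the core of your argument. You claim that, because the walk uses at most $\beta$ hops, some hop covers at least a $1/\beta$ fraction of the $u$--$v$ path, and you try to control pivot jumps through the detour budget $(k-1)\dG(u,v)$. Your tracing observation (the walk together with the shortest path forms a tree when the total length is below the girth) is correct. The problem is that the walk also makes up to $2\beta$ pivot jumps, and these can cover most or all of $P(u,v)$. A jump $x\to p_x$ whose shortest path runs along $P(u,v)$ costs exactly the length it covers, so it uses none of the detour budget, and no budget argument can prevent it.

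What limits pivot jumps is that there are few of them, not that they are expensive. There are at most $n(r-1)$ jumps. Write $d$ for the degree and $c$ for the query distance. A fixed jump of length at most $kc$ has at most $2(kc+1)$ directed windows of $\rho$ edges, and each window lies on $P(u,v)$ for at most $(c+1)(d-1)^{c-\rho}$ pairs at distance $c$. The two endpoint jumps need a sharper notion: a pivot of $u$ is aligned with $v$ if $P(u,p_u)$ follows the first $t$ edges of $P(u,v)$. With $(d-1)^{\rho}=\Theta(r\log^2 n)$ and $(d-1)^t=\Theta(r)$, discarding the pairs hit by a long jump or an aligned pivot removes fewer than half of all pairs. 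For the remaining pairs the jumps cover at most $2(t-1)+2(\beta-1)(\rho-1)$ edges of $P(u,v)$. Only then does some hop cover a window of $\sigma\ge\bigl(c-2(t-1)-2(\beta-1)(\rho-1)\bigr)/\beta$ edges. Your proposal has no counterpart of this discarding step.

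The quantitative part does not hold together either. Your choice $\Delta^{d}\approx N^{2/(\beta(k+1))}$ with a per-hop charge of $r^2\Delta^{(k-1)d/2}$ gives $N\Delta^{d(3-k)/2}$. This misses the claimed exponent for every $k>1$ and is not even superlinear for $k\ge3$. The per-hop charge is also not what the tree structure gives. Instead, take the query distance as large as the girth allows, $(k+1)c<g$, so that $(d-1)^c\approx n^{2/(k+1)}$. A hop with a fixed directed window of $\sigma$ edges is then charged by at most $(c+1)(d-1)^{c-\sigma}$ pairs, since the two endpoints lie in subtrees hanging off the window ends with total depth $c-\sigma$. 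A hop has $O(\log n)$ windows, which gives $|E_H|=\Omega(n(d-1)^{\sigma}/\log^2 n)$.

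The loss $(d-1)^{(2(t-1)+2(\beta-1)(\rho-1))/\beta}=O(r^2\log^{4-4/\beta}n)$ in $(d-1)^\sigma$, times this $\log^2 n$, is exactly the factor $r^2\log^{6-4/\beta}n$. It does not come from dyadic bucketing of scales: one $\log$ per hop would give a power growing with $\beta$. Nor does it come from $r^2$ pivot pairs per junction.

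Likewise, the $\beta=2$ improvement has nothing to do with the pivots of the middle vertex $x$. It comes from the fact that every hop has $p_u$ or $p_v$ as an endpoint. The charged window therefore starts within $t-1$ edges of a query endpoint, so its offset takes only $t=O(\log(r+1))$ values instead of $c+1$. Finally, the subdivision step is unnecessary; the constant-degree graph can be used directly.
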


For $r=\Ot(1)$, the bound is $\widetilde\Omega(m^{1+2/(\beta(k+1))})$, so the required space remains polynomially superlinear for every fixed $\beta$ and $k$.

Two-hop pivot-based constructions achieve stretch $\ell$ using $\Ot(n^{1+1/\ell})$ space for $\ell=2,3$~\cite{DBLP:conf/infocom/AgarwalGH11} and $\ell=4,5$~\cite{KadriaR25new}. 
By \Cref{thm:lb-pivot-beta}, every two-hop pivot-based ADO with $r=\Ot(1)$ and $\Ot(n^{1+1/\ell})$ space has stretch at least $\ell-1$, so our bound is tight up to an additive $1$ in the stretch for these values of $\ell$.

\subsection{Hopsets and Approximate Hub Labelings}

Hopsets are a central tool for computing approximate shortest paths in parallel, distributed, and dynamic settings, and they have been studied extensively~\cite{Cohen00,MillerPVX15Hopsets,HenzingerKN18Hopsets,ElkinN19Hopsets,HuangP19,BenLevyP20,NeimanS22}. 

A \emph{$(k,\beta)$-hopset} is a set $F$ of edges weighted by their exact distances in $G$, such that every pair has a path in $G\cup F$ of at most $\beta$ edges and length at most $k$ times its distance (\Cref{def:hopset}). Its \emph{size}, which we also refer to as its space, is $|F|$.\footnote{Since the edges of $G$ are not counted in $|F|$, the space of the corresponding ADO is at most $m+|F|$. Our lower bounds are superlinear in $m$, so this additive $m$ does not affect them.}

Since every $(k,\beta)$-hopset is a $1$-pivot $\beta$-hop $k$-stretch ADO, \Cref{thm:lb-pivot-beta} already implies that $(k,\beta)$-hopsets require polynomially more space than the input graph. 
Hopsets are more restricted than pivot-based ADOs, since every edge of a path in $G\cup F$ counts as a hop and there are no pivot jumps. This allows stronger lower bounds for hopsets. We obtain the following tight lower bound for hopsets with hopbound $2$:

\begin{restatable}{theorem}{thmdensetransfer}\label{thm:dense-transfer}
Fix an integer $\ell\ge3$. Under the Erd\H{o}s girth conjecture, for arbitrarily large $n$ there are $n$-vertex graphs with $m=\Theta(n)$ edges on which every $(2\ell-3,2)$-hopset has size
\[
  \Omega\!\left(m^{1+\frac1\ell}\right).
\]
\end{restatable}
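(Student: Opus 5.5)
The plan is to take a dense graph from the Erd\H{o}s girth conjecture and make it sparse by subdividing its edges, so that the dense lower bound survives in terms of $m$. Let $H$ be an $N$-vertex graph with girth at least $2\ell+2$ and $\Theta(N^{1+1/\ell})$ edges. Replace every edge of $H$ by a path of length $L$ for a suitable integer $L$. A natural first choice is to subdivide so that the resulting graph $G$ has $n=\Theta(N^{1+1/\ell})$ vertices, i.e.\ constant length $L$, but then $m^{1+1/\ell}$ would be much larger than $|E(H)|$. So instead I would aim for a different sparsification: attach to each original vertex $x$ of $H$ a structure of size about $\deg_H(x)$ (for instance a path or a star whose leaves carry the incident edges), so that $G$ has $m=\Theta(|E(H)|)=\Theta(N^{1+1/\ell})$ edges. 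Then the target $m^{1+1/\ell}=N^{(1+1/\ell)^2}$ must come from pairs of ``port'' vertices. The demand pairs will be pairs $(a,b)$, where $a$ and $b$ are far endpoints of the gadgets attached at the two ends of an edge path of $H$. These pairs correspond to paths of length two in $H$, of which there are about $N\cdot d^2=N^{1+2/\ell}$.

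The key counting step runs as follows. Consider a pair $(u,v)$ of ports at distance $D$ in $G$ whose shortest path in $G$ is unique and passes through a specific edge or $2$-path of $H$. A $2$-hop path $u\to w\to v$ with stretch $2\ell-3$ forces $\delta_G(u,w)+\delta_G(w,v)\le(2\ell-3)D$. By the girth of $H$, any alternative route in $H$ between the relevant endpoints has length at least $2\ell+1$ minus the length of the shortest one. Choosing the lengths of the subdivided paths against the gadget lengths correctly, the midpoint $w$ must lie on the true shortest $u$--$v$ path or near it. Such a nearby $w$ can serve only $O(1)$ or few of the pairs. With $2$ hops, at least one of the two edges $(u,w)$ and $(w,v)$ is a hopset edge unless $w$ is a neighbour. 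I would then charge each pair to a hopset edge incident to $u$ or $v$ and reaching deep into a specific direction. The charging shows that each hopset edge covers $O(1)$ pairs, which gives $|F|=\Omega(\#\text{pairs})$.

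The bulk of the work, and the main obstacle, is choosing the weights (subdivision lengths) so that two things hold simultaneously. First, the number of constrained pairs must be $\Omega(m^{1+1/\ell})$. Second, the stretch bound $2\ell-3$ must rule out detours of length $2\ell+1$ in $H$ in a $2$-hop fashion; this is why the hopset is charged with stretch $2\ell-3$ rather than $2\ell-1$, since two hops lose an additive $2$ in the girth argument. I would first try the Thorup--Zwick style incompressibility count: delete each edge of $H$ or not and observe that the hopset must distinguish the cases. This gives $2^{|E(H)|}$ configurations, so the argument needs some encoding of $F$ as well. If counting pairs fails, I would fall back to this encoding argument. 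The encoding would record $F$ with $O(|F|\log n)$ bits and recover whether each edge of $H$ is present from $2$-hop distance estimates. Replicating the construction via a product with a path of length $N^{1/\ell}$ would then boost the count to $m^{1+1/\ell}$.
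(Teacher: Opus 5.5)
Your overall direction is the right one: subdivide a high-girth base graph into constant-length paths and hang degree-sized pendant structures on the base vertices. As written, though, the argument cannot reach the target.

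\textbf{The parameters are off by one.} With your base graph of girth $2\ell+2$ and $\Theta(N^{1+1/\ell})$ edges, you have $m=\Theta(N^{1+1/\ell})$. So the target is $N^{(1+1/\ell)^2}=N^{1+2/\ell+1/\ell^2}$, while your demand family has only about $Nd^2=N^{1+2/\ell}$ pairs. A charging bound never exceeds the number of demands, so even with $O(1)$ pairs per hop you would get only $\Omega(m^{1+1/(\ell+1)})$.

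Your explanation of the stretch $2\ell-3$ (``two hops lose an additive $2$'') is also not what happens. Once every base edge becomes a path of constant length $L$, the tree-tracing argument only needs $(k+1)(L+2)<\gamma L$, where $\gamma$ is the base girth, and this can be arranged for every $k<\gamma-1$. So you should invoke the conjecture with parameter $\ell-1$:
girth $\gamma=2\ell$, which covers every $k<2\ell-1$ and in particular $2\ell-3$;
and $M=\Omega(N^{1+1/(\ell-1)})$ edges, so that $N=O(M^{(\ell-1)/\ell})$ and $M^2/N=\Omega(m^{1+1/\ell})$.

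\textbf{The gadget, demands, and charging are never pinned down.} Moreover, the claim that each hop serves $O(1)$ pairs is false for the natural choice.
\begin{itemize}
\item If the star leaves ``carry the incident edges,'' the star is just part of the subdivision. Demands along $1$- or $2$-paths of $H$ are then answered exactly by the $O(m)$ hops from each vertex to the two base endpoints of its subdivided edge.
\item A path gadget makes port distances non-constant and breaks the tree argument.
\end{itemize}

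What works is the following.
\begin{itemize}
\item Attach $q=\lceil M/N\rceil$ private pendant leaves to every base vertex.
\item For each base edge $ab$, take the $q^2$ pairs $\{u,v\}$ with $u$ a leaf of $a$ and $v$ a leaf of $b$. This gives $Mq^2$ demands at distance $L+2$.
\item The right per-hop bound is $q$, not $O(1)$: a single hop $(u,b)$ answers all $q$ demands $\{u,v\}$ through the graph edge $bv$.
\item To see that $q$ suffices, note that the two hops together with $P(u,v)$ have fewer than $2\ell L$ edges. Hence they form a tree and trace $P(u,v)$.
\item The hop covering an edge of the subdivided $ab$ is incident to an endpoint, say $(u,x)$. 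Then $P(u,x)$ begins with $ua$ and the first edge of the subdivided $ab$. This fixes $a$ and $b$ and leaves $q$ choices for $v$.
\end{itemize}
This gives $Mq^2\le 2q|F|$, i.e.\ $|F|\ge Mq/2\ge M^2/(2N)$.

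\textbf{The fallback does not fill the gap.} Encoding $F$ and decoding $E(H)$ yields only $|F|\log n=\Omega(|E(H)|)$, which is a linear bound. You give no mechanism by which a product with a path would lift it to $m^{1+1/\ell}$.
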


For every integer $\ell\ge3$, the oracles of Kadria and Roditty~\cite{KadriaR25new} give $(2\ell-3,2)$-hopsets of size $O(m^{1+1/\ell})$ on sparse graphs (see \Cref{sec:models} for more details), so this bound is tight. For a general hopbound $\beta$, we obtain the following bound.

\begin{restatable}{theorem}{thmbetadensetransfer}\label{thm:beta-dense-transfer}
Fix integers $\beta\ge2$ and $\ell>\beta$. Under the Erd\H{o}s girth conjecture, for arbitrarily large $n$ there are $n$-vertex graphs with $m=\Theta(n)$ edges on which every $\left(\frac{2\ell-\beta-1}{\beta-1},\beta\right)$-hopset has size
\[
  \Omega\!\left(m^{1+\frac1\ell}\right).
\]
\end{restatable}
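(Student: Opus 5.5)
The plan is a transfer from the dense Erd\H{o}s-girth lower bound to a sparse graph. Start from a graph $H$ with $n_0$ vertices, girth at least $2\ell+2$ and $\Theta(n_0^{1+1/\ell})$ edges, as given by the Erd\H{o}s girth conjecture. Build a sparse graph $G$ from $H$ so that three things hold: every edge of $H$ becomes a \emph{demand pair} of $G$; the number of demand pairs is $\Omega(m^{1+1/\ell})$; and $G$ has $m=\Theta(n)$ edges. Then show that any $\bigl(\frac{2\ell-\beta-1}{\beta-1},\beta\bigr)$-hopset $F$ must essentially pay one hopset edge per demand pair, which gives $|F|=\Omega(m^{1+1/\ell})$.

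\textbf{Step 1: the construction.} Replace each vertex $x$ of $H$ by a gadget $T_x$, namely a bounded-degree tree with $\deg_H(x)$ leaves, and attach each edge of $H$ at one leaf on each side. Then subdivide every gadget edge and every $H$-edge into paths of a common length $L$. Since all weights become equal path lengths, the graph stays unweighted. The lengths must be tuned so that for every demand pair $(s_e,t_e)$ attached to an edge $e=(a,b)$ of $H$ we have $\dG(s_e,t_e)=D$, while any route that uses a different $H$-edge has length at least about $(2\ell+1)D$. The lower bound on such routes comes from the girth: any $a$--$b$ route in $H$ other than $e$ closes a cycle with $e$, so it uses at least $2\ell+1$ edges. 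I would choose the demand pairs among the $\Theta(m)$ gadget vertices, rather than among the $n_0$ roots, so that the demands are numerous relative to $m$. The target is $m^{1+1/\ell}$ pairs that behave like the edges of a girth-$(2\ell+2)$ graph on $\Theta(m)$ ``terminal'' vertices.

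\textbf{Step 2: the path argument.} Fix a demand pair $(s,t)$ and a path $P$ in $G\cup F$ with at most $\beta$ edges and length at most $\frac{2\ell-\beta-1}{\beta-1}\,D$. Every hopset edge has weight equal to a true distance. Project $P$ onto $H$: each edge of $P$ becomes a walk in $H$, and a hopset edge of weight $w$ becomes a walk of about $w/D$ $H$-edges. Suppose no edge of $P$ is ``responsible'' for the pair, meaning no hopset edge runs between endpoints inside the same demand's neighbourhood. Then the projection closes a cycle with $e$, so the total projected length is at least $2\ell+1$ $H$-edges. Each of the $\beta$ segments loses at most about one $H$-edge to rounding at the gadget endpoints, and the average-segment bound contributes the factor $\beta-1$. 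Together these should give total length at least $\frac{2\ell-\beta-1}{\beta-1}D$ plus a positive slack, a contradiction. So some hopset edge of $P$ has both endpoints charged to the demand $e$ in a way that determines $e$.

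\textbf{Step 3: the charging.} Each hopset edge $(p,q)$ can be charged by only $O(1)$ demand pairs. The reason is that $p$ and $q$ lie near a unique pair of $H$-vertices, or a unique $H$-edge, by girth and the bounded-degree gadgets. Summing over demands then gives $|F|=\Omega(\#\text{demands})=\Omega(m^{1+1/\ell})$.

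\textbf{Main obstacle.} I expect the hardest part to be getting the exact stretch threshold $\frac{2\ell-\beta-1}{\beta-1}$ out of Step 2. The plan is to split $P$ into its $\beta$ edges, bound the rounding loss at each internal breakpoint, and average. This requires choosing the gadget and subdivision lengths so that the ``$-\beta-1$'' and ``$/(\beta-1)$'' terms come out precisely. As a consistency check, the same choices should reproduce \Cref{thm:dense-transfer} at $\beta=2$.
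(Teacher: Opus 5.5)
\textbf{There is a genuine gap, and it is in the counting of Steps 1 and 3.} Every edge of $H$ is embedded in $G$ as a subdivided path, so $m\ge|E(H)|$. Suppose the demands are indexed by edges of $H$, and each hopset edge is charged $O(1)$ times because it ``determines $e$''. Then you only get $|F|=\Omega(|E(H)|)$, which is at most linear in $m$.

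Your fallback, choosing $m^{1+1/\ell}$ demands among the gadget vertices, is stated as a target but never constructed. It also contradicts Step 3. A hop that only pins down an $H$-edge (or a pair of $H$-vertices) can serve every demand attached to that edge, so its charge is the number of demands per edge, not $O(1)$. For a superlinear bound you need a demand family much larger than $m$, in which each hop still pins down most of every demand it serves. The picture of demands ``behaving like the edges of a girth-$(2\ell+2)$ graph on terminals'' is also not what happens: the family that works consists of complete bipartite blocks.

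\textbf{The missing idea is amplification by private leaves.} The paper proceeds as follows.
\begin{itemize}
\item Take a base graph $B$ of girth at least $2\ell$ with $M=\Omega(N^{1+1/(\ell-1)})$ edges, and force all degrees into $[D,2D-1]$ by pruning and vertex splitting.
\item Subdivide every base edge into a path of constant even length $L$ with $(k+1)(sL+2)<2\ell L$, where $s=\beta-1$. Hang $D$ private leaves on every base vertex, so $m=\Theta(M)$.
\item For each path $a_0,\dots,a_s$ of $B$, the demands are all $D^2$ pairs (leaf of $a_0$, leaf of $a_s$). This gives $\Omega(ND^{s+2})$ demands, all at distance $c=sL+2$.
\end{itemize}
The demands must span $s=\beta-1$ base edges. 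Demands across a single base edge, as in your plan, are answered exactly for $\beta\ge3$ by the route $u\to a\to b\to v$. That route needs only one hop per base edge, i.e.\ $O(m)$ hopset edges.

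The charging then goes as follows. Since $(k+1)c$ is below the girth $2\ell L$, the hops trace $P(u,v)$. Either the first or last hop enters a base edge, which fixes its leaf and that base edge. Otherwise, by pigeonhole, one of the at most $s-1$ internal hops passes through a junction $a_i$ and fixes both base edges at $a_i$. In both cases at most $O(D^s)$ demands can charge a given hop, so
\[
|F|=\Omega(ND^2)=\Omega(M^2/N)=\Omega(m^{1+1/\ell}).
\]

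Two further points about your plan:
\begin{itemize}
\item The threshold $k<\frac{2\ell}{\beta-1}-1$ comes from the tree condition above, not from averaging a rounding loss over segments.
\item The exponent comes from $M^2/N$. If you start, as you do, from girth $2\ell+2$ and density $n_0^{1+1/\ell}$, the same argument gives exponent $\frac{1}{\ell+1}$, at the correspondingly shifted stretch.
\end{itemize}
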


Unlike for $\beta=2$, no matching upper bound is known for $\beta\ge3$ (see \Cref{prob:hopset-beta}).

A closely related problem is \emph{hub labeling}, in which each vertex stores distances to selected hubs, and a query is answered through a hub common to the labels of its endpoints~\cite{CohenHKZ03,AlstrupDKP16,GawrychowskiKU16}. Stretch-$k$ hub labelings correspond to $(k,2)$-hopsets, up to constant factors and an additive $O(n+m)$ term in total size (see \Cref{sec:discussion}). Hence, by \Cref{thm:dense-transfer}, for every integer $\ell\ge3$, every stretch-$(2\ell-3)$ hub labeling of the graphs of that theorem has total label size $\Omega\!\left(m^{1+1/\ell}\right)$. \Cref{tab:hopset-lb} compares the hopset bounds with previous results.

\begin{table}[t]
\centering
\footnotesize
\renewcommand{\arraystretch}{1.3}
\setlength{\tabcolsep}{3.4pt}
\begin{tabular}{@{}>{\raggedright\arraybackslash}p{1.65cm}>{\raggedright\arraybackslash}p{3.05cm}>{\raggedright\arraybackslash}p{1.2cm}>{\raggedright\arraybackslash}p{1.75cm}>{\raggedright\arraybackslash}p{2.4cm}>{\raggedright\arraybackslash}p{4.5cm}@{}}
\toprule
\textbf{Reference} & \textbf{Size} & \textbf{Stretch} & \textbf{Hopbound} & \textbf{Conjecture} & \textbf{Comment}\\
\midrule
\cite{KosowskiUV19} & $\Omega(m^{2-o(1)})$ & $1$ & $2$ & None & \\
\cite{NeimanS22} & $\Omega(m)$ & $k$ & $\beta$ & None & \mbox{$m=\Theta\big(n^{1+\frac{1}{\ceil{(k+1)(\beta+1)}+1}}\big)$} \\
\midrule
\rowcolor{black!7}\Cref{thm:dense-transfer} & $\Omega(m^{1+\frac{1}{\ell}})$ & $2\ell-3$ & $2$ & Erd\H{o}s girth & Matches the upper bound \\
\rowcolor{black!7}\Cref{thm:beta-dense-transfer} & $\Omega(m^{1+\frac{1}{\ell}})$ & $\frac{2\ell-\beta-1}{\beta-1}$ & $\beta$ & Erd\H{o}s girth &  \\
\midrule
\cite{KadriaR25new} & $O(m^{1+\frac{1}{\ell}})$ & $2\ell-3$ & $2$ & None & Matches the lower bound \\
\bottomrule
\end{tabular}
\caption{Size bounds for hopsets with stretch $k$ and hopbound $\beta$. All rows except the last are lower bounds, and $\ell>\beta$ is an integer. Unless stated otherwise, the bounds hold for graphs with $m=\Theta(n)$.}
\label{tab:hopset-lb}
\end{table}

\subsection{Related Work}\label{sec:related}

\paragraph{Known lower bounds for distance oracles.}
Sommer, Verbin, and Yu~\cite{DBLP:conf/focs/SommerVY09} prove a cell-probe space lower bound of $n^{1+\Omega(1/(kt))}/\log n$ for stretch $k$ and query time $t$. The lower bound is far from the known upper bounds, and does not rule out even linear-space constant-stretch ADOs with logarithmic query time.

Under the set-intersection hypotheses, P\v{a}tra\c{s}cu and Roditty~\cite{PatrascuR14} and P\v{a}tra\c{s}cu, Roditty, and Thorup~\cite{PatrascuRT12} give space lower bounds for ADOs with constant query time and stretches below $3$ (see also~\cite{CohenP10} and \Cref{tab:lb}).\footnote{Kopelowitz, Korin, and Roditty~\cite{DBLP:conf/icalp/KopelowitzKR24} show that stretch $2-\eps$ requires $\widetilde\Omega(n^2)$ space, even with any constant additive error and on graphs of polylogarithmic maximum degree.} The lower bound of P\v{a}tra\c{s}cu, Roditty, and Thorup~\cite{PatrascuRT12} is $\widetilde\Omega(m^{1+\frac{2}{k+1}})$, the same as ours, at the stretches $k=3-\frac{2}{\ell}$ for an integer $\ell$, and it applies only to stretches below $3$, whereas our lower bound holds for every real $k\ge1$. Moreover, their lower bounds apply only to oracles with $O(1)$ query time, whereas ours hold for every pivot-based ADO, regardless of its query time.

\begin{table}[!t]
\centering
\footnotesize
\renewcommand{\arraystretch}{1.3}
\setlength{\tabcolsep}{3.4pt}
\begin{tabular}{@{}>{\raggedright\arraybackslash}p{1.65cm}>{\raggedright\arraybackslash}p{3.3cm}>{\raggedright\arraybackslash}p{1.7cm}>{\raggedright\arraybackslash}p{1.0cm}>{\raggedright\arraybackslash}p{2.4cm}>{\raggedright\arraybackslash}p{5.0cm}@{}}
\toprule
\textbf{Reference} & \textbf{Space LB} & \textbf{Stretch} & \textbf{Query time} & \textbf{Conjecture} & \textbf{Comment}\\
\midrule
\cite{DBLP:journals/jacm/ThorupZ05} & $\Omega(m)$ & $k=2\ell-1$ & Any & Erd\H{o}s girth & $m=\Theta(n^{1+\frac{2}{k+1}})$, $\ell\in \N$\\
\midrule
\rowcolor{black!7}\Cref{thm:lb-pivot} & $\widetilde\Omega(m^{1+\frac{2}{k+1}})$ & $k$ & Any & Erd\H{o}s girth\textsuperscript{*} & Pivot-based $1$-hop ADO \\
\rowcolor{black!7}\Cref{thm:lb-pivot-beta} & $\widetilde\Omega(m^{1+\frac{1}{k+1}})$ & $k$ & Any & Erd\H{o}s girth\textsuperscript{*} & Pivot-based $2$-hop ADO \\
\rowcolor{black!7}\Cref{thm:lb-pivot-beta} & $\widetilde\Omega(m^{1+\frac{2}{\beta(k+1)}})$ & $k$ & Any & Erd\H{o}s girth\textsuperscript{*} & Pivot-based $\beta$-hop ADO \\
\midrule
\cite{DBLP:conf/focs/SommerVY09} & $\widetilde\Omega(m^{1+\Omega(\frac{1}{kt})})$ & $k$ & $t$ & None & Cell-probe model \\
\cite{PatrascuR14} & $\widetilde\Omega(m^{2})$ & $2-\eps$ & $O(1)$ & Set-intersection & \\
\cite{PatrascuRT12} & $\widetilde\Omega(m^{1+\frac{2}{k+1}})$ & $k=3-\frac{2}{\ell}$ & $O(1)$ & Set-intersection & $\ell\in\N$ \\
\bottomrule
\end{tabular}
\caption{Space lower bounds for approximate distance oracles for various stretches. Unless stated otherwise, the bound holds for graphs with $m=\widetilde{\Theta}(n)$. The `*' denotes the conjecture for constant-degree graphs (\Cref{conj:moore}). The $\widetilde{\Theta}$ and $\widetilde{\Omega}$ notations omit polylogarithmic factors.}
\label{tab:lb}
\end{table}

\paragraph{Hopsets and labels.}
Earlier lower bounds address exact and near-exact distances, both for hopsets~\cite{DBLP:journals/siamcomp/AbboudBP18,HuangP21,KoganP22focs,BodwinH23,HoppenworthXX25} and for hub labelings and distance labelings~\cite{GavoilleKKPP01,CohenHKZ03,DBLP:journals/jal/GavoillePPR04,AlstrupGHP16,KosowskiUV19}.

For arbitrary stretch, Neiman and Shabat~\cite{NeimanS22} show that a hopset of size $O(n^{1+1/\ell})$ must satisfy $k\beta=\Omega(\ell)$. For fixed stretch and hopbound, they achieve an $\Omega(m)$ lower bound on the number of added edges in their hard graphs. In contrast, our lower bounds are polynomially superlinear in $m$, even on constant-degree graphs.
Moreover, even in terms of $n$, our lower bounds are stronger. For example, for stretch $3$ and hopbound $2$, their lower bound gives $\Omega(n^{1+1/13})$, while \Cref{thm:dense-transfer} gives $\Omega(n^{4/3})$.

\paragraph{Organization.}
\Cref{sec:prelim} gives the preliminaries, and in \Cref{sec:overview} we give a technical overview. We develop the charging framework on constant-degree graphs in \Cref{sec:robust-hopsets}, beginning with hopsets, and extend it to pivot-based ADOs in \Cref{sec:pivot-lower-bounds}. \Cref{sec:dense-transfer} uses a separate construction to obtain stronger hopset lower bounds. \Cref{sec:discussion} discusses applications, girth assumptions, and open questions. \Cref{sec:models} shows how known oracles fit the pivot model.

\section{Preliminaries}\label{sec:prelim}

Let $G=(V,E)$ be an undirected, unweighted graph on $n=|V|$ vertices and $m=|E|$ edges. We denote by
$\dG(x,y)$ the shortest-path distance between $x$ and $y$ in $G$.
We denote by $P(u,v)$ a fixed shortest path between $u$ and $v$, chosen so that $P(v,u)$ is the reverse of $P(u,v)$. We write $z\in P(u,v)$ if $z$ is a vertex of $P(u,v)$. The \emph{length} $|P(u,v)|$ of the path is its number of edges, so $|P(u,v)|=\dG(u,v)$.

For $u\in V$ and $R\in\N$, we
denote by $B_R(u)$ the ball around $u$ with radius $R$, i.e., $B_R(u)=\{x\in V:\dG(u,x)\le R\}$.
The \emph{girth} $g$ is the length of a shortest cycle in $G$.
For a set $S\subseteq V$, we denote by $G[S]$ the subgraph of $G$ \emph{induced} by $S$, i.e., the graph whose vertex set is $S$ and whose edge set consists of all the edges of $E$ with both endpoints in $S$.

We now define hopsets.
\begin{definition}[$(k,\beta)$-hopset]\label{def:hopset}
Let $\beta\ge 1$ be an integer, and let $k\ge 1$ be a real number. A \emph{$(k,\beta)$-hopset} for $G$ is a set $F$ of edges on $V$, where each edge $(x,y)\in F$ has weight $\dG(x,y)$, such that for all $u,v\in V$
\[
  d_{G\cup F}^{\le\beta}(u,v)\;\le\;k\cdot\dG(u,v),
\]
where $d_{G\cup F}^{\le\beta}(u,v)$ is the minimum weight of a path between $u$ and $v$ in $G\cup F$ that uses at most $\beta$ edges. The \emph{size} of the hopset is $|F|$.
\end{definition}

Next, we define the \emph{hop graph} $H=(V,E_H,w_H)$ of $F$, where $E_H=E\cup F$ and $w_H(x,y)=\dG(x,y)$ for every $(x,y)\in E_H$, and we write $d_H^{\le\beta}=d_{G\cup F}^{\le\beta}$. We call the edges of $H$ \emph{hops}. Since hop weights never underestimate distances, the triangle inequality gives
\[
  \dG(u,v)\;\le\;d_H^{\le\beta}(u,v)\;\le\;k\cdot\dG(u,v)\qquad\text{for all } u,v\in V.
\]
Our proofs use only these inequalities and the fact that $w_H(x,y)\ge\dG(x,y)$ for every hop, so they also apply to ADOs that store overestimates of the distances rather than exact distances.\footnote{For example, the ADO of Baswana and Kavitha~\cite{DBLP:journals/siamcomp/BaswanaK10}.}

Our proofs give a lower bound on $|E_H|$. Throughout the paper, we consider sparse graphs with $m=O(n)$ edges. Since $|F|\ge|E_H|-m$, every superlinear lower bound on $|E_H|$ is also a lower bound on the size $|F|$ of the hopset.

Hops are undirected. We write a hop as an ordered pair $(u,x)$ when the endpoint $u$ matters (as
when charging pairs to hops in our proofs). Thus each unordered edge $(u,x)\in E_H$ yields the two ordered
hops $(u,x)$ and $(x,u)$, and there are $2|E_H|$ ordered hops in total.

\subsection{High-Girth Graphs}

The lower bounds of \Cref{sec:robust-hopsets,sec:pivot-lower-bounds} use $d$-regular graphs whose girth is large.
By the \emph{Moore bound}, a $d$-regular graph of girth $g\ge 2\ell+1$ on $n$ vertices satisfies $n\ge (d-1)^{\ell}$. For constant $d$ this gives
$g\le 2\log_{d-1}n+O(1)$.
We need graphs whose girth almost matches this bound. We assume the following conjecture.

\begin{conjecture}[Erd\H{o}s girth conjecture for constant-degree graphs~\cite{ExooJ08}]\label{conj:moore}
There are constants $d\ge 3$ and $C\ge0$ such that for infinitely many $n$ some $d$-regular graph
on $n$ vertices has girth $g\ge 2\log_{d-1}(n)-C$.
\end{conjecture}
Unconditionally, Erd\H{o}s and Sachs~\cite{ErdosSachs63} proved that $d$-regular graphs of girth
$\log_{d-1}n$ exist for every $d\ge3$, and the Ramanujan graphs of
Lubotzky, Phillips, and Sarnak~\cite{LPS88} achieve girth $\tfrac43\log_{d-1}n$. The best construction to date is due to Double Agent~\cite{DoubleAI26} and gets four fifths of the way to the Moore bound:

\begin{lemma}[Explicit high-girth graphs~\cite{DoubleAI26}]\label{lem:explicit-girth}
Let $d=q^3+1$ for a prime power $q$. There is a constant $C'=C'(d)\ge0$ such that for arbitrarily large $n$ there is an
explicit $d$-regular graph on $n$ vertices with girth $g\ge\tfrac{8}{5}\log_{d-1}n-C'$.
\end{lemma}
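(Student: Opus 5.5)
The lemma is imported as a black box from~\cite{DoubleAI26}, so the plan is to check that the cited construction gives exactly the stated parameters, not to reprove the girth bound. Throughout, ``arbitrarily large $n$'' means an infinite increasing sequence $n_1<n_2<\cdots$, which is also what \Cref{conj:moore} provides.

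\textbf{Shape of the construction.} Since $d=q^3+1$, I expect the graph to be a tower of explicit covers over a finite base geometry in which every vertex has $q^3+1$ neighbours. A natural candidate for the base is a point graph derived from a generalized polygon of order $(q,q^3)$. Each level $j$ is a $d$-regular lift $G_j$, for example a Cayley or coset graph of a congruence quotient of an arithmetic group, with $n_j=|V(G_j)|\to\infty$.

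\textbf{Steps, in order.}
\begin{enumerate}
\item Check $d$-regularity. Covering maps preserve degree, so every $G_j$ has degree $d$ if the base does.
\item Take the growth rate of $n_j$ from the cited analysis: $n_j=\Theta(|\Gamma/\Gamma_j|)$ for the quotient group $\Gamma/\Gamma_j$.
\item Take the lower bound on $g(G_j)$ from the cited analysis. A short cycle lifts to a nontrivial element of the congruence subgroup $\Gamma_j$ of small word length. A height (norm) argument forces such an element to have length at least about $\tfrac{8}{5}\log_{d-1}n_j$.
\item Rewrite the bound in the stated form. The cited bound should read $g(G_j)\ge \tfrac85\log_{d-1}n_j-O(1)$, where the $O(1)$ comes from the constant in $n_j=\Theta(\cdot)$ and from floor effects. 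Define $C'=C'(d)$ to be this additive constant. It is uniform in $j$ because the implied constants depend only on $q$, and hence only on $d$.
\end{enumerate}

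\textbf{Main obstacle.} The substantive content is the girth estimate in step~3, namely obtaining the ratio $8/5$ rather than the $4/3$ of Ramanujan graphs~\cite{LPS88}. That estimate is the main theorem of~\cite{DoubleAI26}, and I would cite it rather than reprove it. For this paper, the only points that need care are two: the bound must be stated in terms of $\log_{d-1}$ with an additive constant depending only on $d$, and the sequence $(n_j)$ must be infinite.
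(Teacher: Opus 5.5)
Your plan matches the paper: the paper gives no proof of this lemma and simply imports it from~\cite{DoubleAI26}. So citing that result, and checking that it is stated in terms of $\log_{d-1}n$ with an additive constant depending only on $d$ and for an infinite sequence of $n$, is all that is required. The description of what the construction might look like (a generalized-polygon base, congruence quotients, a height argument) is not in the paper and adds nothing to the verification, so drop it or mark it clearly as a guess. It is also not even internally consistent: the collinearity graph of a generalized polygon of order $(q,q^3)$ has degree $q(q^3+1)$ rather than $q^3+1$.
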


The lower bounds of \Cref{sec:robust-hopsets,sec:pivot-lower-bounds} are stated under \Cref{conj:moore}. Running the same proofs on the graphs of \Cref{lem:explicit-girth} instead
yields unconditional lower bounds, with the exponent above $1$ multiplied by $\tfrac{4}{5}$.

On any such $d$-regular high-girth graph, our arguments rely on one simple structural property: every ball of radius up to roughly half the girth is a tree. Consequently, shortest paths are locally unique, and the number of vertices at each distance depends only on the degree.

\begin{lemma}[Local tree structure]\label{lem:tree}
Let $G$ be a $d$-regular graph of girth $g$, let $h=\floor{g/2}-1$, let $u\in V$, and let
$0\le R\le h$. Then $B_R(u)$ induces a tree. Consequently, if $\dG(u,v)=R$, then $P(u,v)$ is the unique shortest path from $u$ to $v$, and the number of vertices at distance exactly $R$ from $u$ is $d(d-1)^{R-1}$ for $R\ge1$.
Moreover, for vertices $a\ne b$ and an integer $R\ge0$ with $\dG(b,a)+R\le h$, the number of vertices $v$ with $\dG(a,v)=R$ and $a\in P(b,v)$ (the vertices at distance $R$ from $a$ in the subtree hanging off $a$ away from $b$) is $(d-1)^{R}$.
\end{lemma}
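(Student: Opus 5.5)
We prove the three assertions of \Cref{lem:tree} in order, using only the girth bound and $d$-regularity.

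\textbf{Step 1: the ball is a tree.} Fix $u$ and $R\le h$, and suppose $G[B_R(u)]$ contains a cycle. Fix a BFS tree $T$ of $G[B_R(u)]$ rooted at $u$. Every vertex of the ball is at depth at most $R$ in $T$, since its shortest path from $u$ stays inside $B_R(u)$. If $G[B_R(u)]$ is not a tree, it has an edge $(x,y)$ not in $T$. The tree paths from $u$ to $x$ and to $y$, together with $(x,y)$, contain a cycle of length at most $2R+1$. Since $h=\floor{g/2}-1$, we have
\[
2R+1\le 2\floor{g/2}-1\le g-1<g,
\]
which contradicts the girth. Hence $B_R(u)$ induces a tree.

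\textbf{Step 2: uniqueness of shortest paths and sphere sizes.} Let $\dG(u,v)=R$. Every shortest $u$--$v$ path lies inside $B_R(u)$, and a tree has a unique path between any two vertices, so $P(u,v)$ is the unique shortest path.

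For the count, induct on $R$. The root $u$ has $d$ neighbors, which are exactly the vertices at distance $1$. For $1\le R<h$, each vertex $w$ at distance $R$ has exactly one neighbor at distance $R-1$, namely its parent. It has no neighbor at distance $R$: such an edge would close a cycle of length at most $2R+1<g$. Its other $d-1$ neighbors therefore lie at distance $R+1$, inside $B_{R+1}(u)$, which is a tree because $R+1\le h$. Distinct vertices at distance $R$ cannot share a child, since a shared child would create a cycle in the tree $B_{R+1}(u)$. This gives the recurrence $d(d-1)^{R-1}\cdot(d-1)=d(d-1)^{R}$.

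\textbf{Step 3: the subtree count.} Let $a\ne b$ with $D=\dG(b,a)$ and $D+R\le h$. Work in the tree $B_{D+R}(b)$. The vertices $v$ with $a\in P(b,v)$ and $\dG(a,v)=R$ are exactly the descendants of $a$ at depth $R$ below $a$. Each has distance $D+R$ from $b$, so $\dG(a,v)=R$ holds because the path through $a$ is the unique shortest path.

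Each non-root vertex of depth less than $D+R\le h$ has exactly $d-1$ children, by the argument in Step 2. Because $a\ne b$, $a$ is not the root, so each generation below $a$ multiplies the count by $d-1$, giving $(d-1)^R$.

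Conversely, we must rule out a vertex $v$ with $\dG(a,v)=R$ and $a\in P(b,v)$ that lies outside this subtree. Since $a\in P(b,v)$, we get $\dG(b,v)=D+R\le h$, so $v$ is in the ball, and the path from $b$ to $v$ through $a$ is the unique tree path. Hence $v$ is a descendant of $a$ at depth $R$.

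The main obstacle is the off-by-one bookkeeping. One must check that $h=\floor{g/2}-1$ leaves room for the non-tree edges used in the child counts, including edges between vertices at the same depth $R$, which stay strictly below the girth. With $2R+1\le g-1$ this holds, and the rest is routine.
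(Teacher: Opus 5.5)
Your proof is correct and follows essentially the same route as the paper. Both use a BFS tree and the girth bound $2R+1<g$ to rule out non-tree edges, read off $d$ children at the root and $d-1$ elsewhere, and re-root at $b$ for the subtree count. The only cosmetic difference is that the paper proves the tree property once for $B_h(u)$ and inherits it for smaller radii, whereas you argue directly for each radius and spell out the no-same-level-edge and no-shared-child checks.
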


\begin{proof}
Let $T_u$ be a BFS tree of the subgraph induced by $B_h(u)$, rooted at $u$. Suppose that there is an edge $\{x,y\}$ of $G[B_h(u)]$ that does not belong to $T_u$. Then this edge, together with the paths from $u$ to $x$ and from $u$ to $y$ in $T_u$, contains a cycle of length at most
\[
\dG(u,x)+\dG(u,y)+1 \le 2h+1 < g,
\]
contradicting the assumption that $G$ has girth $g$. Hence $G[B_h(u)]$ is a tree, and therefore every shortest path of length at most $h$ is unique.

Next, in this tree the root has $d$ children, while every other vertex at depth less than $h$ has exactly $d-1$ children, since one incident edge leads to its parent. Thus, for every $1\le R\le h$, the number of vertices at distance exactly $R$ from $u$ is
$d(d-1)^{R-1}$.

For the last part, apply the above with root $b$. A vertex $v$ with $\dG(a,v)=R$ and $a\in P(b,v)$ satisfies $\dG(b,v)=\dG(b,a)+R\le h$, so it is a descendant of $a$ at depth $R$ below $a$ in the tree $G[B_h(b)]$, and conversely. Since $a\ne b$, the vertex $a$ and each of its descendants at depth less than $h$ have exactly $d-1$ children, so there are exactly $(d-1)^R$ such vertices.
\end{proof}

\section{Technical Overview}\label{sec:overview}

\paragraph{The charging framework.}
The proofs of all our lower bounds follow a common charging framework on high-girth graphs with $m=O(n)$ edges. We fix a distance $c$ with $(k+1)c<g$, where $g$ is the girth of the graph, and consider a family of vertex pairs at distance exactly $c$. 
Then, for such a pair $\{u,v\}$, replacing each hop $(x,y)$ by $P(x,y)$ in a path that approximates $\dG(u,v)$ with stretch $k$ gives a walk in $G$ of length at most $kc$.

Together with $P(u,v)$, whose length is $c$, this walk is too short to contain a cycle, so it traverses every edge of $P(u,v)$. We say that the walk \emph{traces} $P(u,v)$, and use the fact that some hop covers a long segment of $P(u,v)$.
Each proof has three steps:
\begin{enumerate}
  \item We charge each such pair to one hop of a path with at most $\beta$ hops that approximates its distance.
  \item We bound the number of pairs that can be charged to any fixed hop.
  \item We divide the number of pairs by this bound, which gives a lower bound on the number of hops, that is, on $|E_H|$.
\end{enumerate}
Since $m=\Theta(n)$ on these graphs, bounds in terms of $n$ and of $m$ are equivalent. Our main lower bound is for pivot-based ADOs, but, for the sake of simplicity, we first present the charging framework for hopsets.

\paragraph{Warmup: exact two-hop hopsets.}
To illustrate the charging argument in its simplest form, we start with the exact case, $k=1$, in \Cref{sec:warmup}. Here the condition $(k+1)c<g$ allows $c\approx g/2$. Consider a pair $\{u,v\}$ with $\dG(u,v)=c$, and let its distance be realized by two hops through a vertex $x$. Since the path is exact and hop weights never underestimate distances, we have $\dG(u,v)=\dG(u,x)+\dG(x,v)$. Thus $x$ lies on the shortest path $P(u,v)$, which is unique since $c<g/2$.

Assume, without loss of generality, that $\dG(u,x)\ge \dG(x,v)$. Then $\dG(u,x)\ge c/2$, and we charge the pair $\{u,v\}$ to the ordered hop $(u,x)$. Since $x\in P(u,v)$, every pair charged to $(u,x)$ must have its second endpoint $v$ at distance exactly $c-\dG(u,x)\le c/2$ from $x$, in the subtree hanging off $x$ in the tree rooted at $u$ (see \Cref{fig:charge-exact}). By the local tree structure of \Cref{lem:tree}, there are at most $(d-1)^{c/2}$ such choices for $v$. On the other hand, there are $\Theta(n(d-1)^c)$ pairs at distance $c$. Therefore:
\[|E_H|=\Omega\left(\frac{n(d-1)^c}{(d-1)^{c/2}}\right)=\Omega(n(d-1)^{c/2})=\Omega(n^{3/2}),\]
where the last equality uses $c\approx g/2\approx \log_{d-1}(n)$.

\paragraph{Two-hop hopsets with stretch.}
In \Cref{sec:2hop}, we prove a lower bound for $(k,2)$-hopsets on constant-degree graphs.
The first step is the same as in the exact case: if the pair $\{u,v\}$ is approximated using the hops $(u,x)$ and $(x,v)$, and $\dG(u,x)\ge \dG(x,v)$, then we charge the pair $\{u,v\}$ to the ordered hop $(u,x)$. However, allowing stretch introduces a new difficulty in the second step: the intermediate vertex $x$ is not necessarily on the shortest path $P(u,v)$, so it cannot anchor the counting as in the exact case.

The main challenge is to show that the hop $(u,x)$ is still charged by only a few pairs. Let $\tau\in P(u,v)$ be the midpoint of the shortest path, so that $\dG(u,\tau)=\dG(\tau,v)=c/2$.
We prove that, even when $x$ is not on $P(u,v)$, the high girth of the graph forces $\tau$ to be on $P(u,x)$ whenever $\dG(u,x)\ge \dG(x,v)$. Indeed, the paths $P(u,x)$, $P(x,v)$, and $P(u,v)$ have fewer than $g$ edges together, so their union is a tree. In this tree, branching toward $x$ before $\tau$ would make $x$ farther from $v$ than from $u$, which contradicts the fact that $\dG(u,x)\ge \dG(x,v)$. See \Cref{fig:charge-stretch}\textbf{(a)} for an illustration.

Thus, for a fixed hop $(u,x)$, the midpoint $\tau$ is fixed. Every vertex $v$ whose pair $\{u,v\}$ charges $(u,x)$ must then lie at distance exactly $c/2$ from this fixed vertex $\tau$, in the subtree hanging off $\tau$ away from $u$, since $\tau\in P(u,v)$ (see \Cref{fig:charge-stretch}\textbf{(b)}). By \Cref{lem:tree}, there are at most $(d-1)^{c/2}$ such choices for $v$.
Since there are $\Theta(n(d-1)^c)$ pairs at distance $c$, we obtain
\[|E_H|=\Omega\left(\frac{n(d-1)^c}{(d-1)^{c/2}}\right)=\Omega(n(d-1)^{c/2})=\Omega(n^{1+\frac{1}{k+1}}),\]
where the last equality uses $c\approx g/(k+1)\approx 2\log_{d-1}(n)/(k+1)$.
This is the same counting as in the warmup. The only difference is that stretch $k$ forces a smaller $c$, namely $c\approx g/(k+1)$ instead of $c\approx g/2$, and this is where the exponent $\frac{1}{k+1}$ comes from.

\paragraph{Hopsets with general hopbound.}
In \Cref{sec:betahop}, we extend the argument to $(k,\beta)$-hopsets. When $\beta>2$, the charged hop may be incident to neither $u$ nor $v$, so fixing it no longer fixes either query endpoint. Moreover, the hop no longer determines the midpoint of $P(u,v)$, so the midpoint cannot anchor the counting as in the two-hop case.

The key is that all the hops together must trace $P(u,v)$, so one hop $(x,y)$ covers a contiguous segment of length at least $\sigma=c/\beta$. We select a segment of $\sigma$ edges, called a \emph{window}, and charge the pair to the hop together with this window.
Fixing the window and its position in $P(u,v)$ limits the choices for both endpoints: $u$ and $v$ lie in the subtrees hanging off the two ends of the window, and only the $c-\sigma$ edges outside the window remain free, leaving at most $(d-1)^{c-\sigma}$ pairs (see \Cref{fig:beta-trace}). Since $P(x,y)$ has at most $kc=O(\log n)$ edges, a fixed hop has $O(\log n)$ choices of the window, and the window has $O(\log n)$ choices of its position in $P(u,v)$. Comparing with the $\Theta(n(d-1)^c)$ pairs at distance $c$ gives:
\[
|E_H|
=
\Omega\left(\frac{n(d-1)^c}{\log^2 n\,(d-1)^{c-\sigma}}\right)
=
\Omega\left(\frac{n(d-1)^\sigma}{\log^2 n}\right)
=
\Omega\left(\frac{n^{1+\frac{2}{\beta(k+1)}}}{\log^2 n}\right),
\]
where $\sigma=c/\beta\approx 2\log_{d-1}(n)/(\beta(k+1))$.

\paragraph{Pivot-based ADOs with one hop.}
In \Cref{sec:pivot}, we prove our main theorem, a lower bound for pivot-based $1$-hop ADOs in sparse graphs that matches the stretch--space curve.
Recall that a pivot-based $1$-hop ADO estimates $\dG(u,v)$ by a walk $u\to p_u\to p_v\to v$, which makes a pivot jump from $u$ to one of its pivots $p_u\in\mathcal{P}(u)$, takes a single hop $(p_u,p_v)$, and makes a pivot jump from a pivot $p_v$ of $v$ to $v$ (see \Cref{def:piv-ado} and \Cref{fig:pivot-model}\textbf{(a)}).
As in the charging framework, this walk traces $P(u,v)$, and we charge the pair to its single hop $(p_u,p_v)$.

The new difficulty is that pivot jumps are not hops, so the jump $u\to p_u$ may cover a long prefix of $P(u,v)$ and leave less of it to the hop. However, each vertex has only $r$ pivots, and a pivot $p_u$ of $u$ lets the walk skip a long prefix of $P(u,v)$ only if $P(u,p_u)$ follows that prefix. Hence we add a preliminary step, which discards the few pairs with such a pivot. We say that a pivot $p_u$ of $u$ is \emph{aligned with} $v$ if $P(u,p_u)$ follows the first $t$ edges of $P(u,v)$, where $t=O(\log(r+1))$ is chosen such that $(d-1)^t=\Theta(r)$ (see \Cref{fig:pivot}\textbf{(a)}). By the tree structure, a single pivot $p_u$ of $u$ is aligned only with the vertices $v$ in the subtree hanging off the vertex at distance $t$ from $u$ on $P(u,p_u)$. These are at most a $\frac{1}{4r}$ fraction of the vertices at distance $c$ from $u$ (see \Cref{fig:aligned-tree}). Summing over the at most $r$ pivots of $u$ and of $v$ shows that fewer than half of the pairs have an aligned pivot. We discard these pairs.

For every remaining pair, the hop $(p_u,p_v)$ covers all of $P(u,v)$ except for fewer than $t$ edges at each end (see \Cref{fig:pivot}\textbf{(b)}). Hence the hop determines the middle part of $P(u,v)$ up to its $O(\log n)$ possible positions on $P(p_u,p_v)$, and then there are at most $(d-1)^{t}$ choices for each of $u$ and $v$. Thus a hop is charged by $O(\log n\,(d-1)^{2t})=O(r^2\log n)$ pairs, which gives
\[
|E_H|
=
\Omega\left(\frac{n(d-1)^c}{r^2\log n}\right)
=
\Omega\left(\frac{n^{1+\frac{2}{k+1}}}{r^2\log n}\right),
\]
where we use $c\approx g/(k+1)\approx 2\log_{d-1}(n)/(k+1)$.
Unlike in the two-hop argument, the single hop covers almost all of $P(u,v)$, and not only half of it. Hence, for $r=\Ot(1)$, a hop is charged by only polylogarithmically many pairs, and this doubles the exponent from $\frac{1}{k+1}$ to $\frac{2}{k+1}$.

\paragraph{Pivot-based ADOs with general hopbound.}
In \Cref{sec:pivot-beta}, we extend the result to $r$-pivot $\beta$-hop ADOs by adapting the window argument to pivots. A walk now makes up to $2\beta$ pivot jumps (see \Cref{def:piv-ado-beta} and \Cref{fig:pivot-model}\textbf{(b)}). The jumps at intermediate vertices may start anywhere in the graph and cover a long segment in the middle of $P(u,v)$ without using a hop, so discarding pairs with an aligned endpoint pivot no longer suffices.
To handle these jumps, we extend the charging step using the window argument: fixing a long segment shared by a pivot jump and $P(u,v)$ limits the choices for both $u$ and $v$.
Since $|\mathcal{P}(x)|\le r$ for every vertex $x$, there are at most $nr$ pivot jumps. As for hops, a pivot jump has $O(\log n)$ choices of a window of $\rho$ edges, the window has $O(\log n)$ choices of its position in $P(u,v)$, and fixing both leaves at most $(d-1)^{c-\rho}$ pairs. Hence only $O(nr\log^2 n\,(d-1)^{c-\rho})$ pairs have a pivot jump that shares at least $\rho$ edges with $P(u,v)$. We choose the threshold $\rho$ such that $(d-1)^\rho=\Theta(r\log^2 n)$, so these pairs are only a small fraction of the $\Theta(n(d-1)^c)$ pairs at distance $c$. We discard them, along with the pairs that have an aligned endpoint pivot.

For every remaining pair, the pivot jumps cover only short portions of $P(u,v)$, leaving the rest to the at most $\beta$ hops. Thus, some hop covers a window of length $\sigma\approx(c-2t-2(\beta-1)\rho)/\beta$, and we charge the pair to this hop and window as before (see \Cref{fig:pivot-beta}). This gives
\begin{align*}
|E_H|
&=
\Omega\left(\frac{n(d-1)^\sigma}{\log^2 n}\right)
=
\Omega\left(\frac{n^{1+\frac{2}{\beta(k+1)}}}{r^2\log^{6-4/\beta} n}\right),
\end{align*}
where $\sigma$ is now smaller than $c/\beta$ because of the portions covered by pivot jumps, and this contributes the factors $r^2$ and $\log^{4-4/\beta}n$ in the denominator.

\paragraph{Stronger hopset bounds via dense base graphs.}
In \Cref{sec:dense-transfer}, we prove hopset lower bounds that are stronger than those of \Cref{sec:robust-hopsets}. Under the Erd\H{o}s girth conjecture, for fixed integers $\beta\ge2$ and $\ell>\beta$, there are sparse graphs on which every $(k,\beta)$-hopset with $k=\frac{2\ell-\beta-1}{\beta-1}$ has size $\Omega(m^{1+1/\ell})$. For $\beta=2$, the stretch is $2\ell-3$, and the bound is tight, since the stretch-$(2\ell-3)$ oracle of Kadria and Roditty~\cite{KadriaR25new} yields a $(2\ell-3,2)$-hopset of size $O(m^{1+1/\ell})$ (\Cref{sec:models}). 

The construction starts from a dense base graph $B$ with $N$ vertices, $M=\Omega(N^{1+1/(\ell-1)})$ edges, all degrees $\Theta(D)$ for $D=\Theta(M/N)$, and girth at least $2\ell$. We replace every base edge by a path of constant length $L$, and attach $D$ private leaves to every base vertex (see \Cref{fig:dense-graph}, with $q=D$). The resulting graph is sparse, with $m=\Theta(M)$, and its girth is at least $2\ell L$.

Let $s=\beta-1$. For every path $a_0,\dots,a_s$ of $s$ base edges, the demands are the $D^2$ pairs of a leaf $u$ of $a_0$ and a leaf $v$ of $a_s$ (see \Cref{fig:dense-graph-beta}). The natural exact route $u\to a_0\to\dots\to a_s\to v$ uses $\beta+1$ hops, one more than allowed. There are $\Omega(ND^{s+2})$ such demands, each at distance $c=sL+2$.

Since $(k+1)s=2\ell-2<2\ell$, for a large enough constant $L$ we have $(k+1)c<2\ell L$. Hence, as in the charging framework, the hops answering a demand $\{u,v\}$ trace $P(u,v)$, and together they cover its $s$ subdivided base edges.

There are at most $\beta=s+1$ hops. If the first or the last hop enters a base edge, then it fixes its leaf and the base edge next to it, leaving $O(D^{s})$ choices for the other $s-1$ base vertices and the other leaf (see \Cref{fig:two-hop-transfer}). Otherwise, the at most $s-1$ internal hops cover the $s$ base edges, so one of them passes through a junction $a_i$ and fixes the two base edges around it, even if it traverses only part of each (see \Cref{fig:beta-transfer}). This leaves $O(D^{s})$ choices for the other $s-2$ base vertices and the two leaves. Hence every hop is charged by $O(D^{s})$ demands. Dividing the number of demands by this bound gives
\[
  |E_H|=\Omega\!\left(\frac{ND^{s+2}}{D^{s}}\right)
      =\Omega(ND^2)
      =\Omega\!\left(\frac{M^2}{N}\right)
      =\Omega\!\left(m^{1+\frac1\ell}\right),
\]
where the last step uses $m=\Theta(M)$ and $N=O(M^{(\ell-1)/\ell})$, which follows from $M=\Omega(N^{1+1/(\ell-1)})$.

Unlike the arguments above, this construction does not extend to pivot-based ADOs. If every leaf has its parent as a pivot, then all the selected demands have exact answers with $\beta-1$ hops, using only linear space (\Cref{rem:transfer-pivots}).

\section{Robust Hopset Lower Bounds}\label{sec:robust-hopsets}
In this section, we develop our charging framework on hopsets, where it is simplest, and in \Cref{sec:pivot-lower-bounds} we extend it to pivot-based ADOs and prove our main results. We start with exact two-hop hopsets, then allow stretch, and then allow a general hopbound.

\subsection{Warmup: Exact Two-Hop Hopsets}\label{sec:warmup}
We begin with the exact case, namely $k=1$. The resulting lower bound is not new and is in fact far weaker than the near-quadratic lower bound of Kosowski, Uzna\'nski, and Viennot~\cite{KosowskiUV19} for exact hub labelings on sparse graphs. We include it because its short proof is the skeleton of all our later proofs.

\colorlet{chop}{blue!65!black}      
\colorlet{cpath}{black}             
\colorlet{csub}{orange!80!black}    

\begin{figure}[t]
\centering
\begin{tikzpicture}[
    vert/.style  = {circle, draw=black, fill=white, minimum size=5pt, inner sep=0pt, line width=0.7pt},
    bvert/.style = {circle, draw=black, fill=black!12, minimum size=11pt, inner sep=0pt, line width=1pt, font=\scriptsize},
    pvert/.style = {circle, draw=black, fill=white, minimum size=10pt, inner sep=0pt, line width=1pt, font=\scriptsize},
    tedge/.style = {gray!55, line width=0.9pt},
    pedge/.style = {cpath, line width=1.6pt},
]

\coordinate (u)  at (0,0);
\coordinate (m1) at (0.5,0);
\coordinate (m2) at (1.0,0);
\coordinate (m3) at (1.5,0);
\coordinate (x)  at (2.0,0);

\coordinate (a1) at (1.60,-1.05);
\coordinate (a2) at (2.40,-1.05);
\coordinate (b1) at (1.15,-2.10);
\coordinate (b2) at (1.75,-2.10);
\coordinate (b3) at (2.25,-2.10);
\coordinate (b4) at (2.85,-2.10);
\coordinate (v)  at (0.90,-3.15);   
\coordinate (c2) at (1.65,-3.15);
\coordinate (c3) at (2.35,-3.15);
\coordinate (c4) at (3.10,-3.15);

\fill[csub!15] (x) -- (0.10,-3.50) -- (3.90,-3.50) -- cycle;
\draw[csub!65, line width=0.8pt] (x) -- (0.10,-3.50) -- (3.90,-3.50) -- cycle;

\draw[tedge] (x) -- (a2);
\draw[tedge] (a1) -- (b2);
\draw[tedge] (a2) -- (b3); \draw[tedge] (a2) -- (b4);
\draw[tedge] (b2) -- (c2); \draw[tedge] (b3) -- (c3); \draw[tedge] (b4) -- (c4);
\draw[pedge] (x) -- (a1) -- (b1) -- (v);
\draw[pedge] (u) -- (m1) -- (m2) -- (m3) -- (x);

\draw[chop, dashed, line width=1.4pt] (u) to[bend left=50] (x);

\node[vert] at (m1) {}; \node[vert] at (m2) {}; \node[vert] at (m3) {};
\node[vert] at (a1) {}; \node[vert] at (a2) {};
\node[vert] at (b1) {}; \node[vert] at (b2) {}; \node[vert] at (b3) {}; \node[vert] at (b4) {};
\node[vert] at (c2) {}; \node[vert] at (c3) {}; \node[vert] at (c4) {};
\node[bvert] (un) at (u)  {$u$};
\node[pvert] (xn) at (x)  {$x$};
\node[bvert] (vn) at (v)  {$v$};

\node[chop, font=\scriptsize, align=center] at (1.0,1.2)
   {hop $(u,x)\in E_H$, $\dG(u,x)\ge c/2$};
\draw[<->, gray!65] (4.15,0) -- (4.15,-3.15);
\node[gray!55!black, font=\scriptsize] at (4.55,-1.57) {$\le c/2$};
\node[csub!75!black, font=\scriptsize] at (2.0,-3.90)
   {There are at most $(d-1)^{c/2}$ choices for $v$ at the fixed depth $c-\dG(u,x)\le c/2$.};

\end{tikzpicture}
\caption{%
The charging argument of \Cref{lem:exact}: at most $(d-1)^{c/2}$ pairs can charge the hop $(u,x)$.
}
\label{fig:charge-exact}
\end{figure}

\begin{lemma}[Exact $2$-hopsets]\label{lem:exact}
Under the Erd\H{o}s girth conjecture for constant-degree graphs (\Cref{conj:moore}), for arbitrarily large $n$ there are constant-degree $n$-vertex
graphs $G$ with $m=\Theta(n)$ edges such that every $(1,2)$-hopset for $G$, i.e., every \emph{exact} $2$-hopset, requires $\Omega(m^{3/2})$ space.
\end{lemma}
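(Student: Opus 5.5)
We take the graphs from \Cref{conj:moore}: $d$-regular with constant $d\ge3$, girth $g\ge 2\log_{d-1}n-C$, and $m=dn/2=\Theta(n)$. Set $h=\floor{g/2}-1$ and fix an even integer $c\le h$ with $c\ge g/2-O(1)$. By \Cref{lem:tree}, every $u$ has exactly $d(d-1)^{c-1}$ vertices at distance $c$, each joined to $u$ by a unique shortest path. Hence the number of unordered pairs $\{u,v\}$ with $\dG(u,v)=c$ is $\Theta(n(d-1)^c)$. The plan is the three-step charging scheme from the overview: charge each such pair to an ordered hop, bound the load per hop, and divide.

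\emph{Charging.} Let $F$ be a $(1,2)$-hopset with hop graph $H$, and fix a pair with $\dG(u,v)=c$. Some path of at most two hops has weight exactly $c$. If it is a single hop $(u,v)$, we regard it as the two-hop path through $x=v$ (with a zero-length second hop); this case is absorbed below. Otherwise the path goes through some $x$ with $w_H(u,x)+w_H(x,v)=c$. Since $w_H\ge\dG$ and the triangle inequality gives $\dG(u,x)+\dG(x,v)\ge c$, we get $\dG(u,x)+\dG(x,v)=c$. So the concatenation $P(u,x)P(x,v)$ is a shortest $u$--$v$ path, and by uniqueness (\Cref{lem:tree}, as $c\le h$) it equals $P(u,v)$; in particular $x\in P(u,v)$. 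Name the endpoints so that $\dG(u,x)\ge c/2$, and charge $\{u,v\}$ to the ordered hop $(u,x)$.

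\emph{Bounding the load.} Fix an ordered hop $(u,x)$ with $a=\dG(u,x)\ge c/2$. Every pair charged to it has the form $\{u,v\}$ with $\dG(x,v)=c-a$ and $x\in P(u,v)$. If $a=c$ then $v=x$, giving one pair. Otherwise $x\neq u$, and since $\dG(u,x)+(c-a)=c\le h$, the last part of \Cref{lem:tree} (with $b=u$, $a\leftarrow x$, $R=c-a$) shows there are exactly $(d-1)^{c-a}\le(d-1)^{c/2}$ choices of $v$. The hop $(u,x)$ also has $u$ fixed, so each ordered hop receives at most $(d-1)^{c/2}$ pairs.

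\emph{Conclusion.} Since there are $2|E_H|$ ordered hops, $2|E_H|(d-1)^{c/2}\ge\Omega(n(d-1)^c)$, so $|E_H|=\Omega(n(d-1)^{c/2})$. With $c\ge \log_{d-1}n-O(1)$ and constant $d$, $(d-1)^{c/2}=\Omega(n^{1/2})$, giving $|E_H|=\Omega(n^{3/2})=\Omega(m^{3/2})$. Because $|F|\ge|E_H|-m$, the bound transfers to the hopset. The only delicate step is the load bound, which needs $x$ to lie on the unique path $P(u,v)$; exactness together with $c\le h$ makes this immediate, so I expect no real obstacle beyond choosing the parity and range of $c$ carefully.
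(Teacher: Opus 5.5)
Your proof is correct and follows the paper's argument essentially step for step. It uses the same graphs and the same charging to the longer hop $(u,x)$, with $x\in P(u,v)$ forced by exactness and uniqueness, the same $(d-1)^{c/2}$ load bound from \Cref{lem:tree}, and the same division. The differences are cosmetic: requiring $c$ to be even is unnecessary (the paper simply takes $c=\floor{g/2}-1$), and the separate treatment of $a=c$ is already covered by \Cref{lem:tree} with $R=0$.
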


\begin{proof}
Let $G$ be the graph guaranteed by \Cref{conj:moore}. Thus $G$ has constant degree $d$, girth $g\ge 2\log_{d-1}(n)-C$, and $m=\Theta(n)$ edges. Let $c=\floor{g/2}-1$. Note that:
\begin{equation}\label{eq:warmup-1}
    c=\floor{g/2}-1\ge\frac g2-\frac32\ge\log_{d-1}(n)-\frac{C+3}{2}.
\end{equation}

Let $F$ be a $(1,2)$-hopset for $G$, and let $H$ be its hop graph. We show that $|E_H|=\Omega(n^{3/2})$, which implies $|F|\ge|E_H|-m=\Omega(m^{3/2})$.

The proof contains three steps, which are also the steps of all our later proofs. First, for a pair $\{u,v\}$ at distance $c$, we locate the path that realizes its distance estimate, relative to $P(u,v)$, and charge the pair to one of its hops. Second, we bound the number of pairs that charge a single hop. Third, we count the pairs and divide by this bound.

Let $\{u,v\}$ be a pair with $\dG(u,v)=c$. Consider a path with at most two hops in $H$ that realizes $d_H^{\le 2}(u,v)$, and let $x$ be the intermediate vertex (if this path consists of the single hop $(u,v)$, set $x=v$). 
First, we show that $x$ is on the shortest path $P(u,v)$, and charge the pair to a hop.
Since hop weights never underestimate distances, and since $F$ is a $(1,2)$-hopset, we have $\dG(u,x)+\dG(x,v)\le d_H^{\le 2}(u,v)=\dG(u,v)$.
By the triangle inequality, we get that $\dG(u,x)+\dG(x,v)=\dG(u,v)$. Hence $x$ lies on a shortest path between $u$ and $v$.
Since $c\le \floor{g/2}-1$, by \Cref{lem:tree} this path is $P(u,v)$.
Assume w.l.o.g.\ that $\dG(u,x)\ge\dG(x,v)$. Since $\dG(u,x)+\dG(x,v)=c$, we have $c/2\le\dG(u,x)\le c$. We charge the pair $\{u,v\}$ to the ordered hop $(u,x)$.

Second, we show that each ordered hop is charged by at most $(d-1)^{c/2}$ pairs.
Fix an ordered hop $(u,x)$ with $\dG(u,x)\ge c/2$, and let $v$ be a vertex such that $\{u,v\}$ charges it.
Then, since $\dG(u,v)=c$, $\dG(u,x)\ge c/2$, and $x\in P(u,v)$, we have that $\dG(x,v)\le c/2$. See \Cref{fig:charge-exact} for an illustration.
Since $x\ne u$, $x\in P(u,v)$, and $\dG(u,v)=c\le\floor{g/2}-1$, by \Cref{lem:tree} there are at most $(d-1)^{c-\dG(u,x)}\le(d-1)^{c/2}$ choices for $v$ (the vertices at distance $c-\dG(u,x)$ from $x$ in the subtree hanging off $x$ away from $u$).

Third, since $c=\floor{g/2}-1$, by \Cref{lem:tree} there are $\tfrac12\,n\,d(d-1)^{c-1}$ pairs at distance $c$, each pair charged to one of $2|E_H|$ hops (two directions per hop). Since each hop is charged at most $(d-1)^{c/2}$ times, we get:
\begin{align*}
  2|E_H|\cdot (d-1)^{c/2} &\ge \tfrac12\,n\,d(d-1)^{c-1}
  \\|E_H| &\ge \frac{n\,d(d-1)^{c-1}}{4\,(d-1)^{c/2}} = \frac{n\,d}{4}(d-1)^{\frac c2-1}
  \stackrel{(\ref{eq:warmup-1})}{\ge} \frac{n\,d}{4}\cdot\frac{\sqrt n}{(d-1)^{\frac{C+3}{4}+1}} \stackrel{d,C=O(1)}{=}\Omega(n^{3/2}),
\end{align*}
as required.
\end{proof}

\subsection{Two Hops with Stretch $k>1$}\label{sec:2hop}
Next, we discuss how our framework handles stretch $k>1$. We prove that on high-girth constant-degree graphs, every
$(k,2)$-hopset requires $\Omega(m^{1+1/(k+1)})$ space. The same exponent also holds for two-hop pivot-based ADOs (\Cref{thm:lb-pivot-beta}).

Notice that unlike in the warmup, since stretch $k>1$ is allowed, $x$ need not lie on the shortest path between $u$ and $v$. Instead, we show that the longer of the two hops still ``goes through'' the midpoint $\tau$ of $P(u,v)$, that is, if $\dG(u,x)\ge\dG(x,v)$, then $\tau$ lies on $P(u,x)$ (\Cref{cl:charge-stretch}).

\begin{lemma}\label{lem:lb}
Fix a real $k\ge1$. Under \Cref{conj:moore}, for arbitrarily large $n$ there are constant-degree $n$-vertex graphs $G$ with $m=\Theta(n)$ edges such that every $(k,2)$-hopset for $G$ requires $\Omega\!\left(m^{1+\frac{1}{k+1}}\right)$ space.
\end{lemma}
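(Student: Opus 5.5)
We follow the three-step charging skeleton of \Cref{lem:exact}. Let $G$ be the graph of \Cref{conj:moore}, with degree $d$ and girth $g\ge 2\log_{d-1}(n)-C$. Choose $c$ to be an even integer with $(k+1)c<g$, for instance the largest even $c$ with $(k+1)c\le g-1$. Then $c\ge \frac{2\log_{d-1}n}{k+1}-O(1)$ and $c\le h=\floor{g/2}-1$. Let $F$ be a $(k,2)$-hopset with hop graph $H$. Fix a pair $\{u,v\}$ with $\dG(u,v)=c$, and let $x$ be the intermediate vertex of a path with at most two hops realizing $d_H^{\le2}(u,v)$; set $x=v$ if this path is a single hop. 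Since hop weights are at least distances, $\dG(u,x)+\dG(x,v)\le kc$. Swapping names if needed, assume $\dG(u,x)\ge\dG(x,v)$, and charge $\{u,v\}$ to the ordered hop $(u,x)$.

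The key step, which I expect to be the main obstacle, is the claim that the midpoint $\tau$ of $P(u,v)$ (so $\dG(u,\tau)=c/2$) lies on $P(u,x)$. First we show that $P(u,x)\cup P(x,v)\cup P(u,v)$ is a tree. These three paths form a closed walk of total length at most $(k+1)c<g$. If their union contained a cycle, that cycle would use only edges of the walk and would therefore have length below $g$, which is impossible. Because every such length is also at most $h$, \Cref{lem:tree} makes these shortest paths unique, so inside this tree they are exactly the tree paths.

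In a tree, the three paths between $u,v,x$ meet at a median vertex $w\in P(u,v)$. We then have $\dG(u,x)=\dG(u,w)+\dG(w,x)$ and $\dG(v,x)=\dG(v,w)+\dG(w,x)$. The assumption $\dG(u,x)\ge\dG(x,v)$ therefore gives $\dG(u,w)\ge\dG(v,w)$, and so $\dG(u,w)\ge c/2$. It follows that $\tau$ lies on the segment of $P(u,v)$ between $u$ and $w$, which is contained in $P(u,x)$. The one point needing care is making the tree/median argument rigorous, including the degenerate case $x=v$, where $w=v$.

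For the counting, fix an ordered hop $(u,x)$. Since $\dG(u,x)\le kc\le h$, the path $P(u,x)$ is unique and determines $\tau$ as its vertex at distance $c/2$ from $u$. Every charging $v$ has $\tau\in P(u,v)$ and $\dG(\tau,v)=c/2$. By the last part of \Cref{lem:tree}, applied with $a=\tau$ and $b=u$, using $\dG(u,\tau)+c/2=c\le h$, there are at most $(d-1)^{c/2}$ such $v$.

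By \Cref{lem:tree}, there are $\frac12 nd(d-1)^{c-1}$ pairs at distance $c$. Hence
\[
2|E_H|(d-1)^{c/2}\ge \tfrac12 nd(d-1)^{c-1},
\]
which gives $|E_H|=\Omega\big(n(d-1)^{c/2}\big)=\Omega\big(n^{1+\frac1{k+1}}\big)$ after substituting the lower bound on $c$ and using $d,C,k=O(1)$. Finally, $|F|\ge|E_H|-m$ and $m=\Theta(n)$, which yields $\Omega\big(m^{1+\frac{1}{k+1}}\big)$.
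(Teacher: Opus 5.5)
Your proof is correct. It uses the same three-step skeleton as the paper: charge the pair to the longer hop $(u,x)$, anchor the count at the midpoint $\tau$, and get at most $(d-1)^{c/2}$ pairs per ordered hop. You prove the key claim $\tau\in P(u,x)$ by a different and shorter route.

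The paper proceeds in three stages:
\begin{enumerate}
  \item It first derives $c/2\le\dG(u,x)\le R=\lfloor (k+1)c/2\rfloor\le\lfloor g/2\rfloor-1$.
  \item It works inside the tree $B_R(u)$ to take the last common vertex $z$ of $P(u,\tau)$ and $P(u,x)$.
  \item It gives a separate girth argument that $P(v,z)\cup P(z,x)$ is a shortest $v$--$x$ path, and concludes with the contradiction $\dG(x,v)>\dG(u,x)$.
\end{enumerate}
You instead observe once that the union of the three fixed paths has at most $(k+1)c<g$ edges, so it is a tree, and you read everything off the median $w$. The additive splits of $\dG(u,x)$ and $\dG(v,x)$ at $w$ turn $\dG(u,x)\ge\dG(x,v)$ into $\dG(u,w)\ge c/2$, which also gives $\dG(u,x)\ge c/2$ for free. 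This is the device the paper itself uses later (\Cref{cl:trace} and the tree $T$ in the proof of \Cref{thm:lb-pivot}). It needs neither the upper bound $\dG(u,x)\le R$ nor the tree structure of the ball around $u$.

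One justification in your write-up is false and should be replaced. The inequality $\dG(u,x)\le kc\le h$ fails for every fixed $k>1$ once $n$ is large, since $c\approx g/(k+1)$ gives $kc\approx\frac{k}{k+1}g>g/2$. It is not load-bearing, for two reasons:
\begin{itemize}
  \item $P(u,x)$ denotes a fixed shortest path by the convention of \Cref{sec:prelim}, and your tree argument is carried out with exactly that path. A simple path contained in a tree is the tree path, so no uniqueness is needed there either. Hence $\tau$ is simply the vertex of this fixed path at distance $c/2$ from $u$.
  \item If you want uniqueness anyway, combine $\dG(u,x)\le c+\dG(x,v)$ with $\dG(u,x)+\dG(x,v)\le kc$ to get $\dG(u,x)\le(k+1)c/2<g/2$, as the paper does.
\end{itemize}
The same remark applies to your earlier sentence that every such length is at most $h$. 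Relatedly, your choice of $c$ can exceed $h$ by one when $k=1$ and $g\equiv 1\pmod 4$. This is harmless, because the pair count and the $(d-1)^{c/2}$ bound only need $2c<g$. Alternatively, take the paper's $c=2\lfloor g/(2(k+1))\rfloor-2$, which leaves slack.
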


\begin{proof}
\claimsof{lem:lb}%
Let $G$ be the graph guaranteed by \Cref{conj:moore}. Thus $G$ has constant degree $d$, girth $g\ge 2\log_{d-1}(n)-C$, and $m=\Theta(n)$ edges. Set $c=2\floor{\frac{g}{2(k+1)}}-2$ and $R=\floor{\frac{(k+1)c}{2}}$. Then, for $n$ large enough, $c$ is an even positive integer and $(k+1)c\le g-2(k+1)<g$. Consequently, $c\le R\le\frac g2-(k+1)\le\floor{g/2}-1$. Hence, by \Cref{lem:tree}, balls of radius at most $R$ induce trees, and shortest paths between vertices at distance at most $R$ are unique. Note that:
\begin{equation}\label{eq:main-1}
    \frac c2=\floor{\frac{g}{2(k+1)}}-1\ge\frac{g}{2(k+1)}-2\ge\frac{\log_{d-1}(n)}{k+1}-\frac{C}{2(k+1)}-2.
\end{equation}

Let $F$ be a $(k,2)$-hopset for $G$, and let $H$ be its hop graph. We show that $|E_H|=\Omega(n^{1+1/(k+1)})$, which implies $|F|\ge|E_H|-m=\Omega(m^{1+1/(k+1)})$.

The proof contains the same three steps as the proof of \Cref{lem:exact}, and only the first two change. The vertex $x$ may now be outside $P(u,v)$, so in the first step we only bound $\dG(u,x)$, and in the second step we use the midpoint of $P(u,v)$.
Let $\{u,v\}$ be a pair with $\dG(u,v)=c$. Consider a path with at most two hops in $H$ that realizes $d_H^{\le 2}(u,v)$, and let $x$ be the intermediate vertex (if this path consists of the single hop $(u,v)$, set $x=v$). Assume w.l.o.g.\ that $\dG(u,x)\ge\dG(x,v)$. First, we show that $c/2\le\dG(u,x)\le R$.
Since hop weights never underestimate distances, we have $\dG(u,x)+\dG(x,v)\le d_H^{\le 2}(u,v)\le kc$.
By the triangle inequality, $c=\dG(u,v)\le\dG(u,x)+\dG(x,v)\le 2\dG(u,x)$, and we get that $\dG(u,x)\ge c/2$.
For the upper bound, the triangle inequality gives $\dG(u,x)\le\dG(u,v)+\dG(v,x)=c+\dG(x,v)$. Together with $\dG(u,x)+\dG(x,v)\le kc$, we get that $2\dG(u,x)\le(k+1)c$, i.e., $\dG(u,x)\le\frac{(k+1)c}{2}$.
 Since distances are integers, $\dG(u,x)\le\floor{\frac{(k+1)c}{2}}=R$.
Thus, $(u,x)$ is a hop with $c/2\le\dG(u,x)\le R$, and we charge the pair $\{u,v\}$ to it.

Second, we bound the number of pairs that charge a single hop.

\colorlet{chop}{blue!65!black}       
\colorlet{cpath}{black}              
\colorlet{cbranch}{orange!85!black}  
\colorlet{cxv}{cyan!55!black}        
\colorlet{csub}{orange!80!black}     

\begin{figure}[t]
\centering
\begin{tikzpicture}[
    vert/.style  = {circle, draw=black, fill=white, minimum size=4.5pt, inner sep=0pt,
                    line width=0.65pt},
    bvert/.style = {circle, draw=black, fill=black!12, minimum size=10pt, inner sep=0pt,
                    line width=0.9pt, font=\scriptsize},
    pvert/.style = {circle, draw=black, fill=white, minimum size=9pt, inner sep=0pt,
                    line width=0.9pt, font=\scriptsize},
    tedge/.style = {gray!55, line width=0.8pt},
    pedge/.style = {cpath, line width=1.45pt},
]

\begin{scope}
  \coordinate (uA)   at (0,0);
  \coordinate (a1)   at (0.62,0);
  \coordinate (zA)   at (1.25,0);
  \coordinate (b1)   at (1.88,0.43);
  \coordinate (xA)   at (2.58,0.86);
  \coordinate (a2)   at (1.75,-0.38);
  \coordinate (tauA) at (2.25,-0.76);
  \coordinate (a3)   at (2.72,-1.15);
  \coordinate (vA)   at (3.20,-1.56);

  \draw[tedge] (uA) -- (a1) -- (zA) -- (b1) -- (xA);
  \draw[tedge] (zA) -- (a2) -- (tauA) -- (a3) -- (vA);

  \draw[cbranch, line width=1.35pt, ->] (uA) to[bend left=6] (zA);
  \draw[cbranch, line width=1.35pt, ->] (zA) to[bend left=6] (xA);
  \draw[cxv, line width=1.35pt, ->] (xA) to[bend left=6] (zA);
  \draw[cxv, line width=1.35pt, ->] (zA) to[bend left=6] (tauA);
  \draw[cxv, line width=1.35pt, ->] (tauA) to[bend left=6] (vA);

  \node[vert] at (a1) {};
  \node[vert] at (a2) {};
  \node[vert] at (a3) {};
  \node[vert] at (b1) {};
  \node[bvert] at (uA) {$u$};
  \node[pvert] at (zA) {$z$};
  \node[pvert] at (tauA) {$\tau$};
  \node[pvert, fill=cbranch!18] at (xA) {$x$};
  \node[bvert] at (vA) {$v$};

  \node[font=\scriptsize, cbranch] at (1.40,0.73) {$P(u,x)$};
  \node[font=\scriptsize, cxv] at (3.12,-0.42) {$P(x,v)$};
  \node[font=\scriptsize, align=center] at (1.68,1.36)
       {If $\tau\notin P(u,x)$};
  \node[font=\scriptsize] at (1.68,-2.08)
       {$P(u,x)\cup P(x,v)=u\to z\to x\to z\to v$};
  \node[font=\scriptsize, cbranch] at (1.68,-2.48)
       {$\dG(v,x)>\dG(u,x)$: contradiction};
  \node[font=\small] at (1.68,-3.24) {\textbf{(a)} Proving $\tau\in P(u,x)$};
\end{scope}

\begin{scope}[xshift=7.25cm]
  \coordinate (uB)   at (0,0);
  \coordinate (m1)   at (0.48,0);
  \coordinate (m2)   at (0.96,0);
  \coordinate (tauB) at (1.44,0);
  \coordinate (s1)   at (0.92,-0.78);
  \coordinate (s2)   at (1.76,-0.78);
  \coordinate (q1)   at (0.54,-1.51);
  \coordinate (q2)   at (1.13,-1.51);
  \coordinate (q3)   at (1.72,-1.51);
  \coordinate (q4)   at (2.30,-1.51);
  \coordinate (r1)   at (0.34,-2.22);
  \coordinate (r2)   at (0.98,-2.22);
  \coordinate (vB)   at (1.72,-2.22);
  \coordinate (r4)   at (2.48,-2.22);
  \coordinate (cm1)  at (2.23,0.34);
  \coordinate (xB)   at (3.18,0.90);

  \fill[csub!13] (tauB) -- (-0.10,-2.48) -- (2.95,-2.48) -- cycle;
  \draw[csub!58, line width=0.75pt] (tauB) -- (-0.10,-2.48) -- (2.95,-2.48) -- cycle;

  \draw[tedge] (tauB) -- (s1);
  \draw[tedge] (s1) -- (q1); \draw[tedge] (s1) -- (q2);
  \draw[tedge] (s2) -- (q4);
  \draw[tedge] (q1) -- (r1); \draw[tedge] (q2) -- (r2); \draw[tedge] (q4) -- (r4);
  \draw[pedge] (uB) -- (m1) -- (m2) -- (tauB);
  \draw[pedge] (tauB) -- (s2) -- (q3) -- (vB);

  \draw[cbranch, line width=1.45pt] (uB) to[bend left=7] (tauB);
  \draw[cbranch, line width=1.45pt] (tauB) -- (cm1) -- (xB);
  \draw[chop, dashed, line width=1.25pt] (uB) to[bend left=28] (xB);

  \node[vert] at (m1) {}; \node[vert] at (m2) {};
  \node[vert] at (s1) {}; \node[vert] at (s2) {};
  \node[vert] at (q1) {}; \node[vert] at (q2) {}; \node[vert] at (q3) {}; \node[vert] at (q4) {};
  \node[vert] at (r1) {}; \node[vert] at (r2) {}; \node[vert] at (r4) {};
  \node[vert] at (cm1) {};
  \node[bvert] at (uB) {$u$};
  \node[pvert] at (tauB) {$\tau$};
  \node[pvert, fill=cbranch!18] at (xB) {$x$};
  \node[bvert] at (vB) {$v$};

  \node[chop, font=\scriptsize] at (1.55,1.40) {charged hop $(u,x)$};
  \node[font=\scriptsize, gray!60!black] at (0.72,-0.28) {$c/2$};
  \node[font=\scriptsize, cbranch] at (2.82,0.19) {$P(u,x)$};
  \node[font=\scriptsize, gray!60!black] at (2.78,-1.38) {$c/2$};
  \node[csub!75!black, font=\scriptsize, align=center] at (1.45,-2.76)
       {at most $(d-1)^{c/2}$\\choices for $v$};
  \node[font=\small] at (1.45,-3.24) {\textbf{(b)} Counting the charged pairs};
\end{scope}

\end{tikzpicture}
\caption{%
The two parts of the charging argument in \Cref{cl:charge-stretch}.
\textbf{(a)} If $\tau\notin P(u,x)$, the two shortest paths follow
$u\to z\to x\to z\to v$, so $\dG(v,x)>\dG(u,x)$, a contradiction.
\textbf{(b)} Hence $\tau$ is fixed by the hop, leaving at most $(d-1)^{c/2}$ choices for $v$.%
}
\label{fig:charge-stretch}
\end{figure}

\begin{claim}\label{cl:charge-stretch}
  Each hop $(u,x)$ such that $c/2 \le \dG(u,x)\le R$ is charged at most $(d-1)^{c/2}$ times.
\end{claim}
\begin{proof}
Let $v$ be a vertex such that the pair $\{u,v\}$ charges the hop $(u,x)$. Thus $\dG(u,v)=c$, $\dG(u,x)+\dG(x,v)\le kc$, and $\dG(u,x)\ge\dG(x,v)$. Since $c$ is even, $P(u,v)$ has a midpoint $\tau$, i.e., the vertex with $\dG(u,\tau)=\dG(\tau,v)=c/2$. The proof contains two steps.

First, we show that $\tau\in P(u,x)$. See \Cref{fig:charge-stretch}\textbf{(a)}. Assume for the sake of contradiction that $\tau\notin P(u,x)$. Both $P(u,\tau)$ and $P(u,x)$ start at $u$ and, since $\dG(u,\tau)=c/2\le R$ and $\dG(u,x)\le R$, lie in the ball $B_R(u)$, which induces a tree by \Cref{lem:tree}. Hence they share a common prefix, whose last vertex we denote by $z$. As $\tau\notin P(u,x)$, we have $z\ne\tau$, so $\dG(u,z)<\dG(u,\tau)=c/2$. Since $z\in P(u,\tau)\subseteq P(u,v)$, we also have $\dG(z,v)=c-\dG(u,z)>c/2$.

Because $z$ lies on both $P(u,x)$ and $P(u,v)$, the walk $P(v,z)\cup P(z,x)$ has length
$\dG(v,z)+\dG(z,x)=(c-\dG(u,z))+(\dG(u,x)-\dG(u,z))\le c+\dG(u,x)$.
We claim that this walk is a shortest $v$--$x$ path. Suppose otherwise, i.e., $\dG(v,x)<\dG(v,z)+\dG(z,x)$. Then the walk together with a shortest $v$--$x$ path contains a cycle, so
\[
g\le\dG(v,x)+\dG(v,z)+\dG(z,x)\le\dG(v,x)+c+\dG(u,x).
\]
Since $g>(k+1)c$, this gives $\dG(u,x)+\dG(x,v)\ge g-c>kc$, contradicting $\dG(u,x)+\dG(x,v)\le kc$. Hence $\dG(v,x)=\dG(v,z)+\dG(z,x)$, and since $\dG(u,z)<c/2<\dG(v,z)$,
\[
  \dG(x,v)=\dG(v,z)+\dG(z,x)>\dG(u,z)+\dG(z,x)=\dG(u,x),
\]
contradicting $\dG(u,x)\ge\dG(x,v)$. Therefore $\tau\in P(u,x)$.

Second, we bound the number of vertices $v$ that charge $(u,x)$. See \Cref{fig:charge-stretch}\textbf{(b)}. Since $\dG(u,x)\ge c/2$, the path $P(u,x)$ has a unique vertex at distance $c/2$ from $u$. By the first step this vertex is $\tau$, which is therefore the same for every $v$ that charges $(u,x)$. Each such $v$ satisfies $\tau\in P(u,v)$ and $\dG(\tau,v)=c/2$, so by \Cref{lem:tree} (applied with $\dG(u,\tau)+c/2=c\le\floor{g/2}-1$) it is one of the $(d-1)^{c/2}$ vertices at distance $c/2$ from $\tau$ in the subtree hanging off $\tau$ away from $u$. Hence $(u,x)$ is charged at most $(d-1)^{c/2}$ times, as required.
\end{proof}

Third, since $c\le\floor{g/2}-1$, by \Cref{lem:tree} there are $\tfrac12\,n\,d(d-1)^{c-1}$ pairs at distance $c$, each pair charged to one of $2|E_H|$ hops (two directions per hop). Since each hop is charged at most $(d-1)^{c/2}$ times (\Cref{cl:charge-stretch}), we get:
\begin{align*}
  2|E_H|\cdot (d-1)^{c/2} &\ge \tfrac12\,n\,d(d-1)^{c-1}
  \\|E_H| &\ge \frac{n\,d(d-1)^{c-1}}{4\,(d-1)^{c/2}} = \frac{n\,d}{4}(d-1)^{\frac c2-1}
  \\&\stackrel{(\ref{eq:main-1})}{\ge} \frac{n\,d}{4}\cdot\frac{n^{1/(k+1)}}{(d-1)^{\frac{C}{2(k+1)}+3}} \stackrel{d,C,k=O(1)}{=} \Omega\!\left(n^{1+\frac1{k+1}}\right),
\end{align*}
as required.
\end{proof}

\subsection{General Hopbound}\label{sec:betahop}
Next, we generalize our techniques to every hopbound $\beta >2$. 
Notice that unlike in the case $\beta=2$, a path with $\beta$ hops may contain hops that are not incident to $u$ or to $v$, so we can no longer charge a pair to a hop that starts at one of its endpoints. Instead, we show that the path traces $P(u,v)$, and hence one of its hops $(x,y)$ covers at least $\sigma=c/\beta$ consecutive edges of $P(u,v)$. 
We charge the pair to this hop together with a \emph{window}, which is a subpath of $P(x,y)$ with $\sigma$ edges, taken with a direction that tells which of its ends is closer to $u$. We adapt this window argument to pivot jumps in \Cref{sec:pivot-lower-bounds}.
\colorlet{clong}{orange!85!black}   
\colorlet{cwin}{red!75!black}       
\colorlet{csub}{orange!82!black}    
\colorlet{cpath}{black}             

\begin{figure}[t]
\centering
\begin{tikzpicture}[
    xscale=1.05,
    vert/.style  = {circle, draw=black, fill=white, minimum size=5pt, inner sep=0pt, line width=0.7pt},
    bvert/.style = {circle, draw=black, fill=black!12, minimum size=11pt, inner sep=0pt, line width=1pt, font=\scriptsize},
    wvert/.style = {circle, draw=cwin, fill=white, minimum size=9pt, inner sep=0pt, line width=1.1pt, font=\scriptsize},
    pedge/.style = {cpath, line width=1.5pt},
]

\foreach \i in {0,...,8} {\coordinate (n\i) at (\i,0);}
\coordinate (uu) at (-1,0);   
\draw[pedge] (uu) -- (n8);
\foreach \i in {0,1,3,4,6,7} {\node[vert] at (n\i) {};}

\coordinate (xhi) at (1.3,1.6);
\coordinate (yhi) at (6.7,1.6);
\draw[clong, line width=1.8pt]
     (xhi) to[out=-40,in=150] (n2) -- (n6) to[out=30,in=-140] (yhi);
\fill[clong] (xhi) circle (2.1pt);
\fill[clong] (yhi) circle (2.1pt);
\node[clong, font=\scriptsize] at (0.95,1.86) {$x$};
\node[clong, font=\scriptsize] at (7.05,1.86) {$y$};
\node[clong, font=\scriptsize, align=center] at (4,2.2)
   {hop $(x,y)$: $P(x,y)$ covers $\ge c/\beta=\sigma$ edges of $P(u,v)$};

\draw[cwin, line width=3pt, line cap=round] (n2) -- (n5);
\node[wvert] (a) at (n2) {$q$};
\node[wvert] (b) at (n5) {$q'$};
\node[cwin, font=\scriptsize] at (3.5,-0.5) {window $\pi$, $|\pi|=\sigma$};

\node[bvert] at (uu) {$u$};
\node[bvert] at (n8) {$v$};

\fill[csub!16] (a) -- (-0.6,-1.7) -- (2.0,-1.7) -- cycle;
\draw[csub!70, line width=0.8pt] (a) -- (-0.6,-1.7) -- (2.0,-1.7) -- cycle;
\fill[csub!16] (b) -- (5.0,-1.7) -- (7.6,-1.7) -- cycle;
\draw[csub!70, line width=0.8pt] (b) -- (5.0,-1.7) -- (7.6,-1.7) -- cycle;
\node[csub!75!black, font=\scriptsize, align=center] at (0.7,-2.05)
   {$\le(d-1)^{\dG(u,q)}$\\choices for $u$};
\node[csub!75!black, font=\scriptsize, align=center] at (6.3,-2.05)
   {$\le(d-1)^{c-\sigma-\dG(u,q)}$\\choices for $v$};

\end{tikzpicture}
\caption{%
The charging argument of \Cref{lem:lb-beta} (\Cref{cl:perwindow}): at most $(c+1)(d-1)^{c-\sigma}$ pairs can be charged to the hop $(x,y)$ through the window $\pi$.
}
\label{fig:beta-trace}
\end{figure}

\begin{lemma}\label{lem:lb-beta}
Fix a real $k\ge1$ and an integer $\beta\ge2$. Under \Cref{conj:moore}, for arbitrarily large $n$ there are constant-degree $n$-vertex graphs $G$ with $m=\Theta(n)$ edges such that every $(k,\beta)$-hopset for $G$ requires $\Omega\!\left(\frac{1}{\log^2n}\cdot m^{1+\frac{2}{\beta(k+1)}}\right)$ space.
\end{lemma}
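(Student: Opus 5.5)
We follow the three-step charging framework of \Cref{lem:exact} and \Cref{lem:lb}, with the window argument sketched in the technical overview. Let $G$ be the graph of \Cref{conj:moore}, with degree $d$ and girth $g\ge 2\log_{d-1}(n)-C$. We choose $c$ as a multiple of $\beta$, roughly $c=\beta\floor{\frac{g}{\beta(k+1)}}-\beta$, so that $(k+1)c<g$ and $\sigma\defeq c/\beta\ge \frac{2\log_{d-1}n}{\beta(k+1)}-O(1)$. Let $F$ be a $(k,\beta)$-hopset with hop graph $H$, and fix a pair $\{u,v\}$ with $\dG(u,v)=c$.

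\textbf{Step 1 (tracing and charging).} Take a path $u=x_0,x_1,\dots,x_j=v$ in $H$ with $j\le\beta$ and weight at most $kc$. Replace each hop by $P(x_{i-1},x_i)$; this gives a walk $W$ in $G$ of length at most $kc$. The closed walk $W\cup P(v,u)$ has length at most $(k+1)c<g$, so its edges span a forest: any cycle would have length at most $(k+1)c$. Hence every edge of $P(u,v)$ is traversed by $W$. Moreover, since $W$ lives in a tree, the edges of $P(u,v)$ are covered in order along $W$. More precisely, in the tree, the walk from $u$ to $v$ must cross each edge of the unique $u$--$v$ path, and we assign each edge of $P(u,v)$ to a hop whose path crosses it.

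We need one hop whose path $P(x_{i-1},x_i)$ contains a \emph{contiguous} segment of at least $\sigma$ edges of $P(u,v)$. For this we use the fact that $P(x_{i-1},x_i)\cap P(u,v)$ is a subpath, since the intersection of two paths in a tree is a path. We then argue that the union of these at most $\beta$ subpaths covers all $c$ edges of $P(u,v)$, so by pigeonhole one of them has at least $c/\beta=\sigma$ edges. We charge $\{u,v\}$ to the triple consisting of the ordered hop $(x_{i-1},x_i)$, a subpath $\pi$ of $\sigma$ edges of $P(x_{i-1},x_i)\cap P(u,v)$ (say its first $\sigma$ edges), and the direction of $\pi$ indicating which end $q$ is nearer to $u$.

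\textbf{Step 2 (counting per hop).} A fixed hop $(x,y)$ with $\dG(x,y)\le kc$ has at most $kc+1$ subpaths of length $\sigma$, and two orientations each. This gives $O(\log n)$ windows. Fix a window $\pi$ from $q$ to $q'$ and the offset $a=\dG(u,q)\in\{0,\dots,c-\sigma\}$; there are $c+1$ choices of $a$. Then $u$ lies at distance $a$ from $q$ in the subtree hanging off $q$ away from $q'$, and $v$ lies at distance $c-\sigma-a$ from $q'$ in the subtree hanging off $q'$ away from $q$. All of this happens within balls of radius $\le c\le\floor{g/2}-1$, so \Cref{lem:tree} applies and bounds the number of choices by $(d-1)^a\cdot(d-1)^{c-\sigma-a}=(d-1)^{c-\sigma}$. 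So each ordered hop is charged at most $O(\log^2 n)(d-1)^{c-\sigma}$ times.

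\textbf{Step 3 (division).} There are $\frac12 nd(d-1)^{c-1}$ pairs at distance $c$. Hence
\[
|E_H|\ge\Omega\!\left(\frac{n(d-1)^{\sigma}}{\log^2 n}\right)=\Omega\!\left(\frac{n^{1+\frac{2}{\beta(k+1)}}}{\log^2n}\right).
\]
Finally, $|F|\ge|E_H|-m$ gives the claim.

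The main obstacle is the contiguity claim in Step 1: we must show that each hop's contribution to $P(u,v)$ is a single segment and that the segments together cover $P(u,v)$. This rests on the tree structure of the subgraph spanned by $W\cup P(u,v)$. We would first establish that this subgraph is a tree, and only then use that each $P(x_{i-1},x_i)$ is the unique path in that tree.
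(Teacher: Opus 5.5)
Your proposal is correct and follows the paper's proof essentially step for step. It uses the same choice $c=\beta\sigma$ with $\sigma=\lfloor g/(\beta(k+1))\rfloor-1$, the same tracing argument through the tree $W\cup P(u,v)$ with contiguous per-hop segments and pigeonhole, the same $2(kc+1)$ windows and $(c+1)(d-1)^{c-\sigma}$ pairs per window, and the same final division (charging ordered rather than unordered hops only costs a harmless factor of $2$).
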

\begin{proof}
\claimsof{lem:lb-beta}%
Let $G$ be the graph guaranteed by \Cref{conj:moore}. Thus $G$ has constant degree $d$, girth $g\ge 2\log_{d-1}(n)-C$, and $m=\Theta(n)$ edges. Set
$\sigma=\Big\lfloor\tfrac{g}{\beta(k+1)}\Big\rfloor-1$ and $c=\beta\sigma$.
Then, for $n$ large enough, $\sigma\ge1$ and $(k+1)c=(k+1)\beta\sigma\le g-\beta(k+1)<g$. In particular, $c\le\frac g{k+1}-\beta\le\frac g2-2\le\floor{g/2}-1$, so by \Cref{lem:tree} balls of radius at most $c$ induce trees, and shortest paths between vertices at distance at most $c$ are unique. Note that:
\begin{equation}\label{eq:beta-1}
    \sigma=\floor{\tfrac{g}{\beta(k+1)}}-1\ge\frac{g}{\beta(k+1)}-2\ge\frac{2\log_{d-1}(n)-C}{\beta(k+1)}-2.
\end{equation}

Let $F$ be a $(k,\beta)$-hopset for $G$, and let $H$ be its hop graph. We show that $|E_H|=\Omega(n^{1+\frac{2}{\beta(k+1)}}/\log^2 n)$, which implies $|F|\ge|E_H|-m=\Omega(m^{1+\frac{2}{\beta(k+1)}} \cdot \tfrac{1}{\log^2 n})$.

The proof contains the same three steps as the previous proofs. In the first step we show that the hops trace $P(u,v)$, and charge the pair to a hop together with a window. In the second step we bound the number of windows of a hop, and the number of pairs charged to it through each window.
Let $\{u,v\}$ be a pair with $\dG(u,v)=c$. Let $(u=x_0,x_1),(x_1,x_2),\dots,(x_{j-1},x_j=v)$, with $j\le\beta$, be the hops of a path realizing $d_H^{\le\beta}(u,v)$, and denote by $P_H^{\le \beta}(u,v)$ the walk from $u$ to $v$ obtained by concatenating the shortest paths $P(x_0,x_1),P(x_1,x_2),\dots,P(x_{j-1},x_j)$ of these hops. Since hop weights never underestimate distances, this walk has at most $\sum_{i=1}^{j}\dG(x_{i-1},x_i)\le\sum_{i=1}^{j}w_H(x_{i-1},x_i)=d_H^{\le\beta}(u,v)\le kc$ edges.
First, we show that the walk $P_H^{\le\beta}(u,v)$ traces $P(u,v)$, that is, it covers every edge of $P(u,v)$.

\begin{claim}\label{cl:trace}
$P(u,v)\subseteq P_H^{\le\beta}(u,v)$.
\end{claim}
\begin{proof}
The edge set $P_H^{\le\beta}(u,v)\cup P(u,v)$ has at most $kc+c=(k+1)c<g$ edges. Hence it contains no cycle, so it forms a tree $T$. In a tree there is a unique simple $u$--$v$ path, and $P(u,v)$ is a simple $u$--$v$ path contained in $T$, so it is that unique path. Finally, every $u$--$v$ walk in a tree traverses every edge of the unique simple $u$--$v$ path. Since $P_H^{\le\beta}(u,v)$ is a walk from $u$ to $v$ inside $T$, it traverses every edge of $P(u,v)$. Thus $P(u,v)\subseteq P_H^{\le\beta}(u,v)$, as required.
\end{proof}

By \Cref{cl:trace}, the shortest paths $P(x_0,x_1),\dots,P(x_{j-1},x_j)$ of the hops together cover all $c$ edges of $P(u,v)$. Moreover, the edges of $P(u,v)$ covered by the shortest path of a single hop form a \emph{contiguous} segment of $P(u,v)$, since $P(u,v)$ and the shortest path of the hop are both simple paths in the tree $T$ of \Cref{cl:trace}, and the intersection of two paths in a tree is a path. As $j\le\beta$, some hop $(x,y)\in E_H$ covers a contiguous segment of $P(u,v)$ of length at least $c/\beta=\sigma$. Note that $\dG(x,y)\le kc$, since the walk $P_H^{\le\beta}(u,v)$ has at most $kc$ edges. Let $\pi$ be the first $\sigma$ edges of that segment, ordered from $u$, so that $|\pi|=\sigma$ and $\pi\subseteq P(u,v)\cap P(x,y)$. We charge the pair $\{u,v\}$ to the hop $(x,y)$ through the window $\pi$.

Second, we bound the number of pairs that can charge a single hop, in two parts: a hop is charged through only a few windows, and through each window it is charged by only a few pairs. We first bound the windows.

\begin{claim}\label{cl:windows}
Each hop $(x,y)$ with $\dG(x,y)\le kc$ is charged through at most $2(kc+1)$ distinct windows $\pi$.
\end{claim}
\begin{proof}
A window charged to $(x,y)$ is a subpath with $\sigma$ edges of the shortest path $P(x,y)$, whose length
is $\dG(x,y)\le kc$. Such a subpath of a fixed path of length
at most $kc$ is determined by its starting offset and its direction, giving at most
$2(kc-\sigma+1)\le 2(kc+1)$ choices.
\end{proof}

Next, we bound the number of pairs that are charged to a hop through a fixed window.
\begin{claim}\label{cl:perwindow}
Every hop $(x,y)$ is charged through every window $\pi$ by at most $(c+1)(d-1)^{c-\sigma}$ pairs.
\end{claim}
\begin{proof}
Let $q$ and $q'$ be the first and the last vertices of $\pi$, so $\pi=P(q,q')$ and $\dG(q,q')=\sigma$.
If $\{u,v\}$ is charged to $(x,y)$ through $\pi$, then $\pi\subseteq P(u,v)$, and
$P(u,v)=P(u,q)\cup\pi\cup P(q',v)$ with $\dG(u,q)+\sigma+\dG(q',v)=c$.

Since $\sigma+\dG(u,q)\le c\le\floor{g/2}-1$, by \Cref{lem:tree}, there are at most $(d-1)^{\dG(u,q)}$ choices for $u$ (the vertices at distance $\dG(u,q)$ from $q$ in the subtree hanging off $q$ away from $q'$). Symmetrically, there are at most $(d-1)^{\dG(q',v)}$ choices for $v$. 
As $\dG(u,q)+\dG(q',v)=c-\sigma$, for each fixed value of $\dG(u,q)$ the number of charging pairs is at most
\[(d-1)^{\dG(u,q)} \cdot (d-1)^{\dG(q',v)}=(d-1)^{c-\sigma}.\]
Since $\dG(u,q)$ ranges over $\{0,\dots,c-\sigma\}$, the total is at most $(c-\sigma+1)(d-1)^{c-\sigma}\le(c+1)(d-1)^{c-\sigma}$, as required. See \Cref{fig:beta-trace} for an illustration.
\end{proof}

Third, since $c\le\floor{g/2}-1$, by \Cref{lem:tree} there are $\tfrac12\,n\,d(d-1)^{c-1}$ pairs at distance $c$, each charged to one of the $|E_H|$ hops through one of at most $2(kc+1)$ windows (\Cref{cl:windows}), and each hop is charged through each window by at most $(c+1)(d-1)^{c-\sigma}$ pairs (\Cref{cl:perwindow}).
Since $c=\beta\sigma\le\frac{g}{k+1}=O(\log n)$, both $kc+1$ and $c+1$ are $O(\log n)$, so we get:
\begin{align*}
  2(kc+1)(c+1)(d-1)^{c-\sigma}\,|E_H| &\ge \tfrac12\,n\,d(d-1)^{c-1}
  \\|E_H| &\ge \frac{n\,d(d-1)^{c-1}}{4\,(kc+1)(c+1)(d-1)^{c-\sigma}} = \frac{n\,d}{4\,(kc+1)(c+1)(d-1)}\,(d-1)^{\sigma}
  \\&\stackrel{(\ref{eq:beta-1})}{\ge} \frac{n\,d}{4\,(kc+1)(c+1)(d-1)}\cdot\frac{n^{2/(\beta(k+1))}}{(d-1)^{\frac{C}{\beta(k+1)}+2}}
  \\&\stackrel{d,C,k,\beta=O(1)}{=}\Omega\!\left(n^{1+\frac{2}{\beta(k+1)}} \cdot \tfrac{1}{\log^2 n}\right),
\end{align*}
as required.
\end{proof}

\section{Lower Bounds for Pivot-Based ADOs}\label{sec:pivot-lower-bounds}
We now extend the charging arguments of \Cref{sec:robust-hopsets} to pivot-based ADOs. We first prove a tight lower bound for one hop, and then generalize it to any hopbound.

\subsection{One Hop}\label{sec:pivot}
In this subsection, we prove the main result of the paper, a tight lower bound for pivot-based $1$-hop ADOs. A pivot-based $1$-hop ADO estimates $\dG(u,v)$ by the weight of a walk $u\to p_u \to p_v \to v$, where $p_u$ is a pivot of $u$, $p_v$ is a pivot of $v$, and $p_u\to p_v$ is a stored \emph{hop}. See \Cref{fig:pivot-model}\textbf{(a)}. The steps $u\to p_u$ and $p_v\to v$ are \emph{pivot jumps}: they do not count as hops. Formally:

\begin{definition}[$r$-pivot $1$-hop $k$-stretch ADO]\label{def:piv-ado}
Let $r\ge1$ be an integer, and let $k\ge1$. An \emph{$r$-pivot $1$-hop $k$-stretch ADO} for $G$ consists of a weighted undirected graph $H=(V,E_H,w_H)$ with $w_H(x,y)\ge\dG(x,y)$ for every $(x,y)\in E_H$, and a set $\mathcal{P}(v)\subseteq V$ of at most $r$ \emph{pivots} for every $v\in V$, with $v\in\mathcal{P}(v)$, such that for all $u,v\in V$:
\begin{equation}\label{eq:pivot-estimate}
  \dG(u,v)\;\le\;d_{H,\mathcal{P}}^{\le1}(u,v)\;:=\;\min_{p_u\in\mathcal{P}(u),\,p_v\in\mathcal{P}(v)}\bigl(w_H(u,p_u)+w_H(p_u,p_v)+w_H(p_v,v)\bigr)\;\le\;k\,\dG(u,v),
\end{equation}
where $w_H(a,a)=0$, and $w_H(x,y)=\infty$ if $x\ne y$ and $(x,y)\notin E_H$. The \emph{space} of the oracle is $|E_H|$.
\end{definition}

Note that the oracle is not required to compute the minimum in \eqref{eq:pivot-estimate}. We only require that a walk $u\to p_u\to p_v\to v$ of weight at most $k\dG(u,v)$ exists. Hence, our lower bound holds no matter how the query algorithm searches for such a walk or how much time the search takes.

We remark that this model captures the constructions attaining the best known stretch--space trade-offs for general sparse graphs with $\Ot(1)$ query time (see \Cref{sec:models}). For example, the Thorup--Zwick oracle~\cite{DBLP:journals/jacm/ThorupZ05} with stretch $2\ell-1$ is an $\ell$-pivot $1$-hop ADO that uses $O(\ell n^{1+1/\ell})$ space. On sparse graphs with $m=\Theta(n)$, \Cref{thm:lb-pivot}, with $k=2\ell-1$ and $r=\ell$, shows that this is optimal within a factor of $O(\ell^3\log n)$.

\thmlbpivot*
\colorlet{cpiv}{green!50!black}      
\colorlet{chop}{blue!65!black}       
\colorlet{clong}{orange!85!black}    
\colorlet{cwin}{red!75!black}        
\colorlet{csub}{orange!82!black}     
\colorlet{cpath}{black}              

\begin{figure}[t]
\centering
\begin{tikzpicture}[
    vert/.style  = {circle, draw=black, fill=white, minimum size=4.5pt, inner sep=0pt, line width=0.65pt},
    bvert/.style = {circle, draw=black, fill=black!12, minimum size=10pt, inner sep=0pt, line width=0.9pt, font=\scriptsize},
    pvert/.style = {circle, draw=black, fill=white, minimum size=9pt, inner sep=0pt, line width=0.9pt, font=\scriptsize},
    ivert/.style = {circle, draw=cpiv, fill=cpiv!15, minimum size=9pt, inner sep=0pt, line width=0.9pt, font=\scriptsize},
    tedge/.style = {gray!55, line width=0.8pt},
    pedge/.style = {cpath, line width=1.45pt},
    pivedge/.style = {cpiv, dashed, line width=1.1pt},
]

\begin{scope}
  \foreach \i in {0,...,4} {\coordinate (s\i) at (0.75*\i,0);}
  \fill[cpiv!10] (-0.25,-0.28) rectangle (1.62,0.28);
  \draw[decorate, decoration={brace, amplitude=4pt, mirror}, gray!60!black]
       (0,-0.36) -- (1.5,-0.36) node[midway, below=3pt, font=\scriptsize] {$t$};

  \coordinate (br3) at (2.60,0.80);  \coordinate (a3) at (2.95,1.75);

  \draw[pedge] (s0) -- (s4);
  \draw[clong, line width=1.35pt] (s0) to[bend left=8] (s2) to[bend left=8] (s3) -- (br3) -- (a3);

  \draw[pivedge] (s0) to[out=95, in=175] (a3);

  \node[vert] at (s1) {}; \node[vert] at (s3) {};
  \node[vert] at (br3) {};
  \node[bvert] at (s0) {$u$};
  \node[bvert] at (s4) {$v$};
  \node[pvert] at (s2) {$w$};
  \node[ivert, draw=clong, fill=clong!18] at (a3) {};
  \node[clong, font=\scriptsize] at (2.95,2.07) {$p_u$};
  \node[cpiv, font=\scriptsize, align=center] at (-0.55,1.3) {pivot\\jump};
  \node[clong, font=\scriptsize] at (3.6,0.95) {$P(u,p_u)$};

  \node[font=\small] at (1.40,-3.35) {\textbf{(a)} an aligned pivot};
\end{scope}

\begin{scope}[xshift=6.3cm]
  \foreach \i in {0,...,7} {\coordinate (n\i) at (0.9*\i,0);}
  \coordinate (aa) at (0.55,1.25);  \coordinate (ma) at (0.80,0.62);
  \coordinate (bb) at (5.75,1.25);  \coordinate (mb) at (5.50,0.62);

  \fill[cpiv!10] (-0.25,-0.22) rectangle (1.8,0.22);
  \fill[cpiv!10] (4.5,-0.22) rectangle (6.55,0.22);

  \fill[csub!13] (n2) -- (0.35,-1.95) -- (2.85,-1.95) -- cycle;
  \draw[csub!58, line width=0.75pt] (n2) -- (0.35,-1.95) -- (2.85,-1.95) -- cycle;
  \fill[csub!13] (n5) -- (3.45,-1.95) -- (5.95,-1.95) -- cycle;
  \draw[csub!58, line width=0.75pt] (n5) -- (3.45,-1.95) -- (5.95,-1.95) -- cycle;

  \draw[pedge] (n0) -- (n7);
  \draw[tedge] (n1) -- (ma) -- (aa);
  \draw[tedge] (n6) -- (mb) -- (bb);

  \draw[clong, line width=1.35pt]
       (aa) -- (ma) -- (n1) to[bend left=10] (n6) -- (mb) -- (bb);

  \draw[cwin, line width=3pt, line cap=round] (n2) -- (n5);

  \draw[chop, dashed, line width=1.25pt] (aa) to[bend left=22] (bb);
  \draw[pivedge] (n0) to[bend left=30] (aa);
  \draw[pivedge] (bb) to[bend left=30] (n7);

  \node[vert] at (n3) {}; \node[vert] at (n4) {};
  \node[vert] at (ma) {}; \node[vert] at (mb) {};
  \node[bvert] at (n0) {$u$};
  \node[bvert] at (n7) {$v$};
  \node[pvert] at (n1) {};
  \node[pvert] at (n6) {};
  \node[pvert, draw=cwin] at (n2) {};
  \node[pvert, draw=cwin] at (n5) {};
  \node[ivert] at (aa) {$p_u$};
  \node[ivert] at (bb) {$p_v$};

  \node[font=\scriptsize] at (0.55,0.32) {$z_u$};
  \node[font=\scriptsize] at (5.78,0.32) {$z_v$};
  \node[cwin, font=\scriptsize] at (1.80,-0.32) {$q_u$};
  \node[cwin, font=\scriptsize] at (4.50,-0.32) {$q_v$};
  \node[cwin, font=\scriptsize] at (3.15,-0.32) {$Q$};
  \node[gray!60!black, font=\scriptsize] at (0.75,-0.36) {$t$};
  \node[gray!60!black, font=\scriptsize] at (5.55,-0.36) {$t$};
  \node[chop, font=\scriptsize] at (3.15,2.05) {charged hop $(p_u,p_v)$};
  \node[clong, font=\scriptsize] at (3.15,0.55) {$P(p_u,p_v)$};
  \node[cpiv, font=\scriptsize] at (-0.25,1.05) {pivot};
  \node[cpiv, font=\scriptsize] at (6.55,1.05) {pivot};

  \node[csub!75!black, font=\scriptsize, align=center] at (1.60,-2.35)
       {$\le(d-1)^{t}$\\choices for $u$};
  \node[csub!75!black, font=\scriptsize, align=center] at (4.70,-2.35)
       {$\le(d-1)^{t}$\\choices for $v$};
  \node[font=\small] at (3.15,-3.35) {\textbf{(b)} charging a non-aligned pair};
\end{scope}

\end{tikzpicture}
\caption{%
The charging argument of \Cref{thm:lb-pivot}. \textbf{(a)} An aligned pivot (\Cref{cl:piv-good}). \textbf{(b)} The hop $(p_u,p_v)$ covers $Q$ (\Cref{cl:piv-cover,cl:piv-count}).%
}
\label{fig:pivot}
\end{figure}

\begin{proof}
\claimsof{thm:lb-pivot}%
Let $G$ be the graph guaranteed by \Cref{conj:moore}. Thus $G$ has constant degree $d$, girth $g\ge 2\log_{d-1}(n)-C$, and $m=\Theta(n)$ edges. Set $c=\floor{\frac{g}{k+1}}-1$ and $t=1+\ceil{\log_{d-1}(4r)}$.
 Then $(k+1)c\le g-(k+1)<g$. In particular $c\le\frac g2-1$, so $c\le\floor{g/2}-1$, and by \Cref{lem:tree} shortest paths between vertices at distance at most $c$ are unique. Note that:
\begin{align}
    c&\ge\frac{g}{k+1}-2\ge\frac{2\log_{d-1}(n)-C}{k+1}-2, \qquad\text{and}\qquad kc+1\le g=O(\log n), \label{eq:piv-c}\\
    4r&\le(d-1)^{t-1}\qquad\text{and}\qquad (d-1)^{t}\le 4r(d-1)^{2}.\label{eq:piv-t}
\end{align}
Let $(H,\mathcal{P})$ be an $r$-pivot $1$-hop $k$-stretch ADO for $G$. We show that
$|E_H|=\Omega\big(n^{1+\frac{2}{k+1}}/(r^2\log n)\big)$. Since $m=\Theta(n)$, this implies the theorem.
The proof contains the same three steps as the previous proofs, preceded by a new preliminary step, which discards the pairs with an aligned pivot. The three steps are then carried out only for the remaining pairs.
We may assume $c\ge2t+1$. Otherwise, \eqref{eq:piv-c} and \eqref{eq:piv-t} give $n^{2/(k+1)}=O((d-1)^{2t})=O(r^2)$, so the claimed bound is $O(n/\log n)$ and follows from $|E_H|\ge n/2$, since every vertex must be incident to an edge of $H$.

In the preliminary step, we discard the few pairs for which a pivot shortcuts a long prefix (or suffix) of $P(u,v)$. Let $\{u,v\}$ be a pair with $\dG(u,v)=c$. We say that a pivot $p_u\in\mathcal{P}(u)$ is \emph{aligned with $v$} if $P(u,p_u)$ and $P(u,v)$ share their first $t$ edges, and that the pair $\{u,v\}$ is \emph{non-aligned} if no pivot of $u$ is aligned with $v$ and no pivot of $v$ is aligned with $u$. See \Cref{fig:pivot}\textbf{(a)}.

\colorlet{cpvt}{red!75!black}        
\colorlet{csub}{orange!82!black}     
\colorlet{cgood}{green!50!black}     

\begin{figure}[t]
\centering
\begin{tikzpicture}[
    vert/.style  = {circle, draw=black, fill=white, minimum size=4pt, inner sep=0pt, line width=0.6pt},
    bvert/.style = {circle, draw=black, fill=black!12, minimum size=10pt, inner sep=0pt, line width=0.9pt, font=\scriptsize},
    avert/.style = {circle, draw=csub, fill=csub!35, minimum size=4.5pt, inner sep=0pt, line width=0.7pt},
    gvert/.style = {circle, draw=cgood, fill=cgood!30, minimum size=4.5pt, inner sep=0pt, line width=0.7pt},
    ivert/.style = {circle, draw=cpvt, fill=cpvt!45, minimum size=4.5pt, inner sep=0pt, line width=0.7pt},
    tedge/.style = {gray!55, line width=0.75pt},
]

\coordinate (u) at (0,0);
\foreach \j/\x in {1/-2.4, 2/0, 3/2.4} {\coordinate (a\j) at (\x,-0.65);}
\foreach \j/\x in {1/-3.0, 2/-1.8, 3/-0.6, 4/0.6, 5/1.8, 6/3.0} {\coordinate (b\j) at (\x,-1.3);}
\foreach \i/\x in {1/-3.3, 2/-2.7, 3/-2.1, 4/-1.5, 5/-0.9, 6/-0.3, 7/0.3, 8/0.9, 9/1.5, 10/2.1, 11/2.7, 12/3.3} {\coordinate (w\i) at (\x,-1.95);}

\foreach \i in {1,2,4,5,6,8,9,11,12} {
  \fill[cgood!13] (w\i) -- ++(-0.25,-0.9) -- ++(0.5,0) -- cycle;
  \draw[cgood!58, line width=0.7pt] (w\i) -- ++(-0.25,-0.9) -- ++(0.5,0) -- cycle;
}
\foreach \i in {3,7,10} {
  \fill[csub!13] (w\i) -- ++(-0.25,-0.9) -- ++(0.5,0) -- cycle;
  \draw[csub!58, line width=0.7pt] (w\i) -- ++(-0.25,-0.9) -- ++(0.5,0) -- cycle;
}

\foreach \j in {1,2,3} {\draw[tedge] (u) -- (a\j);}
\foreach \j/\a in {1/1, 2/1, 3/2, 4/2, 5/3, 6/3} {\draw[tedge] (a\a) -- (b\j);}
\foreach \i/\b in {1/1, 2/1, 3/2, 4/2, 5/3, 6/3, 7/4, 8/4, 9/5, 10/5, 11/6, 12/6} {\draw[tedge] (b\b) -- (w\i);}

\coordinate (p1) at (-2.1,-2.55);
\coordinate (p2) at (0.2,-2.77);
\coordinate (p3) at (2.18,-2.3);
\draw[tedge] (w3) -- (p1);
\draw[tedge] (w7) -- (p2);
\draw[tedge] (w10) -- (p3);

\foreach \j in {1,2,3} {\node[vert] at (a\j) {};}
\foreach \j in {1,...,6} {\node[vert] at (b\j) {};}
\foreach \i in {1,2,4,5,6,8,9,11,12} {\node[gvert] at (w\i) {};}
\foreach \i in {3,7,10} {\node[avert] at (w\i) {};}
\node[bvert] at (u) {$u$};
\foreach \i in {1,2,3} {\node[ivert] at (p\i) {};}

\draw[decorate, decoration={brace, amplitude=4pt}, gray!60!black]
     (-3.75,-1.95) -- (-3.75,0) node[midway, left=4pt, font=\scriptsize] {$t$};
\draw[decorate, decoration={brace, amplitude=4pt}, gray!60!black]
     (-3.75,-2.85) -- (-3.75,-1.95) node[midway, left=4pt, font=\scriptsize] {$c-t$};

\begin{scope}[shift={(4.15,-0.75)}]
  \draw[gray!60, rounded corners=2pt, line width=0.5pt] (-0.25,0.28) rectangle (1.95,-1.18);
  \node[ivert] at (0,0) {};     \node[right=3pt, font=\scriptsize] at (0,0) {pivot of $u$};
  \node[avert] at (0,-0.45) {}; \node[right=3pt, font=\scriptsize] at (0,-0.45) {aligned};
  \node[gvert] at (0,-0.9) {};  \node[right=3pt, font=\scriptsize] at (0,-0.9) {non-aligned};
\end{scope}

\end{tikzpicture}
\caption{Each pivot of $u$ is aligned with vertices $v$ in at most one subtree (\Cref{cl:piv-good}).}
\label{fig:aligned-tree}
\end{figure}

\begin{claim}\label{cl:piv-good}
There are at least $\tfrac14\,n\,d(d-1)^{c-1}$ non-aligned pairs.
\end{claim}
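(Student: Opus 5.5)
We count ordered pairs $(u,v)$ with $\dG(u,v)=c$ and bound how many of them are \emph{bad}, meaning that some pivot of $u$ is aligned with $v$. The unordered pairs that fail to be non-aligned are then at most the number of bad ordered pairs, because such a pair has an aligned pivot on the $u$ side or on the $v$ side, and either case produces a bad ordered pair. By \Cref{lem:tree}, applied with $c\le\floor{g/2}-1$, every $u$ has exactly $d(d-1)^{c-1}$ vertices at distance $c$. So there are $n\,d(d-1)^{c-1}$ ordered pairs and $\tfrac12 n\,d(d-1)^{c-1}$ unordered pairs. It therefore suffices to show that, for each fixed $u$, the number of $v$ at distance $c$ that are aligned with some pivot of $u$ is at most $\tfrac14 d(d-1)^{c-1}$. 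Summing over $u$ then gives at most $\tfrac14 n\,d(d-1)^{c-1}$ unordered pairs that are not non-aligned, which leaves at least $\tfrac14 n\,d(d-1)^{c-1}$ non-aligned pairs.

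Fix $u$ and a pivot $p_u\in\mathcal{P}(u)$. If $\dG(u,p_u)<t$, the pivot cannot share $t$ edges with any $P(u,v)$, so it is aligned with no $v$. Otherwise let $w$ be the vertex of $P(u,p_u)$ at distance $t$ from $u$. If $p_u$ is aligned with $v$, then $P(u,v)$ starts with the first $t$ edges of $P(u,p_u)$, so $w\in P(u,v)$ and $\dG(w,v)=c-t$. Here we use the standing assumption $c\ge 2t+1$, so that $t<c$.

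Since $\dG(u,w)+(c-t)=c\le\floor{g/2}-1$ and $w\ne u$, the last part of \Cref{lem:tree} applies. It shows there are exactly $(d-1)^{c-t}$ such $v$, namely those in the subtree hanging off $w$ away from $u$. Using \eqref{eq:piv-t}, $(d-1)^{t-1}\ge 4r$, so
\[(d-1)^{c-t}=\frac{(d-1)^{c-1}}{(d-1)^{t-1}}\le \frac{(d-1)^{c-1}}{4r}.\]

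Since $|\mathcal{P}(u)|\le r$, the number of $v$ aligned with some pivot of $u$ is at most $(d-1)^{c-1}/4 \le \tfrac14 d(d-1)^{c-1}$, which is the required per-$u$ bound. We actually gain a factor of $d$ here, so there is slack.

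The only point needing care is the reduction from unordered to ordered pairs, that is, checking that a pair which is not non-aligned is counted at least once among the bad ordered pairs. The other ingredient is applying \Cref{lem:tree} with its distance condition met, which holds because $c\le\floor{g/2}-1$. Neither step is a serious obstacle; the claim is essentially a direct counting argument.
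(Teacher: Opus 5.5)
Your proof is correct and matches the paper's argument. Like the paper, you fix an endpoint and a pivot at distance at least $t$, observe that alignment forces $v$ into the subtree below the vertex $w$ at depth $t$ on $P(u,p_u)$, bound this by $(d-1)^{c-t}\le (d-1)^{c-1}/(4r)$ via \eqref{eq:piv-t}, and sum over endpoints and pivots. The differences are only cosmetic: you make the ordered-pair bookkeeping explicit, and you use $r$ rather than $r-1$ pivots, which the slack factor of $d$ easily absorbs.
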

\begin{proof}
By \Cref{lem:tree}, there are $N=\tfrac12\,n\,d(d-1)^{c-1}$ pairs at distance $c$.
Fix a vertex $u$ and a pivot $p_u\in\mathcal{P}(u)$ with $\dG(u,p_u)\ge t$, and let $w$ be the vertex of $P(u,p_u)$ at distance $t$ from $u$. If $p_u$ is aligned with $v$, then $w\in P(u,v)$ and $\dG(w,v)=c-t$, so by \Cref{lem:tree} there are at most $(d-1)^{c-t}$ such vertices $v$ (the vertices at distance $c-t$ from $w$ in the subtree hanging off $w$ away from $u$). See \Cref{fig:aligned-tree}.
Summing over all choices of an endpoint and one of its at most $r-1$ pivots other than itself, and using \eqref{eq:piv-t}, the number of pairs with an aligned pivot at either endpoint is at most
\[
  n(r-1)(d-1)^{c-t}\le\frac{n(r-1)(d-1)^{c-1}}{4r}<\tfrac14\,n\,(d-1)^{c-1}\le\frac N2.
\]
Hence at least $N/2=\tfrac14\,n\,d(d-1)^{c-1}$ pairs are non-aligned, as required.
\end{proof}

First, we charge every non-aligned pair to a hop. Let $\{u,v\}$ be a non-aligned pair. 
By \Cref{def:piv-ado}, $d_{H,\mathcal{P}}^{\le1}(u,v)$ is realized by a walk $u\to p_u\to p_v\to v$ in $H$, where $p_u\in\mathcal{P}(u)$ and $p_v\in\mathcal{P}(v)$. Since the weights of $H$ never underestimate distances and the oracle has stretch $k$, we have:
\begin{equation}\label{eq:piv-walk}
  \dG(u,p_u)+\dG(p_u,p_v)+\dG(p_v,v)\le d_{H,\mathcal{P}}^{\le1}(u,v)\le kc.
\end{equation}
Hence the edge set $T=P(u,p_u)\cup P(p_u,p_v)\cup P(p_v,v)\cup P(v,u)$ has at most $kc+c=(k+1)c<g$ edges. Since $G$ has girth $g$, $T$ contains no cycle, and since it is connected, it is a tree. Therefore, each of the four paths is the unique path in $T$ between its endpoints.
 In particular, $P(u,p_u)$ and $P(u,v)$ share a prefix and are disjoint after it. Let $z_u$ be the last vertex of this prefix, and define $z_v$ similarly for $P(v,p_v)$ and $P(v,u)$. Since $\{u,v\}$ is non-aligned, $\dG(u,z_u)\le t-1$ and $\dG(v,z_v)\le t-1$.
Let $q_u,q_v\in P(u,v)$ be the vertices with $\dG(u,q_u)=t$ and $\dG(q_v,v)=t$, and let $Q=P(q_u,q_v)$ be the subpath of $P(u,v)$ between them. As $c\ge 2t+1$, we have $|Q|=c-2t\ge1$.

\begin{claim}\label{cl:piv-cover}
$Q\subseteq P(p_u,p_v)$.
\end{claim}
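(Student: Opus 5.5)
The plan is to argue inside the tree $T=P(u,p_u)\cup P(p_u,p_v)\cup P(p_v,v)\cup P(v,u)$ built above. In $T$ there is a unique simple path between any two vertices. Any walk in $T$ between two vertices traverses every edge of that unique path.

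\textbf{Step 1: a cut argument on each edge of $Q$.} Fix an edge $e$ of $Q$. I want to show that $P(p_u,p_v)$ uses $e$. Removing $e$ from $T$ splits it into two components: $T_u$, containing $u$, and $T_v$, containing $v$. So it suffices to show that $p_u\in T_u$ and $p_v\in T_v$. Then the unique $p_u$--$p_v$ path $P(p_u,p_v)$ must cross $e$.

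\textbf{Step 2: placing $p_u$ and $p_v$.} Consider the path $P(u,p_u)$ in $T$. It consists of the common prefix $P(u,z_u)$ with $P(u,v)$, followed by a part disjoint from $P(u,v)$ after $z_u$. Since $\dG(u,z_u)\le t-1<t=\dG(u,q_u)$, the prefix lies strictly before $q_u$ on $P(u,v)$. Hence $P(u,p_u)$ contains no edge of $Q$, and in particular does not contain $e$.

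Can $P(u,p_u)$ reach $e$ some other way after leaving $P(u,v)$? Its edges after $z_u$ are not edges of $P(u,v)$, so it cannot use $e$. Consequently $P(u,p_u)$ is a $u$--$p_u$ path in $T-e$, which puts $p_u$ in $T_u$. Symmetrically, using $\dG(v,z_v)\le t-1$, we get $p_v\in T_v$.

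One subtlety must be handled here: "share a prefix and are disjoint after it" should mean edge-disjoint after $z_u$. This holds because two paths from $u$ in a tree diverge once and never rejoin.

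\textbf{Step 3: conclusion and the main obstacle.} Since $e$ was an arbitrary edge of $Q$, every edge of $Q$ lies on $P(p_u,p_v)$, giving $Q\subseteq P(p_u,p_v)$. The main care point is justifying that $P(p_u,p_v)$ as a path in $G$ coincides with the tree path in $T$. This is already established above: each of the four paths is the unique path in $T$ between its endpoints, because $T$ is acyclic with fewer than $g$ edges.

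The degenerate cases $p_u=u$ or $p_v=v$ need no separate treatment. In those cases $z_u=u$ (respectively $z_v=v$), and the same argument applies.
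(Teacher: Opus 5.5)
Your proof is correct, but it takes a slightly different route from the paper's.

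The paper identifies $P(p_u,p_v)$ explicitly. The branch of $P(u,p_u)$ after $z_u$ meets $P(u,v)$ only at $z_u$, and the branch of $P(v,p_v)$ after $z_v$ meets it only at $z_v$. These two branches are disjoint from each other, since otherwise, together with the segment of $P(u,v)$ between $z_u$ and $z_v$, they would close a cycle in $T$. Hence the concatenation $p_u\to z_u\to z_v\to p_v$ is a simple path in $T$, so it equals $P(p_u,p_v)$. It contains $Q$ because $z_u$ precedes $q_u$ and $z_v$ follows $q_v$.

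You instead argue one edge at a time. Each $e\in Q$ is a bridge of $T$ separating $u$ from $v$, and $P(u,p_u)$ and $P(v,p_v)$ both avoid $e$. So $p_u$ and $p_v$ lie on opposite sides of $e$, and the tree path $P(p_u,p_v)$ must cross it.

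Your version never has to check that the two pivot branches are disjoint; that fact, and in particular $p_u\ne p_v$, falls out of the cut automatically. It is the same ``a walk in a tree crosses every separating edge'' reasoning the paper uses to show that the hops trace $P(u,v)$. The paper's version gives the full shape of $P(p_u,p_v)$ in one stroke. Applying your argument to every edge of $P(u,v)$ between $z_u$ and $z_v$ recovers the same stronger containment.
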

\begin{proof}
Since $\dG(u,z_u)<t$ and $\dG(v,z_v)<t$, the vertex $z_u$ precedes $q_u$, and $z_v$ follows $q_v$ on $P(u,v)$. By the choice of $z_u$ and $z_v$, the subpath of $P(u,p_u)$ from $z_u$ to $p_u$ meets $P(u,v)$ only at $z_u$, and the subpath of $P(v,p_v)$ from $z_v$ to $p_v$ meets $P(u,v)$ only at $z_v$. These two subpaths are disjoint, as otherwise, together with the subpath of $P(u,v)$ between $z_u$ and $z_v$, they would close a cycle in $T$. Hence the walk that goes from $p_u$ to $z_u$, then along $P(u,v)$ to $z_v$, and then to $p_v$, is a simple path in $T$, and thus it is $P(p_u,p_v)$. This path contains $Q$, as required.
\end{proof}

We charge the non-aligned pair $\{u,v\}$ to the hop $(p_u,p_v)$. See \Cref{fig:pivot}\textbf{(b)}.

Second, we bound the number of non-aligned pairs that charge a single hop.

\begin{claim}\label{cl:piv-count}
Each hop $(p_u,p_v)\in E_H$ is charged by at most $2(kc+1)(d-1)^{2t}$ non-aligned pairs.
\end{claim}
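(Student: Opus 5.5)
The plan is to mirror the window argument of \Cref{cl:windows,cl:perwindow}, using the fixed-length segment $Q$ of \Cref{cl:piv-cover} as the window. Fix a hop $(p_u,p_v)\in E_H$, and consider any non-aligned pair $\{u,v\}$ charged to it. By \eqref{eq:piv-walk}, $\dG(p_u,p_v)\le kc$. By \Cref{cl:piv-cover}, $Q=P(q_u,q_v)$ is a subpath of $P(p_u,p_v)$ with exactly $c-2t$ edges. Orient it from $q_u$ (the end at distance $t$ from $u$) to $q_v$.

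\textbf{Step 1: few windows.} A directed subpath of $c-2t$ edges of the fixed path $P(p_u,p_v)$ is determined by its starting offset along $P(p_u,p_v)$ and its direction. This gives at most $2(\dG(p_u,p_v)-(c-2t)+1)\le 2(kc+1)$ choices for the ordered pair $(q_u,q_v)$.

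\textbf{Step 2: few pairs per window.} Fix $(q_u,q_v)$. Every charging pair satisfies $P(u,v)=P(u,q_u)\cup Q\cup P(q_v,v)$ with $\dG(u,q_u)=t$ and $\dG(q_v,v)=t$. Here $u$ lies in the subtree hanging off $q_u$ away from $q_v$. Since $\dG(q_v,q_u)+t=c-t\le c\le\floor{g/2}-1$ and $q_u\ne q_v$ (as $|Q|\ge1$), \Cref{lem:tree} gives at most $(d-1)^{t}$ choices for $u$. Symmetrically, there are at most $(d-1)^t$ choices for $v$.

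Unlike \Cref{cl:perwindow}, no extra factor $(c+1)$ appears, because the offsets $t$ at both ends are fixed rather than free. Multiplying the two steps gives at most $2(kc+1)(d-1)^{2t}$ non-aligned pairs per hop.

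\textbf{Main obstacle: the orientation of the hop.} The pair is charged to the unordered hop, and the roles of $p_u$ and $p_v$ could be swapped. I expect this to be absorbed by the direction choice in Step 1: the direction of $Q$ determines which end is $q_u$. To keep this clean, I would count ordered pairs $(u,v)$ with $u$ the endpoint near $q_u$. Each unordered pair then appears once per window orientation, and the bound $2(kc+1)(d-1)^{2t}$ still covers everything.

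A second check is that $P(u,v)$ has exactly the decomposition claimed. This follows from uniqueness of shortest paths, since $c\le\floor{g/2}-1$.
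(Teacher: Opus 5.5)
Your proposal is correct and essentially matches the paper's proof. The paper also fixes the hop and uses \Cref{cl:piv-cover} to place $q_u,q_v$ on $P(p_u,p_v)$, with at most $kc+1$ choices for $q_u$ and two directions for $q_v$; this is the same as your offset-and-direction count. It then applies \Cref{lem:tree} to get at most $(d-1)^t$ choices for each of $u$ and $v$, and your handling of the hop's orientation through the direction of $Q$ is exactly how the factor $2$ absorbs the symmetry.
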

\begin{proof}
Fix the hop $(p_u,p_v)$. We count the non-aligned pairs $\{u,v\}$ that charge it.
We first bound the number of choices for $q_u$ and $q_v$, and then, given $q_u$ and $q_v$, the number of choices for $u$ and $v$.

By \Cref{cl:piv-cover}, $P(q_u,q_v)=Q\subseteq P(p_u,p_v)$, so $q_u$ and $q_v$ are two vertices of $P(p_u,p_v)$. 
By \eqref{eq:piv-walk}, $P(p_u,p_v)$ has at most $kc$ edges, so there are at most $kc+1$ choices for $q_u$, and given $q_u$, at most two choices for $q_v$ (one in each direction along $P(p_u,p_v)$).

Now fix $q_u$ and $q_v$. Since $q_u\in P(q_v,u)$ and $\dG(q_u,u)=t$, by \Cref{lem:tree} there are at most $(d-1)^{t}$ choices for $u$ (the vertices at distance $t$ from $q_u$ in the subtree hanging off $q_u$ away from $q_v$). Symmetrically, there are at most $(d-1)^t$ choices for $v$.
Hence $(p_u,p_v)$ is charged at most
\[
  \underbrace{2(kc+1)}_{\text{choices of $q_u,q_v$}}\cdot\underbrace{(d-1)^{t}}_{\text{choices of $u$}}\cdot\underbrace{(d-1)^{t}}_{\text{choices of $v$}}
  \;=\;2(kc+1)(d-1)^{2t}
\]
times, as required.
\end{proof}

Third, by \Cref{cl:piv-good}, there are at least $\tfrac14\,n\,d(d-1)^{c-1}$ non-aligned pairs, each charged to one of the $|E_H|$ hops, and by \Cref{cl:piv-count} each hop is charged at most $2(kc+1)(d-1)^{2t}$ times. Since $kc<g=O(\log n)$, we get:
\begin{align*}
  2(kc+1)(d-1)^{2t}\,|E_H| &\ge \tfrac14\,n\,d(d-1)^{c-1} = \Omega\big(n(d-1)^{c}\big)
  \\|E_H| &= \Omega\!\left(\frac{n(d-1)^{c}}{(kc+1)(d-1)^{2t}}\right)
  \stackrel{(\ref{eq:piv-t})}{=}
  \Omega\!\left(\frac{n(d-1)^{c}}{(kc+1)\,r^2}\right)
  \\&\stackrel{(\ref{eq:piv-c})}{=}
  \Omega\!\left(\frac{n^{1+\frac{2}{k+1}}}{r^2\log n}\right),
\end{align*}
as required.
\end{proof}

\subsection{General Hopbound}\label{sec:pivot-beta}
In this subsection, we generalize \Cref{thm:lb-pivot} to pivot-based $\beta$-hop ADOs. A pivot-based $\beta$-hop ADO estimates $\dG(u,v)$ by the weight of a walk
\[
  u\to p_u\to p_{x_1}\to x_1\to p'_{x_1}\to\dots\to p_{x_{j-1}}\to x_{j-1}\to p'_{x_{j-1}}\to p_v\to v
\]
with $j\le\beta$ stored hops $p_u\to p_{x_1}$, $p'_{x_1}\to p_{x_2}$, \dots, $p'_{x_{j-1}}\to p_v$, where $p_{x_i}$ and $p'_{x_i}$ are pivots of an intermediate vertex $x_i$. See \Cref{fig:pivot-model}\textbf{(b)}. All the other steps are pivot jumps, and they do not count as hops. Formally:

\begin{definition}[$r$-pivot $\beta$-hop $k$-stretch ADO]\label{def:piv-ado-beta}
Let $r,\beta\ge1$ be integers, and let $k\ge1$. An \emph{$r$-pivot $\beta$-hop $k$-stretch ADO} for $G$ consists of a weighted undirected graph $H=(V,E_H,w_H)$ with $w_H(x,y)\ge\dG(x,y)$ for every $(x,y)\in E_H$, and a set $\mathcal{P}(v)\subseteq V$ of at most $r$ \emph{pivots} for every $v\in V$, with $v\in\mathcal{P}(v)$, such that for all $u,v\in V$:
\[
  \dG(u,v)\;\le\;d_{H,\mathcal{P}}^{\le\beta}(u,v)\;\le\;k\,\dG(u,v),
\]
where $d_{H,\mathcal{P}}^{\le\beta}(u,v)$ is the minimum of
\begin{equation}\label{eq:pivot-estimate-beta}
  w_H(u,p_u)+\sum_{i=1}^{j}w_H\bigl(p'_{x_{i-1}},p_{x_i}\bigr)+\sum_{i=1}^{j-1}\Bigl(w_H\bigl(p_{x_i},x_i\bigr)+w_H\bigl(x_i,p'_{x_i}\bigr)\Bigr)+w_H(p_v,v)
\end{equation}
over all $1\le j\le\beta$, all vertices $x_1,\dots,x_{j-1}\in V$, and all pivots $p_u\in\mathcal{P}(u)$, $p_v\in\mathcal{P}(v)$, and $p_{x_i},p'_{x_i}\in\mathcal{P}(x_i)$, where $p'_{x_0}=p_u$ and $p_{x_j}=p_v$. Here $w_H(a,a)=0$, and $w_H(x,y)=\infty$ if $x\ne y$ and $(x,y)\notin E_H$. The \emph{space} of the oracle is $|E_H|$.
\end{definition}

Every walk with at most $\beta$ hops makes at most $2\beta$ pivot jumps. For $\beta=1$, \Cref{def:piv-ado-beta} coincides with \Cref{def:piv-ado}.

\thmlbpivotbeta*

For $\beta=1$, \Cref{thm:lb-pivot} gives a stronger bound, and for $r=1$, \Cref{thm:lb-pivot-beta} recovers \Cref{lem:lb-beta} up to polylogarithmic factors.

The proof follows that of \Cref{lem:lb-beta}. We discard the few pairs $\{u,v\}$ for which some pivot jump covers a long segment of $P(u,v)$. Since there are at most $nr$ pivot jumps, few pairs are discarded. For every remaining pair, the at most $2\beta$ pivot jumps of the walk cover only a few edges of $P(u,v)$, so one of its at most $\beta$ hops covers almost a $1/\beta$ fraction of $P(u,v)$, and we charge the pair to that hop together with a window, as in the proof of \Cref{lem:lb-beta}. A pivot jump at an intermediate vertex covers fewer than $\rho=O(\log\log n+\log(r+1))$ edges of $P(u,v)$, and, as in the proof of \Cref{thm:lb-pivot}, a pivot jump at an endpoint covers fewer than $t=O(\log(r+1))$ edges. For $\beta=2$, every hop touches a pivot of a query endpoint, so its window starts close to that endpoint, and this saves another logarithmic factor.

\colorlet{cpiv}{green!50!black}      
\colorlet{chop}{blue!65!black}       
\colorlet{clong}{orange!85!black}    
\colorlet{cwin}{red!75!black}        
\colorlet{csub}{orange!82!black}     
\colorlet{cpath}{black}              

\begin{figure}[t]
\centering
\begin{tikzpicture}[
    xscale=1.1,
    vert/.style  = {circle, draw=black, fill=white, minimum size=4.5pt, inner sep=0pt, line width=0.65pt},
    bvert/.style = {circle, draw=black, fill=black!12, minimum size=10pt, inner sep=0pt, line width=0.9pt, font=\scriptsize},
    pvert/.style = {circle, draw=black, fill=white, minimum size=9pt, inner sep=0pt, line width=0.9pt, font=\scriptsize},
    ivert/.style = {circle, draw=cpiv, fill=cpiv!15, minimum size=9pt, inner sep=0pt, line width=0.9pt, font=\scriptsize},
    hvert/.style = {circle, draw=chop, fill=chop!12, minimum size=9pt, inner sep=0pt, line width=0.9pt, font=\scriptsize},
    tedge/.style = {gray!55, line width=0.8pt},
    pedge/.style = {cpath, line width=1.45pt},
    pivedge/.style = {cpiv, dashed, line width=1.1pt},
    hopedge/.style = {chop, dashed, line width=1.25pt},
]

\foreach \i in {0,...,10} {\coordinate (n\i) at (0.75*\i,0);}

\coordinate (ma) at (0.70,0.62);  \coordinate (aa) at (0.50,1.30);
\coordinate (mx) at (3.00,0.70);  \coordinate (xx) at (2.85,1.40);
\coordinate (my) at (4.50,0.70);  \coordinate (yy) at (4.65,1.40);
\coordinate (mb) at (6.80,0.62);  \coordinate (bb) at (7.00,1.30);

\fill[cpiv!10] (-0.25,-0.22) rectangle (1.5,0.22);
\fill[cpiv!10] (6.0,-0.22) rectangle (7.75,0.22);

\fill[csub!13] (n4) -- (1.75,-1.95) -- (3.55,-1.95) -- cycle;
\draw[csub!58, line width=0.75pt] (n4) -- (1.75,-1.95) -- (3.55,-1.95) -- cycle;
\fill[csub!13] (n6) -- (3.95,-1.95) -- (5.75,-1.95) -- cycle;
\draw[csub!58, line width=0.75pt] (n6) -- (3.95,-1.95) -- (5.75,-1.95) -- cycle;

\draw[pedge] (n0) -- (n10);
\draw[tedge] (n1) -- (ma) -- (aa);
\draw[tedge] (n4) -- (mx) -- (xx);
\draw[tedge] (n6) -- (my) -- (yy);
\draw[tedge] (n9) -- (mb) -- (bb);

\draw[clong, line width=1.35pt] (xx) -- (mx) -- (n4) to[bend left=14] (n6) -- (my) -- (yy);

\draw[cwin, line width=3pt, line cap=round] (n4) -- (n6);

\draw[hopedge] (aa) to[bend left=28] (xx);
\draw[hopedge, line width=1.6pt] (xx) to[bend left=40] (yy);
\draw[hopedge] (yy) to[bend left=28] (bb);
\draw[pivedge] (n0) to[bend left=30] (aa);
\draw[pivedge] (bb) to[bend left=30] (n10);

\foreach \i in {2,3,5,7,8} {\node[vert] at (n\i) {};}
\node[vert] at (ma) {}; \node[vert] at (mx) {};
\node[vert] at (my) {}; \node[vert] at (mb) {};
\node[bvert] at (n0) {$u$};
\node[bvert] at (n10) {$v$};
\node[pvert] at (n1) {};
\node[pvert] at (n9) {};
\node[pvert, draw=cwin] at (n4) {};
\node[pvert, draw=cwin] at (n6) {};
\node[ivert] at (aa) {$p_u$};
\node[ivert] at (bb) {$p_v$};
\node[hvert] at (xx) {$x$};
\node[hvert] at (yy) {$y$};

\node[gray!60!black, font=\scriptsize] at (0.75,-0.50) {$<t$};
\node[gray!60!black, font=\scriptsize] at (6.75,-0.50) {$<t$};
\node[cwin, font=\scriptsize] at (3.00,-0.34) {$q$};
\node[cwin, font=\scriptsize] at (4.50,-0.34) {$q'$};
\node[cwin, font=\scriptsize] at (3.75,-0.34) {$\pi$};
\node[clong, font=\scriptsize] at (3.75,0.55) {$P(x,y)$};
\node[chop, font=\scriptsize, align=center] at (3.75,2.42)
     {charged hop $(x,y)$: covers $\ge\sigma$ edges of $P(u,v)$};
\node[chop, font=\scriptsize] at (1.45,1.95) {hop};
\node[chop, font=\scriptsize] at (5.85,2.05) {hop};
\node[cpiv, font=\scriptsize] at (-0.30,1.05) {pivot};
\node[cpiv, font=\scriptsize] at (7.80,1.05) {pivot};

\node[csub!75!black, font=\scriptsize, align=center] at (2.35,-2.35)
     {$\le(d-1)^{\dG(u,q)}$\\choices for $u$};
\node[csub!75!black, font=\scriptsize, align=center] at (5.15,-2.35)
     {$\le(d-1)^{c-\sigma-\dG(u,q)}$\\choices for $v$};

\end{tikzpicture}
\caption{%
The charging argument of \Cref{thm:lb-pivot-beta} with $\beta=3$: the hop $(x,y)$ covers a window $\pi$ of $\sigma$ edges of $P(u,v)$.
}
\label{fig:pivot-beta}
\end{figure}

\begin{proof}
\claimsof{thm:lb-pivot-beta}%
Let $G$ be the graph guaranteed by \Cref{conj:moore}. Thus $G$ has constant degree $d$, girth $g\ge 2\log_{d-1}(n)-C$, and $m=\Theta(n)$ edges. Set $c=\floor{\frac{g}{k+1}}-1$, $t=1+\ceil{\log_{d-1}(4r)}$, $\Lambda=8r(kc+1)(c+1)$, $\rho=1+\ceil{\log_{d-1}\Lambda}$, and $\sigma=\ceil{\frac{c-2(t-1)-2(\beta-1)(\rho-1)}{\beta}}$.
Then $(k+1)c\le g-(k+1)<g$. In particular $c\le\frac g2-1$, so $c\le\floor{g/2}-1$, and by \Cref{lem:tree} shortest paths between vertices at distance at most $c$ are unique. Note that:
\begin{align}
    c&\ge\frac{g}{k+1}-2\ge\frac{2\log_{d-1}(n)-C}{k+1}-2, \qquad\text{and}\qquad kc+1\le g=O(\log n), \label{eq:pivb-c}\\
    4r&\le(d-1)^{t-1}\qquad\text{and}\qquad (d-1)^{t}\le 4r(d-1)^{2},\label{eq:pivb-t}\\
    \Lambda &\le(d-1)^{\rho-1}\le \Lambda(d-1), \qquad\text{and}\qquad \Lambda=O\big(r\log^2 n\big), \label{eq:pivb-r}\\
    \sigma&\ge\frac{c}{\beta}-\frac{2(t-1)}{\beta}-\frac{2(\beta-1)(\rho-1)}{\beta},\label{eq:pivb-sigma}
\end{align}
where the bound on $\Lambda$ holds since $c+1\le kc+1\le g=O(\log n)$ by \eqref{eq:pivb-c}.
Let $(H,\mathcal{P})$ be an $r$-pivot $\beta$-hop $k$-stretch ADO for $G$. We show that
$|E_H|=\Omega\big(n^{1+\frac{2}{\beta(k+1)}}/(r^2\log^{6-4/\beta} n)\big)$, and that for $\beta=2$, $|E_H|=\Omega\big(n^{1+\frac{1}{k+1}}/(r^2\log^3 n\log(r+1))\big)$. Since $m=\Theta(n)$, this implies the theorem.
We may assume $\sigma\ge1$. Otherwise $c\le2(t-1)+2(\beta-1)(\rho-1)$, and \eqref{eq:pivb-c}, \eqref{eq:pivb-t}, and \eqref{eq:pivb-r} give
\[
 n^{2/(\beta(k+1))}=O((d-1)^{c/\beta})=O(r^2\log^{4-4/\beta}n).
\]
The general bound, and the improved bound when $\beta=2$, are then $O(n/\log n)$ and follow from $|E_H|\ge n/2$.

The proof contains the same preliminary step and three steps as the proof of \Cref{thm:lb-pivot}.
In the preliminary step, we discard the few pairs for which a single pivot jump covers a long segment of $P(u,v)$. We denote the pivot jump between a vertex $x$ and one of its pivots $a\in\mathcal{P}(x)\setminus\{x\}$ by $(x,a)$. There are at most $n(r-1)$ pivot jumps. Let $\{u,v\}$ be a pair with $\dG(u,v)=c$. We say that a pivot $p_u\in\mathcal{P}(u)$ is \emph{aligned with $v$} if $P(u,p_u)$ and $P(u,v)$ share their first $t$ edges. We say that the pair $\{u,v\}$ is \emph{bad} if there is a pivot jump $(x,a)$ with $\dG(x,a)\le kc$ such that $P(x,a)$ and $P(u,v)$ share at least $\rho$ edges, or if a pivot of $u$ is aligned with $v$, or a pivot of $v$ is aligned with $u$. Otherwise, the pair is \emph{good}. Note that for a pivot jump $(x,a)$ with $\dG(x,a)\le kc$, the edge set $P(x,a)\cup P(u,v)$ has at most $kc+c<g$ edges, and thus contains no cycle, so the edges shared by the two paths form a contiguous segment of both.

\begin{claim}\label{cl:pivb-good}
There are at least $\tfrac14\,n\,d(d-1)^{c-1}$ good pairs.
\end{claim}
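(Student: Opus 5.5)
We bound the number of bad pairs by splitting them into the two sources of badness and showing that each accounts for fewer than $N/4$ pairs, where $N=\tfrac12\,n\,d(d-1)^{c-1}$ is the number of pairs at distance $c$ (\Cref{lem:tree}). The good pairs then number at least $N/2=\tfrac14\,n\,d(d-1)^{c-1}$.

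\textbf{Aligned pivots.} The count is verbatim the one in \Cref{cl:piv-good}. Fix an endpoint $u$ and one of its at most $r-1$ nontrivial pivots $p_u$ with $\dG(u,p_u)\ge t$. Then $v$ must lie at distance $c-t$ from the vertex $w\in P(u,p_u)$ at distance $t$ from $u$, in the subtree hanging off $w$ away from $u$. This gives at most $(d-1)^{c-t}$ choices of $v$. By \eqref{eq:pivb-t}, summing over all choices yields at most $n(r-1)(d-1)^{c-t}<\tfrac14 n(d-1)^{c-1}\le N/4\cdot\tfrac{2}{d}$, which is below $N/4$.

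\textbf{Long pivot jumps.} This is the new part, and we reuse the window argument of \Cref{cl:windows,cl:perwindow}. Fix a pivot jump $(x,a)$ with $\dG(x,a)\le kc$; there are at most $n(r-1)$ of them. If $P(x,a)$ shares at least $\rho$ edges with $P(u,v)$, the shared edges form a contiguous segment of both paths, as noted before the claim. Let $\pi$ be its first $\rho$ edges, oriented from $u$. Then $\pi$ is a subpath of $P(x,a)$ with $\rho$ edges and a direction, so there are at most $2(kc+1)$ choices of $\pi$.

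Given $\pi=P(q,q')$, we have $P(u,v)=P(u,q)\cup\pi\cup P(q',v)$. By \Cref{lem:tree}, for each of the at most $c+1$ values of $\dG(u,q)$ there are at most $(d-1)^{\dG(u,q)}(d-1)^{c-\rho-\dG(u,q)}=(d-1)^{c-\rho}$ ordered pairs, exactly as in \Cref{cl:perwindow}. Hence the number of such bad pairs is at most
\[
n(r-1)\cdot2(kc+1)(c+1)(d-1)^{c-\rho}\le \frac{\Lambda}{4}\,n(d-1)^{c-\rho}\le\frac{n(d-1)^{c-1}}{4}< \frac N4,
\]
using $\Lambda=8r(kc+1)(c+1)$ and $(d-1)^{\rho-1}\ge\Lambda$ from \eqref{eq:pivb-r}.

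Adding the two contributions gives fewer than $N/2$ bad pairs, so at least $N/2$ pairs are good, as claimed. The only step needing care is the second one: the window must be well-defined via contiguity, which relies on $kc+c<g$, and the constant $8$ in $\Lambda$ must absorb both the factor $2$ from window directions and the final factor $4$. Everything else is routine bookkeeping.
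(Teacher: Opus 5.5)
Your proposal is correct and follows the paper's proof essentially step for step. You reuse the aligned-pivot count of \Cref{cl:piv-good} and the same window argument over the at most $n(r-1)$ pivot jumps, with $\Lambda=8r(kc+1)(c+1)$ absorbing the factor $2(kc+1)(c+1)$ and the constant $4$. Comparing each class against $N/4$ rather than against $\tfrac14 n(d-1)^{c-1}$, as the paper does, is only a cosmetic difference.
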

\begin{proof}
By \Cref{lem:tree}, there are $N=\tfrac12\,n\,d(d-1)^{c-1}$ pairs at distance $c$.
Fix a pivot jump $(x,a)$ with $\dG(x,a)\le kc$. We bound the number of pairs $\{u,v\}$ at distance $c$ for which $P(u,v)$ and $P(x,a)$ share at least $\rho$ edges. The first $\rho$ shared edges of any such pair, ordered from $u$, form a \emph{window}: a subpath of $P(x,a)$ with $\rho$ edges, together with a direction. Since $P(x,a)$ has at most $kc$ edges, a window is determined by its starting offset and its direction, so $P(x,a)$ has at most $2(kc+1)$ windows. Fix a window, and let $q$ and $q'$ be its first and last vertices. If it is contained in $P(u,v)$, in this direction, then $P(u,v)=P(u,q)\cup P(q,q')\cup P(q',v)$ and $\dG(u,q)+\rho+\dG(q',v)=c$. By \Cref{lem:tree}, for each value of $\dG(u,q)$ there are at most $(d-1)^{\dG(u,q)}$ choices for $u$ and at most $(d-1)^{c-\rho-\dG(u,q)}$ choices for $v$. Since $\dG(u,q)$ ranges over $\{0,\dots,c-\rho\}$, each window is contained in $P(u,v)$, in this direction, for at most $(c+1)(d-1)^{c-\rho}$ pairs $\{u,v\}$.
Since there are at most $n(r-1)$ pivot jumps, by \eqref{eq:pivb-r} the number of such pairs is at most
\begin{align*}
  n(r-1)\cdot 2(kc+1)\cdot(c+1)(d-1)^{c-\rho}&\le\frac{2n(r-1)(kc+1)(c+1)(d-1)^{c-1}}{\Lambda}
  \\&=\frac{n(r-1)(d-1)^{c-1}}{4r}<\tfrac14\,n\,(d-1)^{c-1}.
\end{align*}
Moreover, exactly as in the proof of \Cref{cl:piv-good}, the number of pairs with an aligned pivot at either endpoint is at most
\[
  n(r-1)(d-1)^{c-t}\le\frac{n(r-1)(d-1)^{c-1}}{4r}<\tfrac14\,n\,(d-1)^{c-1}.
\]
Thus fewer than $\tfrac12\,n\,(d-1)^{c-1}\le N/2$ pairs are bad, and at least $N/2=\tfrac14\,n\,d(d-1)^{c-1}$ pairs are good, as required.
\end{proof}

First, we charge every good pair to a hop. Let $\{u,v\}$ be a good pair.
By \Cref{def:piv-ado-beta}, $d_{H,\mathcal{P}}^{\le\beta}(u,v)$ is realized by a walk with at most $\beta$ hops, and hence with at most $2\beta$ pivot jumps. Since the weights of $H$ never underestimate distances and the oracle has stretch $k$, we have:
\begin{equation}\label{eq:pivb-walk}
  \sum_{(y,z)\text{ a step of the walk}}\dG(y,z)\;\le\;d_{H,\mathcal{P}}^{\le\beta}(u,v)\;\le\;kc.
\end{equation}
Replacing every step $(y,z)$ of this walk by the shortest path $P(y,z)$ in $G$, in order, gives a walk $W$ in $G$ from $u$ to $v$. By \eqref{eq:pivb-walk}, $W$ has at most $kc$ edges, so the edge set $T=W\cup P(u,v)$ has at most $kc+c=(k+1)c<g$ edges. Hence $T$ contains no cycle, and since it is connected, it is a tree. Thus $P(u,v)$ is the unique simple $u$--$v$ path in $T$, and the walk $W$ from $u$ to $v$ in $T$ traverses every edge of it. That is, $P(u,v)\subseteq W$, and the walk covers all $c$ edges of $P(u,v)$. Moreover, the shortest path of each step is a path in $T$, so the edges of $P(u,v)$ that it covers form a contiguous (possibly empty) segment of $P(u,v)$.

By \eqref{eq:pivb-walk}, every pivot jump $(y,z)$ of the walk has $\dG(y,z)\le kc$, and since $\{u,v\}$ is good, it covers at most $\rho-1$ edges of $P(u,v)$. The two pivot jumps at the endpoints cover even fewer edges: $P(u,p_u)$ and $P(u,v)$ are paths of the tree $T$ that start at $u$, so they share a prefix and are disjoint after it, and since $p_u$ is not aligned with $v$, this prefix has at most $t-1$ edges. The same holds for the jump at $v$. Hence the pivot jumps of the walk cover at most $2(t-1)+2(\beta-1)(\rho-1)$ edges of $P(u,v)$ in total.

The hops of the walk cover the remaining edges of $P(u,v)$, of which there are at least $c-2(t-1)-2(\beta-1)(\rho-1)$. Since the walk has at most $\beta$ hops, some hop $(x,y)$ covers a contiguous segment of $P(u,v)$ of length at least $\frac{c-2(t-1)-2(\beta-1)(\rho-1)}{\beta}$, and hence, as this length is an integer, at least $\sigma$. If $\beta=2$, then every hop of the walk has an endpoint that is the pivot $p_u$ or the pivot $p_v$ used by the walk. In this case, orienting the hop and swapping the names of $u$ and $v$ if necessary, we may assume that $x=p_u$.

Let $\pi$ be the first $\sigma$ edges of that segment, ordered from $u$, so that $|\pi|=\sigma$ and $\pi\subseteq P(u,v)\cap P(x,y)$.
We charge the good pair $\{u,v\}$ to the hop $(x,y)$ through the window $\pi$. See \Cref{fig:pivot-beta}.

Second, we bound the number of good pairs that charge a single hop through a single window.

\begin{claim}\label{cl:pivb-count}
Each hop $(x,y)$ is charged through at most $2(kc+1)$ windows $\pi$, and through each window $\pi$ it is charged by at most $(c+1)(d-1)^{c-\sigma}$ good pairs. Moreover, if $\beta=2$, then each hop is charged through each window by at most $t(d-1)^{c-\sigma}$ good pairs.
\end{claim}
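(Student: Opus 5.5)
The first two parts of the claim will be proved by reusing \Cref{cl:windows} and \Cref{cl:perwindow} almost verbatim; the $\beta=2$ refinement needs one new observation.

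\emph{Number of windows.} A window charged to $(x,y)$ is a subpath of $P(x,y)$ with $\sigma$ edges, together with a direction. By \eqref{eq:pivb-walk}, every step of the walk, and in particular the hop $(x,y)$, satisfies $\dG(x,y)\le kc$. A window is determined by its starting offset on $P(x,y)$ and its direction. This gives at most $2(kc-\sigma+1)\le 2(kc+1)$ windows, exactly as in \Cref{cl:windows}.

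\emph{Pairs per window.} Fix a window $\pi$ with first vertex $q$ and last vertex $q'$, oriented so that $q$ is closer to $u$. Any pair charging through $\pi$ has $\pi\subseteq P(u,v)$ and
\[
P(u,v)=P(u,q)\cup\pi\cup P(q',v),\qquad \dG(u,q)+\sigma+\dG(q',v)=c .
\]
Since $c\le\floor{g/2}-1$, \Cref{lem:tree} applies. For a fixed value $a=\dG(u,q)$, it gives at most $(d-1)^{a}$ choices for $u$, namely the vertices at distance $a$ from $q$ in the subtree hanging off $q$ away from $q'$. Symmetrically, it gives at most $(d-1)^{c-\sigma-a}$ choices for $v$. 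The offset $a$ ranges over $\{0,\dots,c-\sigma\}$, so the total is at most $(c+1)(d-1)^{c-\sigma}$, as in \Cref{cl:perwindow}.

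\emph{The case $\beta=2$.} This is the only genuinely new step. We arranged that $x=p_u$, a pivot of $u$ used by the walk. The plan is to show that the offset $a=\dG(u,q)$ is at most $t-1$, which leaves only $t$ values instead of $c+1$. Consider the tree $T=W\cup P(u,v)$.
\begin{itemize}
\item $P(u,p_u)$ and $P(u,v)$ share a prefix ending at a vertex $z_u$ with $\dG(u,z_u)\le t-1$, because $p_u$ is not aligned with $v$.
\item The hop path $P(p_u,y)$ lies in $T$, starts at $p_u$, and meets $P(u,v)$ in a contiguous segment.
\item Since $T$ is a tree, the path from $p_u$ first reaches $P(u,v)$ at $z_u$. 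Hence any edge of $P(u,v)$ that it covers lies on the side of $z_u$ away from $u$, and the covered segment begins at $z_u$.
\end{itemize}
The window $\pi$ is the first $\sigma$ edges of that segment, ordered from $u$, so $q=z_u$ and $\dG(u,q)\le t-1$. Summing $(d-1)^{a}(d-1)^{c-\sigma-a}$ over the at most $t$ values of $a$ gives the bound $t(d-1)^{c-\sigma}$.

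The delicate point is the possibility that $P(p_u,y)$ covers no edge of $P(u,v)$ near $z_u$, for example if it leaves $P(u,v)$ again. This is harmless. The segment is contiguous and has at least $\sigma\ge1$ edges, and the only vertex through which a path from $p_u$ can enter $P(u,v)$ in the tree $T$ is $z_u$. So a nonempty covered segment must start at $z_u$. The same holds after the possible swap of $u$ and $v$, with $z_v$ in place of $z_u$.
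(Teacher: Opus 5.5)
Your argument follows the paper's proof essentially step for step: the window count from $\dG(x,y)\le kc$, the per-window count by splitting $P(u,v)$ at $q,q'$ and applying \Cref{lem:tree} on each side, and, for $\beta=2$, restricting the offset $\dG(u,q)$ using the non-aligned prefix that ends at $z_u$. One sub-step of the $\beta=2$ case is overstated, although the bound it is meant to give still holds.

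The tree argument shows only that $z_u$ is an \emph{endpoint} of the segment of $P(u,v)$ covered by $P(p_u,y)$. It does not show that this segment extends from $z_u$ toward $v$. Nothing prevents $P(p_u,y)$, after reaching $z_u$, from turning back toward $u$ along the shared prefix $P(u,z_u)$. The walk may backtrack, and when $\sigma\le t-1$ the charging rule does not exclude such a hop from being the charged one. In that case your assertions that every covered edge lies on the far side of $z_u$ and that $q=z_u$ are false.

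The conclusion survives with one extra sentence. In that case the whole covered segment, and hence $\pi$, lies inside $P(u,z_u)$. So $q$ lies strictly between $u$ and $z_u$, and $\dG(u,q)<\dG(u,z_u)\le t-1$. This is why the paper phrases the step as ``the covered segment has $z_u$ as one of its endpoints, hence it starts at distance at most $t-1$ from $u$,'' which covers both orientations at once.

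Your `delicate point' paragraph addresses a non-issue instead: a path in the tree $T$ cannot leave $P(u,v)$ and re-enter it. The case that actually needs handling is the direction in which the segment leaves $z_u$.
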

\begin{proof}
By \eqref{eq:pivb-walk}, every charged hop $(x,y)$ has $\dG(x,y)\le kc$, and a window charged to it is a subpath of $P(x,y)$ with $\sigma$ edges, together with a direction. Such a window is determined by its starting offset and its direction, which gives at most $2(kc+1)$ windows.

For the second part, fix the hop $(x,y)$ and the window $\pi$, and let $q$ and $q'$ be the first and the last vertices of $\pi$. If a good pair $\{u,v\}$ is charged to $(x,y)$ through $\pi$, then $P(u,v)=P(u,q)\cup\pi\cup P(q',v)$ and $\dG(u,q)+\sigma+\dG(q',v)=c$. By \Cref{lem:tree}, for each value of $\dG(u,q)$ there are at most $(d-1)^{\dG(u,q)}$ choices for $u$ and at most $(d-1)^{c-\sigma-\dG(u,q)}$ choices for $v$. See \Cref{fig:pivot-beta}.
Since $\dG(u,q)$ ranges over $\{0,\dots,c-\sigma\}$, the hop $(x,y)$ is charged through $\pi$ at most $(c-\sigma+1)(d-1)^{c-\sigma}\le(c+1)(d-1)^{c-\sigma}$ times, as required.

Finally, let $\beta=2$, so that $x=p_u$. Let $z_u$ be the last vertex of the common prefix of $P(u,p_u)$ and $P(u,v)$. As $p_u$ is not aligned with $v$, we have $\dG(u,z_u)\le t-1$. In the tree $T$, every path from $p_u$ to a vertex of $P(u,v)$ passes through $z_u$, so the segment of $P(u,v)$ that is covered by $P(p_u,y)$ has $z_u$ as one of its endpoints. Hence this segment, and with it $\pi$, starts at distance at most $t-1$ from $u$. Thus $\dG(u,q)$ ranges only over $\{0,\dots,t-1\}$, and the hop $(x,y)$ is charged through $\pi$ at most $t(d-1)^{c-\sigma}$ times, as required.
\end{proof}

Third, by \Cref{cl:pivb-good}, there are at least $\tfrac14\,n\,d(d-1)^{c-1}$ good pairs, each charged to one of the $|E_H|$ hops through one of at most $2(kc+1)$ windows, and by \Cref{cl:pivb-count} each hop is charged through each window at most $(c+1)(d-1)^{c-\sigma}$ times. Since $kc<g=O(\log n)$, both $kc+1$ and $c+1$ are $O(\log n)$. Moreover, by \eqref{eq:pivb-t} and \eqref{eq:pivb-r}, $(d-1)^{t-1}=O(r)$ and $(d-1)^{\rho-1}=O(r\log^2 n)$, and we get:
\begin{align*}
  2(kc+1)(c+1)(d-1)^{c-\sigma}\,|E_H| &\ge \tfrac14\,n\,d(d-1)^{c-1} = \Omega\big(n(d-1)^{c}\big)
  \\|E_H| &= \Omega\!\left(\frac{n(d-1)^{\sigma}}{(kc+1)(c+1)}\right)
  \\&\stackrel{(\ref{eq:pivb-sigma})}{=}
  \Omega\!\left(\frac{n(d-1)^{c/\beta}}{(kc+1)(c+1)\,(d-1)^{\frac{2(t-1)}{\beta}}\,(d-1)^{\frac{2(\beta-1)(\rho-1)}{\beta}}}\right)
  \\&=
  \Omega\!\left(\frac{n(d-1)^{c/\beta}}{r^2\log^{6-4/\beta} n}\right)
  \stackrel{(\ref{eq:pivb-c})}{=}
  \Omega\!\left(\frac{n^{1+\frac{2}{\beta(k+1)}}}{r^2\log^{6-4/\beta} n}\right),
\end{align*}
as required. For $\beta=2$, each hop is charged through each window at most $t(d-1)^{c-\sigma}$ times (\Cref{cl:pivb-count}), where $t=O(\log(r+1))$, and the same calculation gives:
\[
  |E_H| = \Omega\!\left(\frac{n(d-1)^{\sigma}}{(kc+1)\,t}\right)
  \stackrel{(\ref{eq:pivb-sigma})}{=}
  \Omega\!\left(\frac{n(d-1)^{c/2}}{(kc+1)\,t\,(d-1)^{t-1}(d-1)^{\rho-1}}\right)
  =
  \Omega\!\left(\frac{n^{1+\frac{1}{k+1}}}{r^2\log^{3} n\,\log(r+1)}\right),
\]
as required.
\end{proof}

\section{Stronger Lower Bounds for $(k,\beta)$-Hopsets}
\label{sec:dense-transfer}

In this section, we prove lower bounds for $(k,\beta)$-hopsets that are stronger than those of \Cref{lem:lb,lem:lb-beta}, on sparse graphs that are not of constant degree. We use the Erd\H{o}s girth conjecture, which states that for every integer $t\ge1$ there are graphs with girth at least $2t+2$ and $\Omega(n^{1+1/t})$ edges, to obtain a dense base graph, which we transform into a sparse graph by subdivision and the addition of private leaves.
Unlike the arguments of \Cref{sec:robust-hopsets}, this technique does not extend to pivot-based ADOs: already two pivots per vertex allow all the selected demands to be answered in linear space (\Cref{rem:transfer-pivots}). We first prove the two-hop bound, which is tight by the conversion in \Cref{sec:models}, and then extend it to a general hopbound.

\subsection{Two Hops}\label{sec:dense-two-hop}

The idea is as follows. We write $N,M$ for the numbers of vertices and edges of a dense base graph $B$, and $n,m$ for those of the final graph $G$. We subdivide each base edge into a path of constant length and attach $q=\ceil{M/N}$ private leaves to every base vertex. The resulting graph has $n,m=\Theta(M)$, so it is sparse. Every base edge $ab$ gives $q^2$ demands, namely the pairs of a leaf of $a$ and a leaf of $b$. The natural exact route $u\to a\to b\to v$ of such a demand uses three hops. With only two hops, one of the hops must start at a leaf, say $u$, and enter the subdivided edge $ab$. Such a hop fixes $u$ and the base edge $ab$, so it serves at most $q$ demands, one for each leaf of $b$. Dividing the $Mq^2$ demands by this bound gives a lower bound on the size of the hopset $F$.

\thmdensetransfer*

The proof below gives the same bound for every integer $\ell\ge2$ and every real stretch $k$ with $1\le k<2\ell-1$, and the theorem is the case $k=2\ell-3$. For every real $k>1$, this bound is stronger than \Cref{lem:lb}. Indeed, the smallest integer $\ell$ with $k<2\ell-1$ satisfies $\ell\le\frac{k+3}{2}<k+1$, so the exponent $\frac1\ell$ above $1$ is larger than $\frac1{k+1}$ from \Cref{lem:lb}. For example, at stretch $2\ell-3$ with $\ell\ge3$, \Cref{lem:lb} gives the exponent $\frac1{2\ell-2}$, while the bound here gives $\frac1\ell$. For $k=2$ the bound is $\Omega(m^{3/2})$, while \Cref{lem:lb} gives $\Omega(m^{4/3})$. For $k=1$ the two bounds coincide. In particular, under the girth conjecture, a $(k,2)$-hopset of size $\Ot(m^{1+1/(k+1)})$ cannot exist for every sparse graph when $k>1$. The price is that the graphs do not have constant degree, since every base vertex has $q$ leaves.

\colorlet{cpath}{black}              

\begin{figure}[t]
\centering
\begin{tikzpicture}[
    vert/.style  = {circle, draw=black, fill=white, minimum size=4.5pt, inner sep=0pt,
                    line width=0.65pt},
    basev/.style = {circle, draw=black, fill=black, minimum size=6pt, inner sep=0pt},
    bvert/.style = {circle, draw=black, fill=black!12, minimum size=10pt, inner sep=0pt,
                    line width=0.9pt, font=\scriptsize},
    tedge/.style = {gray!55, line width=0.8pt},
    pedge/.style = {cpath, line width=1.45pt},
]

\coordinate (a) at (0,2.4);
\coordinate (b) at (3.6,2.4);
\coordinate (c) at (3.6,0);
\coordinate (d) at (0,0);

\draw[tedge] (b) -- (c) -- (d) -- (a);
\draw[pedge] (a) -- (b);
\foreach \p/\q in {a/b, b/c, c/d, d/a} {
  \foreach \j in {0.25,0.5,0.75} {\path (\p) -- (\q) node[vert, pos=\j] {};}
}

\foreach \w/\angs in {a/{140,175}, b/{5,40}, c/{-60,-30,0}, d/{180,210,240}} {
  \foreach \t in \angs {
    \draw[tedge] (\w) -- ++(\t:0.7) node[vert] {};
  }
}

\draw[pedge] (a) -- ++(100:0.7) coordinate (u);
\draw[pedge] (b) -- ++(80:0.7) coordinate (v);
\node[bvert] at (u) {$u$};
\node[bvert] at (v) {$v$};

\node[basev, label={[font=\scriptsize]below right:$a$}] at (a) {};
\node[basev, label={[font=\scriptsize]below left:$b$}] at (b) {};
\node[basev] at (c) {};
\node[basev] at (d) {};

\node[font=\scriptsize] at (1.8,2.75) {$P(u,v)$};
\draw[decorate, decoration={brace,mirror,amplitude=4pt}, gray!65!black]
  (0,-0.36) -- (3.6,-0.36)
  node[midway, below=5pt, font=\scriptsize] {$L$ edges};
\node[font=\scriptsize, anchor=west] at (4.32,-0.35) {$q$ leaves};

\end{tikzpicture}
\caption{%
The construction of \Cref{thm:dense-transfer}, with a demand $\{u,v\}$ of the base edge $ab$.%
}
\label{fig:dense-graph}
\end{figure}

\begin{proof}
\claimsof{thm:dense-transfer}%
We prove the bound for every integer $\ell\ge2$ and every real $k$ with $1\le k<2\ell-1$. Set $t=\ell-1\ge1$ and $\gamma=2\ell$. Since $k<2\ell-1$, we have $\gamma>k+1$. By the girth conjecture, there are arbitrarily large $N$-vertex
base graphs $B$ with girth at least $\gamma$ and
$M=\Omega(N^{1+1/t})$ edges. Since $\gamma>k+1$, we may choose an even constant $L\ge2$ such that
\[
  (k+1)(L+2)<\gamma L.
\]
This is the girth inequality needed for the tree argument below.

\paragraph{The graph.}
Let $q=\ceil{M/N}$. We obtain $G$ from $B$ by replacing each base edge with a path of $L$ edges through new vertices, and by attaching $q$ private leaves to each base vertex (see \Cref{fig:dense-graph}). Since $N\le M$ and $Nq\le M+N$, the graph $G$ has
\[
  n=N+(L-1)M+Nq=\Theta(M),\qquad m=LM+Nq=\Theta(M).
\]
The leaves lie on no cycle, and every cycle of $G$ is a subdivided cycle of $B$. Hence $G$ has girth at least $\gamma L$, and it is bipartite since $L$ is even. It is $2$-degenerate since the base vertices are independent and every other vertex has degree at most $2$.

\paragraph{The demands.}
For every base edge $ab$, we take the $q^2$ pairs $\{u,v\}$ where $u$ is a leaf of $a$ and $v$ is a leaf of $b$. Every other $a$--$b$ path in $G$ is a subdivided $a$--$b$ path of $B$ with at least $\gamma-1\ge3$ base edges, so its length is at least $(\gamma-1)L>L$. Hence $\dG(u,v)=L+2$, and $P(u,v)$ consists of the leaf edge $ua$, the subdivided edge $ab$, and the leaf edge $bv$. Let $c=L+2$, so that $(k+1)c<\gamma L$. A demand determines its base edge, so there are $Mq^2$ distinct demands.

Let $F$ be a $(k,2)$-hopset for $G$, and let $H$ be its hop graph. The rest of the proof follows the three steps of the proof of \Cref{lem:exact}.

First, we charge each demand to a hop. Let $\{u,v\}$ be a demand of the base edge $ab$. Consider a path with at most two hops in $H$ that realizes $d_H^{\le 2}(u,v)$, and let $x$ be its intermediate vertex ($x=v$ for a single hop). Since hop weights never underestimate distances, the walk formed by $P(u,x)$ and $P(x,v)$ has at most $kc$ edges.
Together with $P(u,v)$, it has at most $(k+1)c<\gamma L$ edges, so it forms a tree. Hence, as in the proof of \Cref{cl:trace}, the walk covers every edge of $P(u,v)$. Moreover, each of $P(u,x)$ and $P(x,v)$ meets $P(u,v)$ in a contiguous segment, since two paths in a tree intersect in a path.

In particular, one of the two hops covers an edge of the subdivided edge $ab$. Exchanging the names of $u,v$ and of $a,b$ if necessary, we may assume that this is the hop $(u,x)$ and that $u$ is the leaf of $a$. The segment $P(u,x)\cap P(u,v)$ contains $u$, so it is a prefix of $P(u,v)$. Since it reaches into $ab$, it contains the leaf edge $ua$ and the first edge of the subdivided edge $ab$ (see the following \Cref{fig:two-hop-transfer}). We charge the demand $\{u,v\}$ to the ordered hop $(u,x)$. Since $\dG(u,x)\ge2$, this hop is not an edge of $G$, so it is a hop of $F$.

\colorlet{chop}{blue!65!black}       
\colorlet{cpath}{black}              
\colorlet{cbranch}{orange!85!black}  
\colorlet{csub}{orange!80!black}     

\begin{figure}[h]
\centering
\begin{tikzpicture}[
    vert/.style  = {circle, draw=black, fill=white, minimum size=4.5pt, inner sep=0pt,
                    line width=0.65pt},
    basev/.style = {circle, draw=black, fill=black, minimum size=6pt, inner sep=0pt},
    bvert/.style = {circle, draw=black, fill=black!12, minimum size=10pt, inner sep=0pt,
                    line width=0.9pt, font=\scriptsize},
    pvert/.style = {circle, draw=black, fill=white, minimum size=9pt, inner sep=0pt,
                    line width=0.9pt, font=\scriptsize},
    tedge/.style = {gray!55, line width=0.8pt},
    pedge/.style = {cpath, line width=1.45pt},
]

\foreach \i in {0,...,6} {\coordinate (p\i) at (0.8*\i,0);}

\coordinate (u)   at (-1.0,0);
\coordinate (la1) at (-0.85,-0.75);
\coordinate (la2) at (-0.15,-0.9);
\coordinate (v)   at (5.8,0);
\coordinate (lb1) at (5.55,0.75);
\coordinate (lb2) at (5.55,-0.75);

\fill[csub!13] (5.62,0) ellipse (0.42 and 1.12);
\draw[csub!58, line width=0.75pt] (5.62,0) ellipse (0.42 and 1.12);

\draw[tedge] (p0) -- (la1);  \draw[tedge] (p0) -- (la2);
\draw[tedge] (p6) -- (lb1);  \draw[tedge] (p6) -- (lb2);
\draw[pedge] (p2) -- (p6) -- (v);
\draw[cbranch, line width=1.6pt] (u) -- (p0) -- (p2);
\draw[cbranch, line width=2.3pt] (p0) -- (p1);

\draw[chop, dashed, line width=1.25pt] (u) to[bend left=40] (p2);

\foreach \i in {1,3,4,5} {\node[vert] at (p\i) {};}
\node[vert] at (la1) {}; \node[vert] at (la2) {};
\node[vert] at (lb1) {}; \node[vert] at (lb2) {};
\node[basev, label={[font=\scriptsize]below right:$a$}] at (p0) {};
\node[basev, label={[font=\scriptsize]below left:$b$}] at (p6) {};
\node[pvert, fill=cbranch!18] at (p2) {$x$};
\node[bvert] at (u) {$u$};
\node[bvert] at (v) {$v$};

\node[chop, font=\scriptsize] at (0.3,0.95) {hop $(u,x)$};
\draw[decorate, decoration={brace,mirror,amplitude=4pt}, gray!65!black]
  (0,-0.36) -- (4.8,-0.36)
  node[midway, below=5pt, font=\scriptsize] {$L$ edges};
\node[csub!75!black, font=\scriptsize] at (5.62,-1.38) {$q$ leaves of $b$};

\end{tikzpicture}
\caption{%
The first two edges of $P(u,x)$ determine the parent $a$ of $u$ and the base edge $ab$, so $v$ is one of the $q$ leaves of $b$.%
}
\label{fig:two-hop-transfer}
\end{figure}

Second, we bound the number of demands that charge a single hop.

\begin{claim}\label{cl:dense-charge}
Each ordered hop $(u,x)$ is charged by at most $q$ demands.
\end{claim}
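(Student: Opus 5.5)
Fix an ordered hop $(u,x)$ of $F$. I will show that every demand $\{u,v\}$ charging it is determined by $u$, $x$, and the choice of a leaf of one fixed base vertex $b$. Then at most $q$ demands can charge $(u,x)$.

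First, $u$ is the first endpoint of the ordered hop, so it is fixed. Every demand charging $(u,x)$ has $u$ as a leaf endpoint, and by the charging rule $u$ is a leaf of some base vertex $a$. A leaf has a unique neighbor in $G$, its parent. Hence $a$ is the unique neighbor of $u$ and is determined by $u$ alone.

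Second, I show that the hop determines the base edge $ab$. By the charging step, $P(u,x)\cap P(u,v)$ is a prefix of $P(u,v)$ containing the leaf edge $ua$ and the first edge of the subdivided base edge $ab$. So the second edge of $P(u,x)$ is the edge from $a$ to its subdivision neighbor $s$ on the subdivided path of $ab$.

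I would fix $P(u,x)$ once and for all as the canonical shortest path $P(u,x)$ of the paper. This path is also the one used in the walk, and the tree argument shows that it contains the prefix of $P(u,v)$. Since the path depends only on $(u,x)$, its second edge $as$ is determined by the hop. The subdivision vertex $s$ lies on exactly one subdivided base edge, because subdivision vertices have degree $2$ and belong to a unique base-edge path. Hence $s$ determines the base edge $ab$, and in particular the other endpoint $b$.

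Third, the demand $\{u,v\}$ belongs to the base edge $ab$, so $v$ is one of the $q$ private leaves of $b$. Therefore at most $q$ demands charge $(u,x)$.

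The only subtle point is the second step. The charging step proves that the prefix of $P(u,v)$ is contained in the particular shortest path $P(u,x)$ used in the walk. Its uniqueness follows from the tree argument, since $\dG(u,x)\le kc<\gamma L/2$-type bounds and the girth $\gamma L$ make short shortest paths unique. So this path, and with it its first two edges, is a function of the hop alone and not of the demand. I expect this to be routine given the earlier tree argument. I would state it explicitly: the union $P(u,x)\cup P(u,v)$ forms a tree, so the two paths share their initial edges $ua$ and $as$. These edges are read off from $P(u,x)$, which completes the count.
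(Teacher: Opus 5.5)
Your proof is correct and matches the paper's argument: the fixed path $P(u,x)$ determines the leaf edge $ua$ and the next edge on the subdivided base edge $ab$, hence $b$, so $v$ ranges over the $q$ leaves of $b$. Your uniqueness digression is unnecessary, because $P(u,x)$ is a fixed choice by the paper's convention and the charging walk uses exactly that path. Moreover, $\dG(u,x)\le kc$ alone does not give $\dG(u,x)<\gamma L/2$; if you wanted uniqueness, you would need $2\dG(u,x)\le\dG(u,x)+\dG(x,v)+c\le(k+1)c<\gamma L$.
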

\begin{proof}
The hop fixes the path $P(u,x)$. If a demand charges $(u,x)$, then the first edge of $P(u,x)$ is the leaf edge $ua$, and its second edge is the first edge of the subdivided edge $ab$. This edge lies on only one subdivided base edge, so the hop fixes $b$. Hence $v$ is one of the $q$ leaves of $b$.
\end{proof}

Third, each of the $Mq^2$ demands charges one of the $2|F|$ ordered hops of $F$, and each of them is charged at most $q$ times (\Cref{cl:dense-charge}). Hence
\[
  Mq^2\le2q|F|,\qquad\text{so}\qquad |F|\ge\frac{Mq}{2}\ge\frac{M^2}{2N}.
\]

It remains to express this bound in terms of $m$. Since $M=\Omega(N^{1+1/t})$, we have $N=O(M^{t/(t+1)})$.
Together with $m=\Theta(M)$, this gives
\[
  |F|=\Omega\!\left(M^{2-\frac{t}{t+1}}\right)=\Omega\!\left(m^{1+\frac1\ell}\right),
\]
as required.
\end{proof}

\begin{remark}\label{rem:degenerate}
The graphs of \Cref{thm:dense-transfer} are bipartite and $2$-degenerate, as shown in the proof, and so are the graphs of \Cref{thm:beta-dense-transfer}. Hence these lower bounds hold even for bipartite $2$-degenerate graphs, so such structural restrictions do not allow smaller hopsets.
\end{remark}

\subsection{Extension to General Hopbound}\label{sec:dense-beta-hop}
We extend \Cref{thm:dense-transfer} to every fixed hopbound
$\beta\ge2$. The demands of \Cref{thm:dense-transfer} become cheap with three hops, since the route $u\to a\to b\to v$ needs only the $M$ hops $ab$. Therefore, the parents of the two demand leaves are now $s=\beta-1$ base edges apart, along a path $a_0,a_1,\dots,a_s$ of $B$. The natural exact route $u\to a_0\to a_1\to\dots\to a_s\to v$ then uses $\beta+1$ hops, one more than allowed, just as the route $u\to a\to b\to v$ does for $\beta=2$. For $\beta=2$,
the bound below recovers \Cref{thm:dense-transfer}.

\thmbetadensetransfer*

As for \Cref{thm:dense-transfer}, the proof below gives the same bound for every integer $\ell\ge2$ and every real stretch $k$ with $1\le k<\frac{2\ell}{\beta-1}-1$, and the theorem is the case $k=\frac{2\ell-\beta-1}{\beta-1}$. Here, and throughout the proof, the hidden constants depend on $k$, $\beta$, and $\ell$, but not on $n$.
For fixed $k$ and $\beta$, this bound applies to every integer $\ell>\frac{(k+1)(\beta-1)}{2}$, and in particular to some $\ell\le\frac{(k+1)(\beta-1)+2}{2}$. Hence the exponent above $1$ is at least $\frac{2}{(k+1)(\beta-1)+2}$. For every $k>1$, this is larger than $\frac{2}{\beta(k+1)}=\frac{2}{(k+1)(\beta-1)+k+1}$ from \Cref{lem:lb-beta}. For example, for $\beta=3$ and stretch $\ell-2$ with $\ell\ge4$, \Cref{lem:lb-beta} gives the exponent $\frac{2}{3(\ell-1)}$, while the bound here gives $\frac1\ell$. For $k=1$ the two exponents are equal, but the bound here has no $\log^2 n$ loss. As in \Cref{thm:dense-transfer}, the graphs here are not of constant degree.

The idea extends the proof of \Cref{thm:dense-transfer} as in \Cref{sec:betahop}. The hops of an answer must cover the $s=\beta-1$ subdivided base edges between $u$ and $v$ (see the following \Cref{fig:dense-graph-beta}). If the first or the last hop enters one of them, then it fixes a leaf and the adjacent base edge, exactly as in the proof of \Cref{thm:dense-transfer}. Otherwise, at most $\beta-2$ internal hops cover the $\beta-1$ base edges, so one of them passes through a junction $a_i$ and fixes the two base edges around it (see \Cref{fig:beta-transfer}). This pair of base edges plays the role of the window of \Cref{sec:betahop}.

\colorlet{cpath}{black}              

\begin{figure}[h]
\centering
\begin{tikzpicture}[
    vert/.style  = {circle, draw=black, fill=white, minimum size=4.5pt, inner sep=0pt,
                    line width=0.65pt},
    basev/.style = {circle, draw=black, fill=black, minimum size=6pt, inner sep=0pt},
    bvert/.style = {circle, draw=black, fill=black!12, minimum size=10pt, inner sep=0pt,
                    line width=0.9pt, font=\scriptsize},
    tedge/.style = {gray!55, line width=0.8pt},
    pedge/.style = {cpath, line width=1.45pt},
]

\coordinate (a0) at (0,0);
\coordinate (a1) at (3.2,0);
\coordinate (a2) at (6.4,0);
\coordinate (u)  at (-1.0,0);
\coordinate (v)  at (7.4,0);

\foreach \x/\y in {-0.85/-0.75, -0.15/-0.9} {
  \draw[tedge] (a0) -- ++(\x,\y) node[vert] {};
  \draw[tedge] (a2) -- ++(-\x,\y) node[vert] {};
}
\foreach \t in {60,90,120} {\draw[tedge] (a1) -- ++(\t:0.75) node[vert] {};}

\draw[pedge] (u) -- (a0) -- (a1) -- (a2) -- (v);
\foreach \p/\q in {a0/a1, a1/a2} {
  \foreach \j in {0.25,0.5,0.75} {\path (\p) -- (\q) node[vert, pos=\j] {};}
}

\node[basev, label={[font=\scriptsize]below right:$a_0$}] at (a0) {};
\node[basev, label={[font=\scriptsize]below:$a_1$}] at (a1) {};
\node[basev, label={[font=\scriptsize]below left:$a_2$}] at (a2) {};
\node[bvert] at (u) {$u$};
\node[bvert] at (v) {$v$};

\node[font=\scriptsize] at (1.6,0.35) {$P(u,v)$};

\end{tikzpicture}
\caption{%
A demand $\{u,v\}$ in the proof of \Cref{thm:beta-dense-transfer} with $s=2$. The parents $a_0$ and $a_2$ of $u$ and $v$ are $s$ base edges apart.%
}
\label{fig:dense-graph-beta}
\end{figure}

\begin{proof}
\claimsof{thm:beta-dense-transfer}%
We prove the bound for every integer $\ell\ge2$ and every real $k$ with $1\le k<\frac{2\ell}{\beta-1}-1$. We first make the degrees of the base graph almost equal, then define the graph and the demands, and finally charge every demand either to an endpoint hop or to an internal hop.
Set $s=\beta-1$, $t=\ell-1$, and $\gamma=2\ell$. Then $\gamma>(k+1)s$, since $k<\frac{2\ell}{s}-1$. Moreover, since $k\ge1$, we have $2\ell>(k+1)s\ge2s$, so $t\ge s\ge1$ and $\gamma\ge2s+2$. By the girth conjecture,
there are arbitrarily large $N_0$-vertex base graphs $B_0$ with
girth at least $\gamma$ and $M_0=\Omega(N_0^{1+1/t})$ edges.

\paragraph{Almost equal degrees.}
We need the base degrees to differ by at most a factor of $2$, and obtain this by deleting low-degree vertices and splitting the remaining vertices.
Let $D=\floor{M_0/(4N_0)}$, which tends to infinity, so $D\ge2$. We repeatedly delete vertices of degree less than $D$. Each deletion removes fewer than $D$ edges, so fewer than $DN_0\le M_0/4$ edges are deleted, and at least $3M_0/4$ edges remain.
Next, we split every remaining vertex $w$ into $\floor{\deg(w)/D}$ copies, and partition the edges of $w$ among its copies into groups of sizes between $D$ and $2D-1$. This is possible since $\deg(w)<(\floor{\deg(w)/D}+1)D$.
The resulting graph $B$ is simple since $B_0$ is, and all its degrees are in $[D,2D-1]$. It has $M\in[3M_0/4,M_0]$ edges, and hence $N=\Theta(M/D)=\Theta(N_0)$ vertices, since $D=\Theta(M_0/N_0)$. Its girth is still at least $\gamma$, since a cycle of $B$ maps to a closed trail of $B_0$ with distinct edges, which contains a cycle of $B_0$ that is not longer. Thus we have a base graph $B$ with
\[
  M=\Theta(ND),\qquad M=\Omega(N^{1+1/t}).
\]

\paragraph{The graph.}
Since $\gamma>(k+1)s$, we may choose an even constant $L\ge2$ such that
\[
  (k+1)(sL+2)<\gamma L.
\]
We construct $G$ from $B$ as in the proof of \Cref{thm:dense-transfer}, with $q=D$. That is, we subdivide every base edge into a path of $L$ edges and attach $D$
private leaves to every base vertex. As there, $G$ is bipartite and $2$-degenerate, it has girth at least $\gamma L$, and since $ND/2\le M\le ND$, it has
\[
 n=N+(L-1)M+ND=\Theta(M),\qquad
 m=LM+ND=\Theta(M).
\]

\paragraph{The demands.}
For every path $a_0,a_1,\dots,a_s$ with $s$ edges in $B$, we take the $D^2$ pairs $\{u,v\}$ where $u$ is a leaf of $a_0$ and $v$ is a leaf of $a_s$. The $s$ subdivided edges $a_{i-1}a_i$ are the \emph{base edges} of the demand (see \Cref{fig:dense-graph-beta}).
Every other $a_0$--$a_s$ path in $B$ has at least $\gamma-s\ge s+2$ edges, since together with $a_0,\dots,a_s$ it contains a cycle. Its subdivision therefore has length more than $sL$. Hence $\dG(u,v)=sL+2$, and $P(u,v)$ consists of the two leaf edges and the $s$ base edges. Let $c=sL+2$, so that $(k+1)c<\gamma L$.
A demand determines $a_0$ and $a_s$, and hence its base path, so the demands are distinct. Since $\gamma>s$, every walk with $s$ edges in $B$ that never returns along the edge it just used is a path. There are $N$ choices for $a_0$, at least $D$ for $a_1$, and at least $D-1$ for each of $a_2,\dots,a_s$. Every path is counted twice, once from each end. Since $D-1\ge D/2$, the number of demands is at least
\[
  \tfrac12 N D(D-1)^{s-1}\cdot D^2\ \ge\ \frac{ND^{s+2}}{2^{s}}.
\]

Let $F$ be a $(k,\beta)$-hopset for $G$, and let $H$ be its hop graph. As before, the rest of the proof has three steps.

First, we charge each demand to a hop.
Let $\{u,v\}$ be a demand with base path $a_0,\dots,a_s$. As in \Cref{sec:betahop}, let $(u=x_0,x_1),(x_1,x_2),\dots,(x_{j-1},x_j=v)$, with $j\le\beta$, be the hops of a path realizing $d_H^{\le\beta}(u,v)$, and let $P_H^{\le\beta}(u,v)$ be the concatenation of $P(x_0,x_1),\dots,P(x_{j-1},x_j)$. It has at most $kc$ edges. Since $(k+1)c<\gamma L$, the proof of \Cref{cl:trace} applies, and gives $P(u,v)\subseteq P_H^{\le\beta}(u,v)$. Moreover, as in the paragraph after \Cref{cl:trace}, each path $P(x_{i-1},x_i)$ meets $P(u,v)$ in a contiguous segment.
We say that a hop \emph{touches} a base edge if its shortest path contains an edge of that subdivided base edge. There are two cases.

\paragraph{Case 1: an endpoint hop touches a base edge.}
This case is the argument of \Cref{thm:dense-transfer}, with $a=a_0$ and $b=a_1$. Reversing the demand path if necessary, we may assume that the first hop $(u,x_1)$ touches a base edge. As there, the segment $P(u,x_1)\cap P(u,v)$ contains $u$, so it is a prefix of $P(u,v)$. Hence it contains the leaf edge $ua_0$ and the first edge of the base edge $a_0a_1$. As there, we charge the demand to the ordered hop $(u,x_1)$. We call this an \emph{endpoint charge}.

\paragraph{Case 2: no endpoint hop touches a base edge.}
Then the internal hops cover all edges of the $s$ base edges. Each of the $s$ base edges is touched by one of the at most $j-2\le s-1$ internal hops. By the pigeonhole principle, some internal hop $(x,y)$ touches two distinct base edges.
Its shortest path meets $P(u,v)$ in a contiguous segment, so it touches two consecutive base edges $a_{i-1}a_i$ and $a_ia_{i+1}$, possibly only parts of them. Thus $P(x,y)$ contains the junction $a_i$ as an internal vertex, together with the two edges of $P(u,v)$ at $a_i$ (see the following \Cref{fig:beta-transfer}). We charge the demand to the hop $(x,y)$, and call this an \emph{internal charge}. Since $P_H^{\le\beta}(u,v)$ has at most $kc$ edges, $\dG(x,y)\le kc$.

\colorlet{chop}{blue!65!black}       
\colorlet{cpath}{black}              
\colorlet{cbranch}{orange!85!black}  

\begin{figure}[h]
\centering
\begin{tikzpicture}[
    vert/.style  = {circle, draw=black, fill=white, minimum size=4.5pt, inner sep=0pt,
                    line width=0.65pt},
    basev/.style = {circle, draw=black, fill=black, minimum size=6pt, inner sep=0pt},
    bvert/.style = {circle, draw=black, fill=black!12, minimum size=10pt, inner sep=0pt,
                    line width=0.9pt, font=\scriptsize},
    pvert/.style = {circle, draw=black, fill=white, minimum size=9pt, inner sep=0pt,
                    line width=0.9pt, font=\scriptsize},
    pedge/.style = {cpath, line width=1.45pt},
]

\coordinate (u)   at (-0.9,0);
\coordinate (a0)  at (0,0);
\coordinate (am)  at (1.6,0);    
\coordinate (ai)  at (4.0,0);    
\coordinate (ap)  at (6.4,0);    
\coordinate (as)  at (8.0,0);    
\coordinate (v)   at (8.9,0);
\foreach \j in {1,2,3} {
  \coordinate (l\j) at (1.6+0.6*\j,0);
  \coordinate (r\j) at (4.0+0.6*\j,0);
}

\draw[pedge] (u) -- (a0);
\draw[pedge, dotted] (a0) -- (am);
\draw[pedge] (am) -- (ap);
\draw[pedge, dotted] (ap) -- (as);
\draw[pedge] (as) -- (v);

\draw[cbranch, line width=1.6pt] (l2) -- (r2);

\draw[chop, dashed, line width=1.25pt] (l2) to[bend left=50] (r2);

\foreach \j in {1,3} {\node[vert] at (l\j) {}; \node[vert] at (r\j) {};}
\node[basev, label={[font=\scriptsize]below:$a_0$}] at (a0) {};
\node[basev, label={[font=\scriptsize]below:$a_{i-1}$}] at (am) {};
\node[basev, label={[font=\scriptsize]below:$a_i$}] at (ai) {};
\node[basev, label={[font=\scriptsize]below:$a_{i+1}$}] at (ap) {};
\node[basev, label={[font=\scriptsize]below:$a_s$}] at (as) {};
\node[pvert, fill=cbranch!18] at (l2) {$x$};
\node[pvert, fill=cbranch!18] at (r2) {$y$};
\node[bvert] at (u) {$u$};
\node[bvert] at (v) {$v$};

\node[chop, font=\scriptsize] at (4.0,0.95) {hop $(x,y)$};

\end{tikzpicture}
\caption{%
The path $P(x,y)$ identifies the two base edges $a_{i-1}a_i$ and $a_ia_{i+1}$ even though it traverses only part of each.%
}
\label{fig:beta-transfer}
\end{figure}

In both cases, the shortest path of the charged hop contains two edges of $P(u,v)$. Hence the charged hop is not an edge of $G$, so it is a hop of $F$.

Second, we bound the number of demands that charge a single hop, separately for Cases 1 and 2.

\begin{claim}\label{cl:dense-endpoint}
Each ordered hop receives at most $D(2D)^{s-1}$ endpoint charges.
\end{claim}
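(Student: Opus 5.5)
Fix an ordered hop $(u,x_1)$ and count the demands that can give it an endpoint charge. The argument is that of \Cref{cl:dense-charge}, extended along the base path. For every such demand $\{u,v\}$, with base path $a_0,\dots,a_s$ oriented so that $u$ is a leaf of $a_0$, the segment $P(u,x_1)\cap P(u,v)$ is a prefix of $P(u,v)$. It contains the leaf edge $ua_0$ and the first edge of the subdivided base edge $a_0a_1$. So the second vertex of $P(u,x_1)$ is $a_0$, and its third vertex is the first internal subdivision vertex of the edge $a_0a_1$. These vertices are determined by the hop alone, since $P(u,x_1)$ is a fixed (unique, by the tree argument) shortest path.

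An internal subdivision vertex lies on exactly one subdivided base edge. Therefore the hop determines $a_0$ and $a_1$, and so determines the first base edge of the demand. It also determines $u$, which is the first endpoint of the ordered hop.

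It remains to count the completions. The base path $a_1,a_2,\dots,a_s$ is extended one vertex at a time. Every base vertex has degree at most $2D-1<2D$ in $B$, so each of $a_2,\dots,a_s$ has fewer than $2D$ choices given its predecessor. This gives at most $(2D)^{s-1}$ choices for the rest of the base path. Once $a_s$ is fixed, $v$ is one of its $D$ private leaves. Since a demand determines its base path, as noted earlier, each demand is counted at most once for the fixed orientation of the hop. The total is therefore at most $D(2D)^{s-1}$ endpoint charges.

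The only delicate point is the orientation convention. The charge in Case~1 was made after possibly reversing the demand path, so that the charged hop is $(u,x_1)$ with $u$ a leaf of $a_0$. A single ordered hop could be charged by demands reversed in this way. This is harmless, because the ordered hop always starts at the demand leaf whose parent is $a_0$ in the chosen orientation, and the counting above covers all such demands. When $s=1$ the count reduces to $D$, recovering \Cref{cl:dense-charge}.
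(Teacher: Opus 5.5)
Your proposal is correct and follows essentially the same argument as the paper. As in \Cref{cl:dense-charge}, the first two edges of $P(u,x_1)$ fix $u$, $a_0$ and the base edge $a_0a_1$; there are then fewer than $2D$ choices for each of $a_2,\dots,a_s$ and $D$ for the leaf $v$. (Your identification of $a_1$ through the internal subdivision vertex uses $L\ge2$, which holds here.)
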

\begin{proof}
Let $(u,x)$ be an ordered hop. Exactly as in \Cref{cl:dense-charge}, the path $P(u,x)$ fixes the leaf $u$, its parent $a_0$, and the base edge $a_0a_1$. Each of $a_2,\dots,a_s$ is a neighbor of the previous base vertex in $B$, so there are fewer than $2D$ choices for each of them. Finally, there are $D$ choices for the leaf $v$ of $a_s$. Hence there are at most $D(2D)^{s-1}$ such demands.
\end{proof}

\begin{claim}\label{cl:dense-internal}
Each hop $(x,y)$ with $\dG(x,y)\le kc$ receives at most $2(kc+1)(s-1)D^2(2D)^{s-2}$ internal charges.
\end{claim}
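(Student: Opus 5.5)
Fix a hop $(x,y)$ with $\dG(x,y)\le kc$. The path $P(x,y)$ has at most $kc+1$ vertices. We count the demands whose internal charge goes to this hop in two stages. First we count the choices of the junction $a_i$ together with the two base edges around it. Then, given these, we count the choices of the rest of the base path and of the two leaves.

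\textbf{The junction and its two base edges.} If $(x,y)$ receives an internal charge from a demand with base path $a_0,\dots,a_s$, then $P(x,y)$ contains the junction $a_i$ as an internal vertex, for some $1\le i\le s-1$, together with the two edges of $P(u,v)$ at $a_i$. So $a_i$ is one of at most $kc+1$ vertices of $P(x,y)$. Once $a_i$ is fixed, its two neighbors on $P(x,y)$ are the first edges of the subdivided base edges $a_ia_{i-1}$ and $a_ia_{i+1}$. Each subdivided edge is determined by any one of its edges, so the unordered pair $\{a_{i-1},a_{i+1}\}$ is fixed. The factor $2$ accounts for which of the two neighbors plays the role of $a_{i-1}$, that is, for the orientation of the base path relative to $P(x,y)$. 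The index $i$ ranges over $s-1$ values. Altogether this gives at most $2(kc+1)(s-1)$ choices of $(i,a_{i-1},a_i,a_{i+1})$.

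\textbf{Extending to a demand.} With these fixed, the base path has $s-2$ remaining base vertices: $a_{i+2},\dots,a_s$ on one side and $a_{i-2},\dots,a_0$ on the other. Each of them is a neighbor in $B$ of the previously chosen vertex. All degrees of $B$ lie in $[D,2D-1]$, so each has fewer than $2D$ choices, contributing $(2D)^{s-2}$. Finally, $u$ is one of the $D$ leaves of $a_0$ and $v$ is one of the $D$ leaves of $a_s$, contributing $D^2$. Multiplying gives the bound $2(kc+1)(s-1)D^2(2D)^{s-2}$.

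The only delicate point is the junction step. We must check that $P(x,y)$ really determines both adjacent base edges, even though it may traverse only a part of each. This holds because the edges of $P(u,v)$ at $a_i$ lie on $P(x,y)$, which the case analysis guarantees, and because an edge of a subdivided base edge lies on exactly one base edge. A demand may also touch several junctions of the hop. This causes no problem, since overcounting only weakens the bound in our favor.
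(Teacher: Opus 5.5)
Your proposal is correct and follows essentially the same route as the paper's proof. Both count $kc+1$ choices for the junction $a_i$ on $P(x,y)$, use the two edges of $P(x,y)$ at $a_i$ to fix the two adjacent base edges, take a factor $2$ for orientation and $s-1$ for the index $i$, then at most $(2D)^{s-2}$ choices for the remaining base vertices and $D^2$ for the leaves.
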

\begin{proof}
The path $P(x,y)$ has at most $kc+1$ vertices, so there are at most $kc+1$ choices for the junction $a_i$. The two edges of $P(x,y)$ at $a_i$ lie on two subdivided base edges, which are therefore fixed, and there are $2$ choices for which of them is $a_{i-1}a_i$. There are $s-1$ choices for the index $i$. Then, as in \Cref{cl:dense-endpoint}, there are fewer than $2D$ choices for each of the other $s-2$ base vertices, and $D$ choices for each of the two leaves. Hence there are at most $2(kc+1)(s-1)D^2(2D)^{s-2}$ such demands.
\end{proof}

Third, each demand charges a hop of $F$. An unordered hop gives two ordered hops, so by \Cref{cl:dense-endpoint} and \Cref{cl:dense-internal} it is charged at most
\[
  2D(2D)^{s-1}+2(kc+1)(s-1)D^2(2D)^{s-2}=2^{s-1}\bigl(2+(kc+1)(s-1)\bigr)D^s
\]
times. Since $k$, $\beta$, and $\ell$ are fixed, so are $L$ and $c=sL+2$, and this is $O(D^s)$. Thus the choices of a junction and its position cost only a constant factor, whereas the corresponding factors in \Cref{sec:betahop} depend on $\log n$.
Comparing with the number of demands, we get
\[
  \frac{ND^{s+2}}{2^{s}}\le|F|\cdot2^{s-1}\bigl(2+(kc+1)(s-1)\bigr)D^s,\qquad\text{so}\qquad |F|=\Omega\!\left(\frac{ND^{s+2}}{D^s}\right)=\Omega(ND^2).
\]

It remains to express this bound in terms of $m$. Since $M=\Theta(ND)$, we have $|F|=\Omega(ND^2)=\Omega(M^2/N)$. As in the proof of \Cref{thm:dense-transfer}, $M=\Omega(N^{1+1/t})$ gives $N=O(M^{t/(t+1)})$. Together with $m=\Theta(M)$, this yields
\[
 |F|=\Omega\!\left(M^{2-\frac{t}{t+1}}\right)=\Omega\!\left(m^{1+\frac1\ell}\right),
\]
as required.
\end{proof}

\begin{remark}[Scope of the reduction]
\label{rem:transfer-pivots}
\Cref{thm:dense-transfer,thm:beta-dense-transfer} concern $(k,\beta)$-hopsets only. Suppose that every leaf has its parent as a pivot, in addition to itself. Then every demand of these families has an exact answer with two pivot jumps, from $u$ to $a_0$ and from $a_s$ to $v$, and one hop per base edge, that is, $\beta-1$ hops. Storing the base-edge shortcuts and the leaf-to-parent distances takes only $M+Nq=O(n)$ space. This linear-space construction answers all the selected demands. Thus these charging arguments give no superlinear lower bound for $r$-pivot $\beta$-hop ADOs (\Cref{def:piv-ado-beta}) with $r\ge2$, so the technique does not improve \Cref{thm:lb-pivot,thm:lb-pivot-beta}.
\end{remark}

\section{Discussion}\label{sec:discussion}

\paragraph{Hub labelings.}
A \emph{stretch-$k$ hub labeling} of $G$ assigns to each vertex $u$ a label
$L(u)\subseteq V$, stored together with the distances $\dG(u,x)$ for $x\in L(u)$, such that
$\min_{x\in L(u)\cap L(v)}\bigl(\dG(u,x)+\dG(x,v)\bigr)\le k\cdot\dG(u,v)$ for all $u,v\in V$.
Stretch-$k$ hub labelings and $(k,2)$-hopsets are the same object, up to a constant factor and an additive $O(n+m)$ term in size.
Given a labeling, the set $F=\{\{u,x\}:x\in L(u),\,x\ne u\}$ is a $(k,2)$-hopset of size $|F|\le\sum_{u\in V}|L(u)|$.
Given a $(k,2)$-hopset $F$ with hop graph $H$, the labels $L(u)=\{u\}\cup N_H(u)$, where $N_H(u)$ is the set of neighbors of $u$ in $H$, form a stretch-$k$ hub labeling of total size $n+2|E_H|\le n+2(m+|F|)$
(a path with the two hops $(u,x)$ and $(x,v)$ gives the common hub $x$, and a single hop $(u,v)$ gives the common hub $v$).
Hence, by \Cref{lem:lb}, every stretch-$k$ hub labeling of the graphs of that lemma has total label size
$\Omega\!\left(m^{1+1/(k+1)}\right)$, and unconditionally, on the graphs of \Cref{lem:explicit-girth},
$\Omega\!\left(m^{1+4/(5(k+1))}\right)$. On the sparse graphs of \Cref{thm:dense-transfer}, the same equivalence gives the stronger lower bound $\Omega\!\left(m^{1+1/\ell}\right)$ on total label size for every stretch $k<2\ell-1$, where $\ell\ge2$ is an integer.

\paragraph{Compact routing schemes.}
A \emph{compact routing scheme} assigns every vertex a small routing table and a short label, and the goal is to route messages from the source to the destination with little local storage and small stretch.
The best known compact routing schemes for general graphs~\cite{DBLP:conf/spaa/ThorupZ01,DBLP:conf/podc/Chechik13,DBLP:conf/podc/RodittyT15,kadria2025compact} are in fact pivot-based $1$-hop ADOs with additional restrictions. Hence, on the graphs of \Cref{thm:lb-pivot}, every such scheme with stretch $k$ and $\Ot(1)$ pivots per vertex uses $\widetilde\Omega(m^{1+\frac{2}{k+1}})$ total storage, and in particular $\widetilde\Omega(m^{\frac{2}{k+1}})$ storage at some vertex, even in sparse graphs.

\paragraph{Hopsets versus spanners.}
Our lower bounds concern structures that answer a query by combining a bounded number of stored distances, such as hopsets, hub labelings, and pivot-based ADOs.
Spanners and emulators are different in this respect. A spanner also compresses the graph, but extracting a distance from it requires a shortest-path computation over the stored subgraph, which may take linear time.
The bounds reflect this difference. The graph $G$ itself is an exact spanner with $O(m)$ edges, whereas on sparse graphs hopsets with any constant hopbound $\beta$ require superlinear size (\Cref{lem:lb-beta}).

\paragraph{On the girth hypothesis.}
The stronger hopset bounds of \Cref{sec:dense-transfer} use the standard form of the girth conjecture, described below.

\begin{remark}[The standard Erd\H{o}s girth conjecture]\label{rem:sparse}
Previous lower bounds, such as the lower bound of Thorup and Zwick~\cite{DBLP:journals/jacm/ThorupZ05}, rely on the standard Erd\H{o}s girth conjecture, which states that for every integer $\ell\ge1$ there are graphs with girth at least $2\ell+2$ and $\Omega(n^{1+1/\ell})$ edges.
By the Moore bound, a graph with girth at least $2\ell+2$ has $O(n^{1+1/\ell})$ edges. Thus the conjectured edge count matches the Moore bound up to a constant factor.

The arguments of \Cref{sec:robust-hopsets,sec:pivot-lower-bounds} also yield lower bounds under the standard conjecture, on graphs with $n^{1+\eps}$ edges for an arbitrarily small $\eps>0$. Fix a sufficiently large integer $\ell$, and let $G$ be a graph given by the standard conjecture. By the regularization lemma of Erd\H{o}s and Simonovits~\cite{ErdosSimonovits70}, in the form of Jiang and Seiver~\cite[Proposition~2.7]{JiangS12}, $G$ contains a subgraph $G'$ on $n'\to\infty$ vertices with $\Omega(n'^{1+1/\ell})$ edges and maximum degree $\Delta$ at most a constant $K=K(\ell)$ times its minimum degree.\footnote{The proposition is stated for an input density coefficient at least $1$, but its proof in~\cite[Appendix~B]{JiangS12} works for any fixed positive coefficient $a$. Retaining the factor $1/a$ in the bound on the number of iterations changes the resulting order bound only by a constant factor. In particular, $n'\to\infty$ still holds.} As a subgraph of $G$, it has girth $g\ge2\ell+2$. The Moore bound and almost regularity give $\Delta=\Theta(n'^{1/\ell})$ and $g=O(\ell)$.

Set $c=\floor{(2\ell+1)/(k+1)}$, so that $(k+1)c<g$ and $c\le\ell$. Local tree structure gives $\Omega_{\ell}(n'\Delta^c)$ pairs at distance $c$. A pivot jump of length at most $kc$ contains $O_{\ell,k}(1)$ edges, and each such edge lies on at most $O_{\ell}(\Delta^{c-1})$ shortest paths of length $c$, by fixing its position on the path and extending in both directions. Since there are at most $n'r$ pivot jumps, discarding all pairs whose shortest path shares an edge with any such jump removes at most $O_{\ell,k}(n'r\Delta^{c-1})$ pairs. For constant $r$, their fraction is $O_{\ell,k}(r/\Delta)=o(1)$. Thus we may take $t=\rho=1$: every pivot jump used by a remaining pair covers no edge of its shortest path.

The hops must therefore cover all $c$ edges, so one hop covers a window of $\sigma=\ceil{c/\beta}$ edges. For a fixed hop there are $O_{\ell,k}(1)$ windows, each contained in at most $O_{\ell}(\Delta^{c-\sigma})$ of these paths. Dividing the pair count by this bound gives $\Omega_{\ell,k,\beta}(n'\Delta^\sigma)$ stored edges. Since $\sigma/\ell\ge2/(\beta(k+1))-O_{k,\beta}(1/\ell)$, every such oracle requires $\Omega(n'^{1+\frac{2}{\beta(k+1)}-\delta})$ space, where $\delta=O_{k,\beta}(1/\ell)$. Finally, $m=\Theta(n'^{1+1/\ell})$, so expressing the bound in terms of $m$ costs another $O(1/\ell)$ in the exponent. Both losses tend to zero as $\ell$ grows.
\end{remark}

\paragraph{Open problems.}
For every integer $\ell\ge3$, $(2\ell-3,2)$-hopsets of size $O(m^{1+1/\ell})$ exist on sparse graphs (\Cref{sec:models}), so \Cref{thm:dense-transfer} is tight. For $\beta\ge3$, we do not know whether the lower bound of \Cref{thm:beta-dense-transfer} is tight.

\begin{problem}\label{prob:hopset-beta}
For integers $\beta\ge3$ and $\ell>\beta$, is there a $\left(\frac{2\ell-\beta-1}{\beta-1},\beta\right)$-hopset of size $\Ot(m^{1+1/\ell})$ for every graph with $m=O(n)$?
\end{problem}

For pivot-based ADOs, the stretch--space curve is a lower bound for one hop (\Cref{thm:lb-pivot}), but two hops can improve upon it. For $k\in\{2,3,4,5\}$, there are two-hop pivot-based ADOs that use $\Ot(n^{1+1/k})$ space, by Agarwal, Godfrey, and Har-Peled~\cite{DBLP:conf/infocom/AgarwalGH11} and by Kadria and Roditty~\cite{KadriaR25new}. This is below the curve, since $\frac1k<\frac2{k+1}$. For integers $k\ge6$, achieving $\Ot(n^{1+1/k})$ space for every sparse graph remains open, as asked by Kadria and Roditty~\cite{KadriaR25new}.

A similar question arises for pivot-based ADOs with more hops. For $\beta=1$, constructions on the stretch--space curve match our lower bound at the known achievable stretches. For $\beta\ge2$, the corresponding space bound remains open.

\begin{problem}
For every real $k\ge1$ and integer $\beta\ge2$, does every graph with $m=\Theta(n)$ edges admit an $r$-pivot $\beta$-hop $k$-stretch ADO with $r=\Ot(1)$ and space $\Ot(m^{1+\frac{2}{\beta(k+1)}})$?
\end{problem}

Even for $\beta=1$ the picture is not complete. \Cref{thm:lb-pivot} holds for every real $k\ge1$, while matching ADOs are known for every integer stretch, by Thorup and Zwick~\cite{DBLP:journals/jacm/ThorupZ05} for odd integers and by P\v{a}tra\c{s}cu, Roditty, and Thorup~\cite{PatrascuRT12} for even integers (see also~\cite{DBLP:conf/wdag/AbrahamG11}), and for the fractional stretches $k=2a+1\pm2/b$, where $a,b\ge1$ are integers~\cite{PatrascuRT12}.

\begin{problem}
Is there, for every real $k\ge1$, an $\Ot(1)$-pivot $1$-hop ADO with stretch $k$ that uses $\Ot(m^{1+\frac{2}{k+1}})$ space? Equivalently, is every point of the stretch--space curve achievable?
\end{problem}
For every $k<3$ that is not of the form $3-\frac{2}{\ell}$, P\v{a}tra\c{s}cu, Roditty, and Thorup~\cite{PatrascuRT12} show, under a set-intersection hypothesis, that oracles with stretch $k$ and $O(1)$ query time require polynomially more space than the curve on sparse graphs. For example, for $1<k<2$ such oracles require $\widetilde\Omega(n^2)$ space~\cite{PatrascuR14}.
This does not settle the problem. Their bound is conditional, and it applies only to $O(1)$ query time, whereas a pivot-based ADO may have $\Ot(1)$ pivots, and hence $\Ot(1)$ query time.

\paragraph{Use of AI.}
ChatGPT and Claude were used to assist with drafting and revising the exposition, proofreading, checking mathematical arguments, and verifying references. The authors reviewed the suggestions and take full responsibility for the final content, including all results, proofs, and citations.

\bibliographystyle{alpha}
\bibliography{articles}

\appendix
\section{Known Oracles in the Pivot Model}\label{sec:models}
In this section, we show that several known distance oracles are $r$-pivot $\beta$-hop ADOs (\Cref{def:piv-ado,def:piv-ado-beta}) with the parameters of \Cref{tab:conversions}.

\begin{table}[t]
\centering
\small
\renewcommand{\arraystretch}{1.2}
\begin{tabular}{@{}lccccl@{}}
\toprule
Reference & Stretch & Pivots & Hops & Space for $m=\Theta(n)$ & Comments\\
\midrule
\cite{DBLP:journals/jacm/ThorupZ05} & $2\ell-1$ & $\ell$ & $1$ & $\Ot(n^{1+1/\ell})$ & $\ell\ge1$\\
\cite{PatrascuR14} & $2$ & $2$ & $1$ & $O(n^{5/3})$ & --\\
\cite{PatrascuRT12} & $2a+1\pm2/b$ & $O(ab)$ & $1$ & $\Ot(n^{1+1/(a+1\pm1/b)})$ & $a,b\ge1$\\
\cite{KadriaR25new} & $2\ell-3$ & $\ell$ & $2$ & $\Ot(n^{1+1/\ell})$ & $\ell\ge3$\\
\cite{KadriaR25new} & $2\ell-5$ & $\ell$ & $2$ & $\Ot(n^{1+1/\ell})$ & $\ell\ge5$\\
\cite{KadriaR25new} & $4$ & $O(1)$ & $2$ & $\Ot(n^{5/4})$ & --\\
\cite{DBLP:conf/soda/ChechikHL25} & $\ceil{\frac{4\ell}{3}}-1$\textsuperscript{$*$} & $\ell$ & $2$ & $\Ot(n^{1+1/\ell})$ & $\ell\ge4$, unweighted\\
\cite{KadriaRW26} & $\ceil{\frac{4\ell}{3}-\frac{5}{3}}$\textsuperscript{$\dagger$} & $\ell$ & $3$ & $\Ot(n^{1+1/\ell})$ & $\ell\ge4$, unweighted\\
\cite{DBLP:conf/esa/Agarwal14} & $1+1/t$ & $2$ & $2t$ & $\Ot(n^{3/2})$ & $t\ge1$\\
\bottomrule
\end{tabular}
\par\smallskip
\caption{Known upper bounds for pivot-based ADOs. The Kadria--Roditty~\cite{KadriaR25new} and Agarwal~\cite{DBLP:conf/esa/Agarwal14} representations include additional local distances and are stated for constant-degree graphs (\Cref{sec:models}). \textsuperscript{$*$}With an additive error of $O(\ell)$. \textsuperscript{$\dagger$}The main estimates use two hops, and the exceptional distance-$3$ routine uses three.}
\label{tab:conversions}
\end{table}

\paragraph{Thorup--Zwick~\cite{DBLP:journals/jacm/ThorupZ05}.}
The oracle of Thorup and Zwick~\cite{DBLP:journals/jacm/ThorupZ05} has stretch $2\ell-1$ and uses $O(\ell n^{1+1/\ell})$ space. It samples sets $V=A_0\supseteq A_1\supseteq\dots\supseteq A_{\ell-1}$, and throughout this section, $p_i(u)$ denotes the vertex of $A_i$ closest to $u$, and $B(v)$ denotes the bunch of $v$, i.e., the set of vertices whose distances from $v$ the oracle stores~\cite{DBLP:journals/jacm/ThorupZ05}. It returns an estimate of the form
\[\dG(u,p_i(u))+\dG(p_i(u),v),\]
where $0\le i<\ell$ and $p_i(u)\in B(v)$, or the same with the roles of $u$ and $v$ exchanged. Thus, with $\mathcal{P}(u)=\{p_i(u):0\le i<\ell\}$, the oracle is an $\ell$-pivot $1$-hop ADO: $u\to p_i(u)$ is a pivot jump, and the stored distance from $p_i(u)$ to $v$ is the hop.

\paragraph{P\v{a}tra\c{s}cu--Roditty~\cite{PatrascuR14}.}
The oracle of P\v{a}tra\c{s}cu and Roditty~\cite{PatrascuR14} has stretch $2$ and uses $O(n^{4/3}m^{1/3})$ space. It returns either a stored distance $\dG(u,v)$ or an estimate of the form
\[\dG(u,a(u))+\dG(a(u),v),\]
where $a(u)$ is the landmark closest to $u$. Hence, with $\mathcal{P}(u)=\{u,a(u)\}$, the oracle is a $2$-pivot $1$-hop ADO.

\paragraph{P\v{a}tra\c{s}cu--Roditty--Thorup~\cite{PatrascuRT12}.}
For integers $a,b\ge1$, the oracles of P\v{a}tra\c{s}cu, Roditty, and Thorup~\cite{PatrascuRT12} have stretch $2a+1\pm2/b$ and space $\Ot(m^{1+1/(a+1\pm1/b)})$ when $m=\Theta(n)$. The construction combines $O(b)$ generalized Thorup--Zwick~\cite{DBLP:journals/jacm/ThorupZ05} structures, each with $O(a)$ fixed pivots per vertex. An estimate is either a stored distance, a one-pivot estimate as above, or an expression of the form
\[\dG(u,r(u))+\dG(r(u),s(v))+\dG(s(v),v),\]
where $r(u)$ and $s(v)$ are fixed pivots supplied by these structures and the middle distance is stored (or with $u,v$ exchanged). Taking all these pivots, together with the vertex itself, gives an $O(ab)$-pivot $1$-hop ADO. The case $a=1$ with the minus sign recovers stretch $3-2/b$. Taking $b=2$ gives every even integer stretch.

\paragraph{Kadria--Roditty~\cite{KadriaR25new}.}
The oracles of Kadria and Roditty~\cite{KadriaR25new} have stretches $2\ell-5$ and $2\ell-3$ and use $\Ot(m+n^{1+1/\ell})$ space. We show that, on constant-degree graphs, they are $\ell$-pivot $2$-hop ADOs. Let $C(u)=\{w:\dG(u,w)<\dG(w,p_1(w))\}$ and $C^*(u)=C(u)\cup N(C(u))$, where $N(S)$ is the set of vertices adjacent to $S$. Their choice of $A_1$ ensures $|C(u)|=\Ot(n^{1/\ell})$, so constant degree gives $|C^*(u)|=\Ot(n^{1/\ell})$. Add the distances $\dG(u,w)$ for all $w\in C^*(u)$ to $H$, using $\Ot(n^{1+1/\ell})$ additional space.

Besides one-hop estimates, the query uses estimates of the form
\[\dG(u,w)+\dG(w,p_i(w))+\dG(p_i(w),v)
  \qquad\text{or}\qquad
  \dG(u,w)+\dG(w,p_i(v))+\dG(p_i(v),v),\]
where $w\in C^*(u)$ and the other hop distance is stored in a bunch, or with $u,v$ exchanged. With $\mathcal{P}(u)=\{p_i(u):0\le i<\ell\}$, each estimate uses two hops and the indicated pivot jump. The cluster-intersection branch uses two stored hops $u\to w\to v$ with $w\in C^*(u)\cap C^*(v)$. The pivot-to-pivot branch uses one stored hop between endpoint pivots. Replacing local search estimates by exact stored distances can only decrease the answer while preserving noncontraction. Thus the stretch and asymptotic space are preserved. The same argument shows that their stretch-$4$ oracle with $\Ot(m+n^{5/4})$ space is an $O(1)$-pivot $2$-hop ADO.

For $\ell\ge3$, the stretch-$(2\ell-3)$ oracle also gives a $(2\ell-3,2)$-hopset of size $O(n^{1+1/\ell})$ when $m=O(n)$. Its analysis shows that one of the paths $u\to w\to v$ with $w\in C^*(u)\cap C^*(v)$, $u\to w\to p_1(w)\to v$ with $w\in C^*(u)$ and $p_1(w)\in B(v)$, or $x\to p_i(x)\to y$ with $\{x,y\}=\{u,v\}$ and $p_i(x)\in B(y)$ has stretch at most $2\ell-3$. We additionally store $\dG(u,p_1(w))$ for every $w\in C^*(u)$, which replaces the prefix $u\to w\to p_1(w)$ by a single hop, so each of these paths uses at most two stored hops. Without assuming constant degree, the size is still $\Ot(n^{1+1/\ell})$. Since $w\in C(u)$ implies $u\in B(w)$, we have $\sum_u|C^*(u)|\le\sum_w(1+\deg(w))|B(w)|=\Ot((n+m)n^{1/\ell})$. Only the total size matters for a hopset, not the query time. In the Thorup--Zwick~\cite{DBLP:journals/jacm/ThorupZ05} sampling, every bunch has expected size at most $\ell n^{1/\ell}$, so some choice of the sampled sets gives total size $O(\ell(n+m)n^{1/\ell})$, which is $O(n^{1+1/\ell})$ when $m=O(n)$. By \Cref{thm:dense-transfer}, this size is tight.

The earlier oracles of Agarwal, Godfrey, and Har-Peled~\cite{DBLP:conf/infocom/AgarwalGH11} and of Dalirrooyfard, Jin, Vassilevska Williams, and Wein~\cite{DBLP:conf/focs/DalirrooyfardJW22} are very similar to those of Kadria and Roditty~\cite{KadriaR25new}, which improve upon them, so we do not convert them explicitly. The same argument shows that they are $\Ot(1)$-pivot $2$-hop ADOs as well. Moreover, the stretch-$2$ oracle of Agarwal, Godfrey, and Har-Peled~\cite{DBLP:conf/infocom/AgarwalGH11} gives a $(2,2)$-hopset of size $\Ot(n^{3/2})$. If the bunches of $u$ and $v$ intersect at a vertex $w$, it uses $u\to w\to v$, and otherwise it uses $u\to p_1(u)\to v$ or $v\to p_1(v)\to u$.

\paragraph{\nPSP algorithms~\cite{DBLP:conf/soda/ChechikHL25,KadriaRW26}.}
For unweighted graphs, the \nPSP algorithm of Chechik, Hoch, and Lifshitz~\cite{DBLP:conf/soda/ChechikHL25} returns estimates that are at most $(\ceil{\tfrac{4\ell}{3}}-1)(\dG(u,v)+1)$, that is, it has stretch $\ceil{\tfrac{4\ell}{3}}-1$ with an additive error of $\ceil{\tfrac{4\ell}{3}}-1$,\footnote{Our lower bounds also hold for oracles with an additive error $b$ independent of $n$. The proofs of \Cref{sec:robust-hopsets,sec:pivot-lower-bounds} only use pairs at distance $c=\Theta(\log n)$, and replacing the condition $(k+1)c<g$ by $(k+1)c+b<g$ decreases $c$ by $O(b)$. In \Cref{sec:dense-transfer}, it suffices to choose the constant $L$ such that $(k+1)c+b<\gamma L$. This changes the resulting space bounds by only a constant factor.} and the algorithm of Kadria, Roditty, and Vassilevska Williams~\cite{KadriaRW26}, which simplifies it, has stretch $\ceil{\tfrac{4\ell}{3}-\tfrac{5}{3}}$. Both run in $\Ot(mn^{1/\ell}+n^{1+2/\ell})$ time. Apart from the short-distance cases described below, the stretch analysis of~\cite{KadriaRW26} uses only estimates of the form
\[\dG(u,p_u)+\dG(p_u,p_x)+\dG(p_x,x)+\dG(x,p'_x)+\dG(p'_x,p_v)+\dG(p_v,v),\]
where $x$ is a vertex, and $p_u$, $p_x$, $p'_x$, and $p_v$ are Thorup--Zwick~\cite{DBLP:journals/jacm/ThorupZ05} pivots of $u$, $x$, $x$, and $v$, respectively (possibly the vertex itself).\footnote{For adjacent pairs, check whether $(u,v)\in E$. For the exceptional distance-$3$ case, let $B_0(z)=\{w:\dG(z,w)<\dG(z,p_1(z))\}$, let $C_0(z)=\{w:z\in B_0(w)\}$, and set $S(z)=B_0(z)\cup C_0(z)$. The additional query in~\cite[Appendix~B, Claim~20.4.3]{KadriaRW26} minimizes $\dG(u,a)+\dG(a,b)+\dG(b,v)$ over $a\in S(u)$ and $b\in(B(a)\cup C_0(a))\cap S(v)$, where $B(a)$ is the Thorup--Zwick~\cite{DBLP:journals/jacm/ThorupZ05} bunch. In the exceptional case, a shortest path $u,a,b,v$ satisfies these conditions. Each of these distances is stored in the bunch of one of its endpoints, so this routine uses three hops.} Hence, with $\mathcal{P}(u)=\{p_i(u):0\le i<\ell\}$, the stored distances between $p_u$ and $p_x$ and between $p'_x$ and $p_v$ are the only hops of the displayed estimate, so these estimates form an $\ell$-pivot $2$-hop ADO. Including the short-distance routine gives an $\ell$-pivot $3$-hop ADO for~\cite{KadriaRW26}. It uses the Thorup--Zwick~\cite{DBLP:journals/jacm/ThorupZ05} distances and the original edges, so the space is $\Ot(m+n^{1+1/\ell})$.

\paragraph{Agarwal~\cite{DBLP:conf/esa/Agarwal14}.}
The oracle of Agarwal~\cite[Theorem~1]{DBLP:conf/esa/Agarwal14} has stretch $1+1/t$ and space $\Ot(m+n^2/\alpha)$, for every integer $t\ge1$ and parameter $1\le\alpha\le \sqrt n$. It stores distances from $\Ot(n/\alpha)$ landmarks to all vertices. Let $a(w)$ be the nearest landmark to $w$. The \emph{vicinity} of $w$ consists of the vertices closer to $w$ than $a(w)$, together with their neighbors. On constant-degree graphs, each vicinity has $\Ot(\alpha)$ vertices. Storing all vertex-to-vicinity distances therefore gives total space $\Ot(n^2/\alpha+n\alpha)=\Ot(n^2/\alpha)$, since $\alpha\le\sqrt n$.

The stretch analysis guarantees a suitable estimate represented by a walk of the form
\[x_0\to\cdots\to x_j\to v
  \qquad\text{or}\qquad
  x_0\to\cdots\to x_j\to a(x_j)\to v,\]
where $x_0=u$, $0\le j\le2t-1$ in the first form, and $0\le j\le t-1$ in the second (or with $u,v$ exchanged). Consecutive vertices of the local chain are in one another's vicinity in at least one direction. This includes the final step to $v$ only in the first form. Replacing local candidate distances by exact stored distances preserves noncontraction and can only improve the estimate. With $\mathcal{P}(w)=\{w,a(w)\}$, the landmark step is a pivot jump, so the oracle is a $2$-pivot $2t$-hop ADO. Taking $\alpha=\sqrt n$ gives the $\Ot(n^{3/2})$-space representation in \Cref{tab:conversions}.

\end{document}